\tikzset{
  column sep/.code=\def\pgfmatrixcolumnsep{\pgf@matrix@xscale*(#1)},
  row sep/.code   =\def\pgfmatrixrowsep{\pgf@matrix@yscale*(#1)},
  matrix xscale/.code=
    \pgfmathsetmacro\pgf@matrix@xscale{\pgf@matrix@xscale*(#1)},
  matrix yscale/.code=
    \pgfmathsetmacro\pgf@matrix@yscale{\pgf@matrix@yscale*(#1)},
  matrix scale/.style={/tikz/matrix xscale={#1},/tikz/matrix yscale={#1}}}
\def\pgf@matrix@xscale{1}
\def\pgf@matrix@yscale{1}
\newtheorem{theorem}{Theorem}[section]
\newtheorem{corollary}[theorem]{Corollary}
\newtheorem{lemma}[theorem]{Lemma}
\newtheorem{proposition}[theorem]{Proposition}
\newtheorem{definition}{Definition}
\newtheorem{example}{Example}
\newtheorem*{remark*}{Remark}
\newtheorem*{inftheorem*}{Theorem (informal)}
\newtheorem*{notation*}{Notation}
\newtheorem*{observation*}{Observation}
\newtheorem*{theorem*}{Theorem}
\newtheorem*{proposition*}{Proposition}
\newtheorem*{definition*}{Definition}
\newtheorem*{axiom*}{Axiom}
\newtheorem*{claim*}{Claim}
\newtheorem*{lemma*}{Lemma}
\title{Semicoarse Correlated Equilibria and LP-Based Guarantees for Gradient Dynamics in Normal-Form Games}
\author[1]{Mete \c{S}eref Ahunbay\thanks{E-mail: \texttt{mete-seref.ahunbay@univ-grenoble-alpes.fr}}}
\author[1]{Martin Bichler\thanks{E-mail: \texttt{m.bichler@tum.de}}}
\affil[1]{Technische Universit\"{a}t München $\rightarrow$ Université Grenoble Alpes / CNRS / INRIA / LIG \newline
Bâtiment IMAG, Université Grenoble Alpes, 150 Place du Torrent, 38401, St. Martin d'Hères, France.}
\affil[2]{Technische Universit\"{a}t München, School of Computation, Information and Technology, Department of Computer Science. Boltzmannstraße 3, 85748, Garching bei München, Germany.}
\date{4th June, 2025}
\begin{document}
\maketitle

\begin{abstract}
    Projected gradient ascent is known to satisfy no-external regret as a learning algorithm. However, recent empirical work shows that projected gradient ascent often finds the Nash equilibrium in settings beyond two-player zero-sum interactions or potential games, including those where the set of coarse correlated equilibria is very large. We show that gradient ascent in fact satisfies a stronger class of linear $\Phi$-regret in normal-form games; resulting in a refined solution concept which we dub semicoarse correlated equilibria. Our theoretical analysis of the discretised Bertrand competition mirrors those recently established for mean-based learning in first-price auctions. With at least two firms of lowest marginal cost, Nash equilibria emerge as the only semicoarse equilibria under concavity conditions on firm profits. In first-price auctions, the granularity of the bid space affects semicoarse equilibria, but finer granularity for lower bids also induces convergence to Nash equilibria. Unlike previous work that aims to prove convergence to a Nash equilibrium that often relies on epoch based analysis and probability theoretic machinery, our LP-based duality approach enables a simple and tractable analysis of equilibrium selection under gradient-based learning.
\end{abstract}

\allowdisplaybreaks

\section{Introduction}

The central problem we address in this paper is the development of a refined framework to analyse gradient dynamics and online (projected) gradient ascent \cite{zinkevich2003online} in normal-form games. With the recent prominence of machine learning, it has become a central issue to understand such outcomes, as most models are trained via gradient based methods. Moreover, common learning algorithms such as projected gradient ascent and Hedge \cite{freund1997decision} are based on ascent along a (potentially rescaled) utility gradient, whereas their restricted feedback counterparts (e.g. EXP3 \cite{auer2002nonstochastic}) use an unbiased estimator of the gradient instead. 

These algorithms are of course no-external regret, however, this property is in general too weak to conclude convergence to a Nash equilibrium (NE) even in the setting of the first-price auctions \cite{FLN16,ahunbay2025uniqueness}. And even though the correlated equilibrium (CE) is essentially unique in this setting, Hedge and projected gradient ascent do not satisfy no-internal regret. On the other hand, empirical work shows that the gradient-based algorithms reliably find a(n approximate) NE in a wide variety of auction settings \cite{BFHKS21,soda2023,banchio2023artificial}. Theoretical justification of such convergence behaviour to date is lacking. The literature on Nash convergence of learning algorithms depends overwhelmingly on assumptions of monotonicity of utility gradients or the existence of a potential \cite{wang2023noregret}, with neither assumption satisfied for the complete information first-price auction. Recent positive results \cite{kolumbus2022auctions,deng2022nash} focus on the subclass of mean-based learning algorithms, which include neither projected gradient ascent nor Hedge with adaptive step sizes. In this paper, we thus adopt a bottom-up perspective, seeking an answer to the question:

\emph{``Restricting attention to the setting where all players learn via projected gradient ascent, with full feedback on their utility gradients, can we provide stronger convergence guarantees?''}

Our answer in the affirmative is \emph{semicoarse (correlated) equilibrium}, an equilibrium refinement over coarse correlated equilibrium (CCE). Its definition leverages recent advances in non-convex game theory \cite{cai2024tractable,ahunbay2024local}, which shows that trajectories of projected gradient ascent satisfy further regret guarantees than those implied by the no-external regret property. The set of semicoarse equilibria is described by linear inequalities over the correlated distributions of outcomes, and is in general a strict superset of CE. As a consequence, convergence to equilibrium can be established via straightforward arguments based on strong duality of linear programming, without explicitly accounting for projections of utility gradients. Our analysis of semicoarse equilibrium of Bertrand competitions yields guarantees for projected gradient ascent parallelling the state-of-the-art for mean-based learners \cite{deng2022nash}.

\subsection{Contributions}

Our first contribution is the delineation of linear correlated equilibrium constraints which are necessarily satisfied by projected gradient ascent. Key to our work is \cite{ahunbay2024local}, which proves that the underlying gradient dynamics of a game ensures that trajectories of gradient ascent satisfy no ``regret'' with respect to strategy modifications generated by gradient fields of functions, in first-order. Coarse correlated equilibria of normal-form games are generated by linear vector fields in this sense, whereas correlated equilibria are generated via no regret against first-order strategy modifications generated by non-conservative vector fields. 

In both cases, the requirement that the strategy modifications are linear endomorphisms of the simplex translates to the requirement that the vector fields satisfy a tangency condition. Intuitively, the vector field must point inwards towards the probability simplex. Moreover, the (affine-)linearity of the associated gradient field restricts our attention to quadratic functions. The set of suitable functions then forms a polyhedral cone, whose rays we are completely able to classify. This provides us with our desired subset of strategy modifications.

\begin{theorem*}[Informal, Definition \ref{def:semicoarse}, Theorem \ref{thm:strategy-def}]
    In a normal-form game, if some player $i$ employs projected gradient ascent with suitably decreasing step sizes, they will incur vanishing regret against affine-symmetric strategy modifications $x_i \mapsto (\mathbf{1}+Q_i)x_i +q_i$, where $Q$ is a symmetric matrix. No regret against these strategy modifications is equivalent to no regret against the classes of linear endomorphisms of $\Delta(A_i)$:\begin{enumerate}
        \item For a proper subset $S_i \subseteq A_i$, any action in $S_i$ is mapped to the uniform distribution on its complement, whereas actions not in $S_i$ are left unmodified.
        \item For a cycle of actions $C = (a_1, a_2, ..., a_k)$ with $a_{k+1}=a_1$, actions on the cycle mix to their two neighbours each with probability $1/2$. Meanwhile, $a_i \notin C$ are left unmodified.
    \end{enumerate}
\end{theorem*}

To our knowledge, such explicit incentive guarantees for projected gradient ascent is unprecedented in literature. Moreover, for normal-form games larger than $3 \times 3$, we demonstrate through arguments similar to \cite{viossat2015evolutionary} that our linear equilibrium constraints are the largest possible for the gradient dynamics of a normal-form game (Proposition \ref{prop:linmax}). Whereas there are superexponentially many semicoarse equilibrium constraints, we are able to provide a polynomial size primal extension (Theorem \ref{thm:short-extension}) and a dual LP (\ref{opt:lyapunov-LP}) which is better behaved for solvers in practice.

To assess the strength of our equilibrium refinement, we consider the setting of discretised Bertrand competition with complete information. We assume $N$ firms, where each firm $i$ has marginal cost $c_i/n$, and may post a price $p_i$, both in $\{0,1/n,...,1\}$. The firm which posts the lowest price captures the entire demand $D(p_i)$, divided equally between such firms in case of ties. Here, the complete information first-price auction is equivalent to the case of inelastic demand. In this case, we find that convergence to the expected equilibrium outcome follows from reasonable concavity assumptions on firm profits, and the number of firms with lowest marginal cost.

\begin{theorem*}[Informal, Theorems \ref{thm:bertrand-unique}, \ref{thm:bertrand-unique-three}]
    If there are $\geq 2$ ($\geq 3$) firms with lowest marginal cost $c/n$, if the profit function $(p_i-c/n)D(p_i)$ is strictly (weakly) concave, then any outcome assigned positive probability in a semicoarse equilibrium is a pure strategy Nash equilibrium; at least two firms of lowest marginal cost post prices in $c/n + \{0,1/n\}$, where if the minimum price is $(c+1)/n$, all firms of lowest marginal cost do so.
\end{theorem*}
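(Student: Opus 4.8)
The plan is to combine the two families of semicoarse no-regret inequalities from the informal theorem above --- robustness against the subset-to-uniform modifications and against the cyclic modifications, whose length-two instances are exactly transpositions of a pair of prices --- with the single-peaked shape that (strict or weak) concavity forces on the profit function $\pi(p):=(p-c/n)D(p)$. Concretely I would establish a chain of support-restriction claims, each eliminating a family of ``bad'' outcomes by producing a modification against which $\sigma$ would have strictly positive regret; equivalently, and in keeping with the LP-duality viewpoint of the paper, each claim is a small feasible dual solution to the linear program maximizing the $\sigma$-mass of the bad event.

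\textbf{Step 1 (the market price never drops below cost).} For a firm $i$ and the subset $S_i=\{p\in A_i: p<c_i/n\}$, the modification collapsing $S_i$ onto the uniform law on $A_i\setminus S_i$ never decreases $u_i$ pointwise on $\{a_i\in S_i\}$: posting below cost gives $u_i\le 0$, whereas averaging $u_i(\cdot,a_{-i})$ over prices $\ge c_i/n$ gives $\ge 0$. The semicoarse inequality forces both sides to $0$, so firm $i$ strictly loses whenever it posts below its own cost --- hence some other firm strictly undercuts it. If the market price were $<c/n$ the firm(s) attaining it would be winning or tying below their (at least $c/n$) cost, a contradiction; iterating ``someone strictly undercuts me'' would also produce an infinite strictly decreasing chain of grid prices, which is impossible. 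So $\min_j p_j\ge c/n$ almost surely.

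\textbf{Step 2 (the lowest-cost firms are pinned to $\{c/n,(c+1)/n\}$).} This is the heart of the argument and the step I expect to be hardest. Let firms $1,\dots,M$ have the minimal cost $c/n$, with $M\ge 2$ (resp.\ $\ge 3$). If the claimed price structure fails, Step 1 forces the existence of a lowest-cost firm $i$ posting a price $p_i\ge (c+2)/n$ on a positive-probability event. Since $\pi(c/n)=0$ and $\pi$ is concave, $q\mapsto\pi(q)/(q-c/n)$ is nonincreasing, whence $\pi(p_i-1/n)\ge\tfrac12\pi(p_i)$ for $p_i\ge(c+2)/n$. Thus on any realization where $i$ is in a $k$-way tie at $p_i$ ($k\ge 2$), moving mass to $p_i-1/n$ replaces the payoff $\pi(p_i)/k$ by $\pi(p_i-1/n)\ge\tfrac12\pi(p_i)\ge\pi(p_i)/k$, a gain. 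On realizations where $i$ wins alone at $p_i$ below the peak of $\pi$, undercutting by $i$ itself could lose value; these are absorbed instead by applying the undercutting modification to another lowest-cost firm that, by the hypothesized failure of the price structure, is posting strictly above $i$ and thus earning $0$, so that its move to $p_i-1/n$ is an unambiguous gain $\pi(p_i-1/n)>0$ --- or by a subset-to-uniform modification collapsing the tail $\{q\ge p_i\}$ onto $\{c/n,\dots,p_i-1/n\}$. Summing these (weak) gains against the correlated law of the opponents' prices, with a second (resp.\ third) lowest-cost firm always available to be tied with or undercut, gives a strictly positive total, contradicting the matching semicoarse inequality. Under strict concavity the bound $\pi(p_i-1/n)>\tfrac12\pi(p_i)$ is strict, so even a two-way tie already yields strict surplus and $M=2$ suffices; under merely weak concavity the only strict surplus is the discrete tie-splitting jump $\pi(p_i)/k\mapsto\pi(p_i-1/n)$, which has positive conditional probability only when a third lowest-cost firm is present to create $k\ge 2$-way ties --- this is precisely where $M\ge 3$ is needed. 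The same argument applied to any lowest-cost firm posting above a market price of $(c+1)/n$ yields the boundary statement.

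\textbf{Step 3 and the main obstacle.} Given the price structure from Step 2, every supported profile is a Nash equilibrium: the two-or-more lowest-cost firms posting $c/n$ or $(c+1)/n$ have no profitable deviation (undercutting below $c/n$ makes a loss; raising the price concedes the whole market), while every other firm posts at or above its own cost by Step 1, hence at or above the market price, earns $0$, and cannot profitably undercut a price at most $(c+1)/n$ with a cost exceeding $c/n$. The central obstacle is Step 2: the semicoarse inequalities bound only the payoff change \emph{averaged} over the correlated law of opponents' prices, so the modifications --- and, for the auxiliary firms, their coordination --- must be chosen so that the weak gains on the tying and losing events provably dominate any loss on the winning-alone events uniformly over all admissible correlations; carrying this through with merely weak concavity, where the sole strict surplus is the discrete tie-breaking jump, is what both forces the passage from $M\ge 2$ to $M\ge 3$ and requires the most delicate bookkeeping.
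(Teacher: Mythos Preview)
Your high-level plan --- build a dual certificate that is pointwise non-negative and strictly positive on the non-Nash outcomes --- is exactly the paper's. Step~1 (market price $\ge c/n$ via the subset-to-uniform map collapsing sub-cost prices) and Step~3 (Nash verification) are fine. The gap is in Step~2, where your main tool is the length-two cycle, i.e.\ the transposition $q\leftrightarrow q-1/n$. A transposition also sends $q-1/n$ back to $q$; on realizations where the minimum price is already $q-1/n$ --- which lie \emph{outside} the bad event $\{\min_j p_j=q\}$ --- this reverse move is a strict loss for any lowest-cost firm at $q-1/n$, so the modification is not pointwise non-negative and cannot certify $\sigma(\min_j p_j=q)=0$. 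Your ``delegate to another firm'' patch is realization-dependent (which firm you modify depends on who wins alone), but each semicoarse constraint fixes a single player and a single modification; and your alternative ``collapse the tail $\{q\ge p_i\}$ onto $\{c/n,\dots,p_i-1/n\}$'' both uses a realization-dependent $p_i$ and, if taken literally with $S_i=\{q\ge p_i\}$, actually maps onto $\{0,\dots,p_i-1/n\}$, including loss-making prices.

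The paper sidesteps all of this by choosing the modifications differently. For each $k\ge 1$ and for \emph{every} lowest-cost firm simultaneously, it uses the subset-to-uniform map onto the interval $B^k=\{(c+1)/n,\dots,(c+k)/n\}$ (so $S_i=A_i\setminus B^k$). When the realized minimum price is $(c+\ell)/n$ with $\ell>k$, every element of $B^k$ lies strictly below it, so the deviating firm wins outright and its post-deviation payoff $\tfrac{1}{k}\sum_{k'=1}^k\pi((c+k')/n)$ does not depend on $p_{-i}$; meanwhile prices already in $B^k$ are left fixed, so there is no ``reverse swap'' loss. Summing the same $\phi_k$ over the $m$ lowest-cost firms, the total current payoff is at most $\pi((c+\ell)/n)$ while the total post-deviation payoff is $m$ times the $B^k$-average of $\pi$; strict concavity (or weak concavity with $m\ge 3$) plus Jensen makes each such comparison strict, and the multipliers $\epsilon_k$ are then chosen by an explicit recurrence so that a \emph{single} linear combination $\sum_k\epsilon_k\phi_k$ certifies all $\ell>1$ at once. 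Your inequality $\pi(p-1/n)\ge\tfrac12\pi(p)$ is correct and is essentially the $k=1$ instance of this Jensen step, but it is the ``deviate to an interval strictly below the market'' idea --- not transpositions --- that yields a pointwise non-negative modification and closes the argument.
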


We compare this result with that of \cite{deng2022nash} for mean-based learning in first-price auctions. There, with $\geq 3$ buyers of maximum valuation, last-iterate convergence to Nash equilibrium is guaranteed; thus we recover the same last-iterate convergence guarantee. However, in the case of two such buyers, \cite{deng2022nash} prove that time-average, but not last-iterate convergence is guaranteed instead. So we study this threshold setting; here, we find that the notion of semicoarse equilibrium is sensitive to \emph{how} the density of the discretised bids is distributed. This is distinct from the notions of (C)CE, which are necessarily robust with respect to small changes in the action sets, and invariant under duplicating strategies. The deeper reason for this sensitivity is that such transformations distort the gradient dynamics of the game \cite{heemels2020oblique}, and semicoarse equilibrium is a correlated notion of equilibrium intrinsically tied to the gradient dynamics of the game. Nevertheless, we are able to prove (analogously for the first-price auction via an equivalence):

\begin{proposition*}[Informal, Proposition \ref{prop:weighted-bertrand}]
    If there are two firms with lowest marginal cost $c/n$, if the profit function $(p_i-c/n)D(p_i)$ is weakly concave, there are arbitrarily small modifications of the game's gradient dynamics via invertible linear transformations of players' individual sets of mixed strategies, such that projected gradient dynamics necessarily converges to a Nash equilibrium.
\end{proposition*}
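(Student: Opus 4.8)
The plan is to realise the phrase ``invertible linear transformation of player $i$'s set of mixed strategies'' as a change of coordinates $x_i = T_i y_i$ on $\Delta(A_i)$, with $T_i$ invertible and $\lVert T_i - I\rVert$ as small as we wish. This leaves the game itself — and hence its (pure) Nash equilibria — unchanged, but replaces the Euclidean gradient flow on $\prod_i \Delta(A_i)$ by its $T$-preconditioned (oblique, in the sense of \cite{heemels2020oblique}) counterpart; equivalently, it is ordinary projected gradient ascent run in the $y$-coordinates on the reparametrised game $\Gamma_T$ whose feasible set for player $i$ is the polytope $T_i^{-1}(\Delta(A_i))$ — still with one vertex per action $a_i$ — and whose payoff is $u_i(T_1 y_1,\dots,T_N y_N)$. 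By the argument behind Theorem~\ref{thm:strategy-def} applied to $\Gamma_T$, the aggregate history of projected gradient ascent on $\Gamma_T$ is a semicoarse equilibrium of $\Gamma_T$. The first step is thus to record how the semicoarse constraints transform: pulling the cone of admissible quadratic gradient fields and its tangency condition back through the $T_i$, the constraints on a correlated distribution $\sigma$ over original outcomes become a $T$-deformed family of linear inequalities whose coefficients are affine in the entries of the $T_i$ and specialise at $T_i = I$ to Definition~\ref{def:semicoarse}.

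Next I would reopen the proof of Theorem~\ref{thm:bertrand-unique}. There each non-Nash outcome is ruled out by exhibiting one admissible strategy modification — of the ``subset'' or ``cycle'' form of Theorem~\ref{thm:strategy-def} — whose regret inequality is \emph{strict}, the slack being controlled by a second difference of the profit function $\pi(p) = (p - c/n)D(p)$, which is strictly negative under strict concavity. Under mere weak concavity this slack collapses to zero precisely on the price ranges where $\pi$ is affine, and this is the only channel through which a non-Nash outcome can survive as a semicoarse equilibrium when there are exactly two lowest-cost firms — with a third such firm, Theorem~\ref{thm:bertrand-unique-three} shows the extra player restores strictness. The key step is then to verify that in $\Gamma_T$ this slack gains an extra term linear in $T_i - I$ which, for a suitable $T_i$ that slightly contracts the coordinate directions at the low-price end of the grid — the ``finer granularity for lower bids'' phenomenon flagged in the introduction — is strictly negative, and moreover can be made so \emph{simultaneously} for the finitely many candidate non-Nash outcomes by one fixed such $T$. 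This reinstates a strict regret inequality at every non-Nash outcome.

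Assembling: fix such a $T$ in the prescribed neighbourhood of the identity. Carrying the (deformed) tangency condition and extreme-ray classification of Theorem~\ref{thm:strategy-def} over to the polytopes $T_i^{-1}(\Delta(A_i))$, the step above forces every semicoarse equilibrium of $\Gamma_T$ to be supported on pure Nash equilibria, with the same structural description as in Theorem~\ref{thm:bertrand-unique} (two lowest-cost firms pricing in $c/n + \{0,1/n\}$, all of them when the minimum price is $(c+1)/n$). Since projected gradient ascent on $\Gamma_T$ then has its play approach this finite, discrete set of pure equilibria by the same route as Theorems~\ref{thm:bertrand-unique} and \ref{thm:bertrand-unique-three}, the $T$-modified gradient dynamics of the original Bertrand game converges to a Nash equilibrium, and the first-price auction statement follows through the inelastic-demand equivalence. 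The main obstacle is the uniform-strictness step of the second paragraph — producing one arbitrarily small $T$ that strictizes the critical constraint at all non-Nash outcomes at once without opening a new violation elsewhere — together with the technical check that the admissible-quadratic cone, its classification of rays, and the tangency condition underlying Theorem~\ref{thm:strategy-def} genuinely survive the passage from the simplex to the perturbed polytope $T_i^{-1}(\Delta(A_i))$, which is exactly where the non-invariance of semicoarse equilibrium under reparametrisation is doing the work.
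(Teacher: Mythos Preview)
Your high-level plan is sound and matches the paper's route: realise the linear transformation as a reparametrisation of projected gradient ascent, track how the semicoarse cone deforms, and then rerun the dual-multiplier argument of Theorem~\ref{thm:bertrand-unique} with the deformed deviations to recover strict slack at every non-Nash outcome. Where the paper differs is in concreteness and in the mechanism that produces strictness. Rather than a generic $T_i$ near the identity with a first-order ``slack $\propto T_i - I$'' analysis, the paper commits to \emph{diagonal} $P_i = \mathrm{diag}(\sqrt{w_i})$ with $w_i$ strictly increasing in the price (equivalently, integer action-weights). For this diagonal family the deformed extreme rays are classified explicitly (Theorem~\ref{thm:weight-constraint}): the subset-type deviations now map each action in $S_i$ to the $w$-\emph{weighted} distribution on $A_i\setminus S_i$ rather than the uniform one. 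The strict inequality then falls out not from a perturbation expansion but from a monotonicity argument: below the monopoly price the profit $\pi(p)=(p-c/n)D(p)$ is increasing, so an increasing weighting on $B^k=\{(c+1)/n,\dots,(c+k)/n\}$ yields a weighted average of $\pi$ strictly larger than the uniform average, which already weakly beats the threshold by weak concavity. This handles all non-Nash outcomes simultaneously via the same inductive choice of $\epsilon_\ell$ as in Theorem~\ref{thm:bertrand-unique}, so your ``uniform-strictness'' worry dissolves automatically.

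You correctly flag the main technical obstacle --- whether the ray classification of Theorem~\ref{thm:strategy-def} survives on $T_i^{-1}\Delta(A_i)$ --- and this is exactly what the paper sidesteps by restricting to the diagonal case, where the classification is redone from scratch (Theorem~\ref{thm:weight-constraint}) rather than transported. Your more abstract perturbative framing could be made to work, but you would need to either redo the ray analysis for general $T_i$ (the paper does this only in Theorem~\ref{thm:P-scaled}, without an explicit finite list of generators) or argue that the finitely many dual certificates from Theorem~\ref{thm:bertrand-unique} remain feasible and become strict under a specific small deformation; the diagonal-weight choice buys you both at once.
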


Our proof uses the same arguments for the derivation of semicoarse equilibria, and when the linear transformations correspond to a weighing of players' strategies, we are able to give an explicit classification of the finitely many strategy deviations against which players incur no regret. Our result depends on a refinement of the no-regret result of \cite{ahunbay2024local} that applies for arbitrary polyhedral action sets in a smooth game, which might be of independent interest.

\subsection{Related Work}\label{sec:related}
Our paper touches on several themes in learning in games, all constituting lines of active research -- last-iterate and time-average convergence, stronger incentive guarantees, and notions of $\Phi$-equilibrium --, as well as themes in auction theory relating to the theoretical study of learnability of equilibrium. In this section, we provide a brief overview of their relevance to our paper, within the context of normal-form games.

Game theory provides the formal framework of analysis of the strategic interaction of self interested agents. This analysis often requires necessitates the notion of an ``expected outcome'' of a game, which is often taken to be a Nash equilibrium \cite{Nash50}. However, computing an exact or approximate Nash equilibrium is $PPAD$-hard even in two player bimatrix games \cite{CD06,DGP09}, and finding optimal Nash equilibria is $NP$-hard in general \cite{CS08}. From the perspective of algorithmic game theory, which is concerned with computational complexity and performance guarantees through the lens of approximation algorithms, this renders the assumption that agents reach a Nash equilibrium implausible \cite{nisan2007algorithmic}.

This brings up the question of what the correct notion of an outcome is, for which some answers come from learning theory. The overall idea is that we may work under the assumption that agents, through their interaction, implement learning algorithms to optimise their payoffs. Assumptions on these learning algorithms then provides us with incentive guarantees in hindsight. However, even in the setting of normal-form games, our current understanding of learning behaviour is limited. Learning algorithms are known to exhibit cyclic behaviour \cite{MPP18}, which can be formally chaotic \cite{CP19}. Recent results shed light on a deeper reason why; the multiplicative weights update over normal-form games can approximate arbitrary dynamical systems \cite{AFP21}, and it is Turing complete to determine the whether replicator dynamics in normal-form games reaches a given open set \cite{andrade2023turing}. 

The ubiquitousness of learning cycles has motivated both the study of learning cycles themselves as the expected outcome of a game \cite{PP19}, and the study of sufficient conditions which allow for algorithms with provable convergence to equilibrium. On the latter, positive results for convergence in the literature overwhelmingly depend on assumptions of monotonicity of utility gradients or the existence of a potential function \cite{MZ19}; we refer the reader to \cite{wang2023noregret} for an extended discussion. One divergent line of work has focused on ``near''-potential games \cite{candogan2013dynamics}; this assures convergence to an approximate equilibrium.

These assumptions are, of course, violated in many game theoretic settings of practical relevance. Auction theory is particularly rich in such instances, with even simple settings such as the first-price auction violate monotonicity. For the complete information case, if the utility gradients for the mixed extension were monotone \cite{MZ19}, then the unique CCE of the auction would have been a pure strategy Nash equilibrium \cite{ahunbay2025uniqueness}, which is not the case \cite{FLN16}. In the Bayesian setting, the non-monotonicity of the utility gradients are stated explicitly in \cite{bichler2023convergence} for the model where buyers modify their pure bidding strategies, and is observed empirically \cite{ahunbay2025uniqueness} for the mixed extension of the agent-normal form game \cite{forges1993five,HST15}\footnote{Except when the prior distribution is concave -- a very restrictive assumption in practice.}. This is in apparent contradiction with empirical work \cite{BFHKS21,soda2023,banchio2023artificial}, which demonstrate that algorithms based on online gradient ascent or other gradient-based methods converge to Nash equilibrium in a broad class of auction games. 

This suggests a gap in our understanding with regards to convergence behaviour, and underlines the necessity of moving beyond monotonicity. Thus, establishing convergence behaviour in auctions has become an important line of research. Several proofs of conditional convergence to Nash equilibrium exist for mean-based learners \cite{braverman2018selling} in first-price auctions with complete  \cite{kolumbus2022auctions,deng2022nash} or incomplete \cite{FGLMS21} information. The family of mean-based algorithms is rich, rendering the result of \cite{deng2022nash} very general\footnote{\cite{deng2022nash} remark, however, that Hedge with decreasing step sizes $\propto 1/\sqrt{t}$ fails to be mean-based.}. However, the result of \cite{FGLMS21} depends on a lengthy pretraining period, and is limited to the case of the uniform prior. And besides, online projected gradient ascent is demonstrably not a mean-based learning algorithm (cf. Appendix \ref{sec:not-mean-based}), which precludes its application in our setting.

The guarantee commonly known for online gradient ascent is no-external regret (follows from \cite{zinkevich2003online}), which implies that the empirical sequence of play converges to a coarse correlated equilibrium (also known as Hannan consistency, \cite{hannan1957approximation}). In the case of the first-price auction, however, the coarse correlated equilibria of the game are far from its Nash equilibrium, despite the fact that all of its correlated equilibria \cite{aumann1987correlated} are necessarily convex combinations of Nash equilibria \cite{FLN16}. Thus, linear programming duality based methods, such as those of \cite{lopomo2011lp,ahunbay2025uniqueness} fall short of proving convergence for online gradient ascent in this setting, absent stronger incentive guarantees.

These ``missing'' stronger incentives may be understood through the lens of $\Phi$-equilibria, measuring regret against a given set of strategy modifications \cite{greenwald2003general,stoltz2007learning}. For a fixed, finite set $\Phi$ of strategy modifications, these algorithms combine a no-regret learning algorithm, a potential function which maps regret over $\Phi$ to a strategy deviation, and access to fixed-point computation over the convex hull of strategy deviations \cite{gordon2008no}; their validity follows fundamentally from Blackwell's approachability theory \cite{blackwell1956analog}. We remark that both Hedge and ``lazy'' gradient ascent (dual averaging with a quadratic regulariser) are subsumed within the framework of \cite{gordon2008no}. 

But more directly relevant for us, recent work has shown that learning algorithms enjoy stronger guarantees in terms of regret guarantees when their associated dynamics are smooth; explicitly, external regret minimising algorithms can end up being no-regret for a larger set of strategy deviations as a consequence of additional structure in their time evolution. This is hinted in the result of \cite{anagnostides2022optimistic}, where it is shown that optimistic mirror descent with constant step sizes $O(\epsilon^2)$ and a smooth regulariser reaches either an $\epsilon$-Nash equilibrium or a $\textnormal{poly}(\epsilon)$-strong CCE. In turn, \cite{piliouras2022evolutionary} shows that the multiplicative weights update, in $2\times 2$ normal form games, exhibits the strongest form of $\Phi$-regret for the mixed-extension of the game. Results pertaining directly to online gradient ascent then come from recent breakthroughs in non-concave game theory. \cite{cai2024tractable} show that online gradient ascent incurs vanishing $\Phi$-regret for a class of local strategy modifications, even in non-concave games. \cite{ahunbay2024local} then identifies these strategy modifications as those generated by gradient fields of functions; and shows that the continuous trajectories generated by online gradient ascent incurs vanishing regret against such strategy modifications in first-order. 

\section{Preliminaries}

\newcommand{\bilin}[2]{\left\langle #1 , #2 \right\rangle}
\newcommand{\TC}[2]{\mathcal{TC}_{#1}[#2]}
\newcommand{\NC}[2]{\mathcal{NC}_{#1}[#2]}

In what follows, we will denote by $\mathbb{N}$ the set of (non-zero) natural numbers, $\mathbb{R}$ the set of real numbers, and $\mathbb{R}_+$ the set of non-negative reals. For each $N \in \mathbb{N}$, we will identify $N$ also with the set $\{n \in \mathbb{N} \ | \ n \leq N \}$. Following standard game theoretic notation, we shall often denote an indexed tuple by dropping its subscript, e.g. writing $s \equiv (s_i)_{i \in N}$. Given such a tuple $s$, we shall also denote by $s_{-i}$ the tuple $(s_j)_{j \in N, j \neq i}$, and write $(s'_i,s_{-i})$ for the tuple which differs from $s$ only on its $i$'th term with $s_i$ replaced by $s'_i$. This notation also extends to differentiation; e.g. given a set $X = \times_{i \in N} X_i$ and a function $f : X \rightarrow \mathbb{R}$, $\nabla_i f(x)$ denotes the coordinates of the gradient $\nabla f(x)$ over $X_i$. Given a finite set $S$, we let $\Delta(S)$ be the set of probability distributions over $S$. Then, $\mathbb{E}_{s \sim \sigma}[f(s)]$ denotes the expectation of $f : S \rightarrow \mathbb{R}$ when its input is drawn from $\sigma \in \Delta(S)$. We shall also write $\mathbb{I}$ for the proposition valued indicator function which equals $1$ if it is true, and $0$ otherwise.

Regarding matrices and vectors, we shall alternate notation depending on context. When vector-matrix manipulation is emphasised, we will use standard notation, with $(\cdot)^T$ denoting the transpose of a vector or matrix. In this case, for a pair of vectors $v$ and $w$, $v^Tw$ and $vw^T$ denote respectively their inner and outer products. When the inner product structure is emphasised, we shall denote $\bilin{v}{w} \equiv v^Tw$, and when the sums need to be evaluated we shall expand them in summation notation. We will use $\Sigma$ to denote an all-ones vector of appropriate size\footnote{Conveniently, $\sum_i v_i = \Sigma^T v$ for any vector $v$, and a matrix with all entries equal to $1$ may be written $\Sigma\Sigma^T$.}, and denote as $e_i$ the vector which has $1$ on its $i$'th coordinate and $0$ otherwise. Finally, $\mathbf{1}$ will denote the identity matrix, and for a subset of coordinates $C$, $\mathbf{1}_C$ will denote the matrix with its block $C$ equal to the identity matrix, and all other entries $0$.

For a convex set $X \subseteq \mathbb{R}^D$, we will write $\Pi_X(x)$ as the projection of $x$ onto $X$; i.e. $\Pi_X(x) = \arg \min_{x' \in x} \| x-x' \|^2$. Two important classes of projections are those onto the tangent and normal cones; the tangent cone $\TC{X}{x}$ to $X$ at $x$ is the closure of the pointed cone $\{ \eta (y -x)  \ | \ \eta \geq 0, y \in X\}$, and the normal cone is its dual, $\NC{X}{x} = \{ y  \ | \ \forall z \in \TC{X}{x}, y^Tz \leq 0\}$. For the reader's convenience, our notation is tabulated in Appendix \ref{sec:notation}.

\subsection{Normal Forms Games \& Classical Notions of Equilibria}

Our central object of study in this paper will be a finite normal-form game of complete information, which we shall simply refer to as a ``normal-form game''. Formally:

\begin{definition}[Normal-form games and mixed-strategies]\label{def:NFG}
    A \textbf{normal-form game} is specified by a tuple $\Gamma = \left(N,(A_i)_{i \in N},(u_i)_{i \in N}\right)$. Here, $N$ is the number and set of players (agents). Each agent $i$ has an action set $A_i$, and we denote by $A \equiv \times_{i \in N} A_i$ the set of pure action profiles, or outcomes. Payoffs to agents depend on the outcome, specified by a utility function $u_i : A \rightarrow \mathbb{R}$ for each agent $i$. We will allow agents to randomise their actions; a \textbf{mixed strategy} for agent $i$ is a probability distribution $x_i \in \Delta(A_i)$, and a vector $x \equiv (x_i)_{i \in N}$ of mixed strategies for each agent is called a \textbf{(mixed) strategy profile}.
\end{definition}

The usual notion of an expected, ``stable'' outcome of a game is that of a Nash equilibrium (NE). Intuitively, in a NE, each agent independently implements their mixed-strategies, such that no agent has any strict \emph{ex-ante} incentive in expectation to deviate to any other action.

\begin{definition}
    A \textbf{Nash Equilibrium} is a strategy profile $(x^*_i)_{i \in N}$ such that for every agent $i$ and every action $a'_i \in A_i$, $\mathbb{E}_{\forall j \in N, a_j \sim x^*_j}[u_i(a'_i, a_{-i}) - u_i(a)] \leq 0$.
\end{definition}

Every finite normal-form game has a NE \cite{Nash50}, but computing one even approximately is $PPAD$-hard \cite{CD06,DGP09}. On the other hand, allowing for correlations between the players' actions in the distribution of outcomes allows for two generalisations of NE. The notion of a \emph{correlated equilibrium} is due to \cite{Aumann74}, whereas the equilibrium conditions for a \emph{coarse correlated equilibrium} were already considered in the work of \cite{hannan1957approximation}.

\begin{definition}
    Let $\sigma \in \Delta(A)$, then $\sigma$ is called a \textbf{correlated equilibrium} (CE) if for every $i \in N$ and $a_i, a'_i \in A_i$, $\mathbb{E}_{a_{-i} \sim \sigma(a_i,\cdot)}[u_i(a'_i,a_{-i}) - u_i(a)] \leq 0$. In turn, $\sigma$ is called a \textbf{coarse correlated equilibrium} (CCE) if for every $i \in N$ and $a'_i \in A_i$, $\mathbb{E}_{a\sim \sigma}[u_i(a'_i,a_{-i}) - u_i(a)] \leq 0$. 
\end{definition}

The set of (coarse) correlated equilibria of a game can be represented as the feasible region of an LP of size polynomial in the size of the normal-form game. Given an equilibrium concept, the resulting LP has $|A|$ non-negative variables, corresponding to $\sigma(a)$ for each $a \in A$, and a constraint $\sum_{a \in A} \sigma(a) = 1$ which enforces that $\sigma \in \Delta(A)$. For a CCE, we have $\sum_{i \in N} |A_i|$ equilibrium constraints, whereas a CE is defined via $\sum_{i \in N} |A_i| (|A_i|-1)$ equilibrium constraints instead\footnote{This LP formulation implicitly suggests no-internal regret \cite{foster1997calibrated} rather than no-swap regret \cite{blum2007external}. The two notions are equivalent for exact, but not for approximate equilibria; $\epsilon$-swap regret is stronger than $\epsilon$-internal regret.}.

\subsection{Smooth / Concave Games, Mixed Extensions, and $\Phi$-Equilibria}

The notions of (coarse) correlated equilibria can also be interpreted as instances of $\Phi$-equilibria \cite{greenwald2003general,stoltz2007learning}, a notion of equilibrium that is based on a prescribed set $\Phi$ of \emph{action transformations}. Depending on the specification of $\Phi$ and players' utility functions, this notion of equilibrium may exist also for games with possibly infinite action sets; moreover the empirical distribution of play resulting from some learning algorithms can be shown to converge to such equilibria  \cite{stoltz2007learning,gordon2008no}. Two important cases to consider are when the game is smooth or concave \cite{cai2024tractable}. 

\begin{definition}
    A \textbf{smooth game} is specified by a tuple $\Gamma = \left(N,(X_i)_{i \in N}, (u_i)_{i \in N}\right)$. Generalising from Definition \ref{def:NFG} of normal-form games, $X_i \subseteq \mathbb{R}^{D_i}$ as action sets are taken to be compact\footnote{We will consider relaxing this to include unbounded polyhedra in Theorem \ref{thm:P-scaled-sets}.} and convex, and $u_i : X \rightarrow \mathbb{R}$ are differentiable functions with Lipschitz gradients, i.e. $\exists \ G_i, L_i > 0$ such that for any $x,x' \in X$, $\| \nabla_i u_i(x)\| \leq G_i$ and $\|\nabla_i u_i(x) - \nabla_i u_i(x')\| \leq L_i \| x - x'\|$. The game $\Gamma$ is said to be \textbf{concave} if all $u_i$ are concave functions in the argument $x_i$. 
\end{definition}

\begin{definition}
    For a smooth game $\Gamma$, consider a set $\Phi$ of \textbf{action transformations} (or \textbf{strategy modifications}) $\phi_i : X_i \rightarrow X_i$ for each player $i$. Then a distribution $\sigma$ on $X$ is called a $(\epsilon,\Phi)$-equilibrium if for any $i \in N$ and $\phi_i \in \Phi$, $\mathbb{E}_{x \sim \sigma}[u_i(\phi_i(x_i),x_{-i})-u_i(x)] \leq \epsilon$. 
\end{definition}

Throughout this paper, we focus on normal-form games; however, a normal-form game can be considered a game with infinite action sets, via its \emph{mixed extension}.  

\begin{definition}
    The \textbf{mixed extension} of a normal-form game $\Gamma$ is a smooth game $\Gamma'$ such that $X_i = \Delta(A_i)$, and the utilities are calculated via expectation; for any player $i$ and any mixed-strategy profile $x$, $u_i(x) = \sum_{a \in A} u_i(a) \cdot \prod_{j \in N} x_j(a_j)$.
\end{definition}

The mixed extension of any normal-form game is smooth and concave, which implies that for certain choices of $\Phi$, algorithms with access to fixed-points for convex combinations of the action transformations in $\Phi$ can compute an $(\epsilon$,$\Phi)$-equilibrium in $O(1/\epsilon^2)$ iterations \cite{gordon2008no}. Of particular interest to us will be the following two classes of $\Phi$-equilibria.

\begin{example}
    For a smooth game, we denote by $\Phi_{EXT}$ the set of constant transformations $X_i \rightarrow X_i$, and $\Phi_{LIN}$ the set of linear transformations $X_i \rightarrow X_i$ for each $i \in N$. Then for a normal-form game, the set of its $\Phi_{EXT}$-equilibria corresponds to coarse correlated equilibria, whereas the set of its $\Phi_{LIN}$-equilibria corresponds to its correlated equilibria.  
\end{example}

\subsection{Projected Gradient Ascent \& Local Correlated Equilibria}

Our goal in this paper is to prove an $\Phi$-equilibrium refinement over coarse correlated equilibria, which captures sharper guarantees when players learn their strategies via projected gradient ascent\footnote{Also known as \emph{generalised infinitesimal gradient ascent} (GIGA) \cite{zinkevich2003online}, or \emph{online gradient ascent}.} on their utilities. That is, over time periods $t = 1, 2, \ldots, T$, suppose that each player $i$ updates their mixed strategies in the full feedback model in the steepest direction of ascent, fixing 
\begin{equation}\label{eq:GIGA}x_i^{t+1} = \Pi_{X_i}[x_i^t + \eta_{it} \cdot \nabla_i u_i(x^t)].\end{equation}
Here, $\Pi_X[x]$ denotes the projection of the vector $x$ onto the set $X$, and $\eta_{it}$ is the step size for player $i$ at time $t$. It follows from \cite{zinkevich2003online} that if each player $i$ chooses step sizes $\eta_{it} \propto 1/\sqrt{t}$, then the time-averaged distribution of play $\sigma(a) = (1/T) \cdot \sum_{t = 1}^T \prod_{j \in J} x^t_j(a_j)$ forms a $O(1/\sqrt{T})$-approximate CCE. Meanwhile, projected gradient ascent is known not to converge to a correlated equilibrium in general; a deeper reason for this is because continuous-time gradient dynamics of games can avoid all strategies used in any correlated equilibrium \cite{viossat2015evolutionary}.

However, recent work on non-concave games \cite{cai2024tractable,ahunbay2024local} shows that if all players employ projected gradient ascent, the sequence of play satisfies a greater set of equilibrium constraints than those implied merely by no-external regret. These constraints on the sequence of play are characterised by small changes in the players' actions, or their differential limit.

\newcommand{\poly}{\textnormal{poly}}

\begin{definition}
    For a smooth game $\Gamma$, an $(\epsilon,\Delta)$-\textbf{local correlated equilibrium} with respect to a set $F$ of Lipschitz continuous vector fields $f : X \rightarrow \mathbb{R}^{D}$ is a probability distribution $\sigma \in \Delta(X)$, such that for any $f \in F$ and any $\delta \in (0,\Delta]$,
    $$ \sum_{i\in N} \mathbb{E}_{x\sim \sigma}[u_i(\Pi_{X_i}[x_i + \delta f_i(x)],x_{-i}) - u_i(x)] \leq \epsilon\delta \cdot \poly(G,L,G_f,L_f),$$
    where $G_f$ is a bound on the magnitude of $f$ on $X$ and $L_f$ is its Lipschitz modulus. Considering the limit $\delta \downarrow 0$, $\sigma$ is called an $\epsilon$\textbf{-local correlated equilibrium} if for any $f \in F$,
    $$ \sum_{i\in N} \mathbb{E}_{x\sim \sigma}[\langle \Pi_{\TC{X_i}{x_i}}[f_i(x)] , \nabla_i u_i(x) \rangle] \leq \epsilon \cdot \poly(G,L,G_f,L_f).$$
    Finally, if $F$ is the set of gradient fields of a set $H$ of functions $h : X \rightarrow \mathbb{R}$, i.e. $f = \nabla h$, the equilibrium is said to be \textbf{coarse} with respect to $H$.
\end{definition}

Our analysis will follow from the guarantees within  \cite{ahunbay2024local}, which apply for coarse equilibria generated by functions of tangential gradient fields. For what follows, we will say that $h : X \rightarrow \mathbb{R}$ is \textbf{tangential} if $\nabla h(x) \in \Pi_{\TC{X}{x}}[\nabla h(x)]$ for any $x \in X$; intuitively the gradient of $h$ should specify strategy modifications that do not require projecting back onto the set of action profiles.

\begin{proposition}[\cite{ahunbay2024local}, Theorems 3.9 \& 5.3]\label{prop:LCCE}
    In a smooth game, suppose that a subset $N' \subseteq N$ of players implement projected gradient ascent with the same non-increasing step sizes $\eta_t$, and suppose that for any $i \in N'$, $X_i$ either has a smooth boundary of bounded curvature, or is a polytope define via inequalities $\bilin{a_{i \ell}}{x_i} \leq b_\ell$ such that for any $\ell \neq \ell'$, $\bilin{a_{i\ell}}{a_{i\ell'}}\leq 0$. Then for any tangential $h : X \rightarrow \mathbb{R}$ with bounded, Lipschitz continuous gradients, if $\nabla_j h(x) = 0$ for any $j \notin N'$, at any $T > 0$,
    $$ \frac{1}{T} \cdot \sum_{t = 1}^T \sum_{i \in N'} \bilin{\nabla_i h(x^t)}{\nabla_i u_i(x^t)} \leq \frac{1}{T} \cdot \left( \frac{1}{\eta_{T}} + \frac{1}{\eta_1} + \sum_{t=1}^T \eta_t\right) \cdot \poly(G,L,G_h,L_h).$$
    As a consequence, if all players $i \in N'$ employ projected gradient ascent with step sizes satisfying $\eta_{t} = \omega(1/t)$ and $\sum_{t = 1}^T \eta_t = o(T)$, they incur vanishing regret against the set of all such $h$.
\end{proposition}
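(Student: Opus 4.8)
The plan is a Lyapunov/telescoping argument in which $h$ itself — rather than a quadratic comparator — plays the role of the potential, combined with a geometric lemma that controls the projection residual against tangential gradients. Fix the trajectory $(x^t)$ and set, for $i \in N'$, $g_i^t = \nabla_i u_i(x^t)$, $v_i^t = \nabla_i h(x^t)$, $y_i^t = x_i^t + \eta_t g_i^t$ and $p_i^t = x_i^{t+1} = \Pi_{X_i}[y_i^t]$. The identity $\eta_t g_i^t = (y_i^t - p_i^t) + (p_i^t - x_i^t)$, together with $\nabla_j h \equiv 0$ for $j \notin N'$, gives at each step
\[ \sum_{i \in N'}\bilin{v_i^t}{g_i^t} = \frac{1}{\eta_t}\bilin{\nabla h(x^t)}{x^{t+1}-x^t} + \sum_{i \in N'}\frac{1}{\eta_t}\bilin{v_i^t}{y_i^t - p_i^t}. \]
For the first term I would use that $h$ is a function of the $N'$-coordinates alone, which each move by $\|x_i^{t+1}-x_i^t\| \le \eta_t G_i$ (non-expansiveness of $\Pi_{X_i}$, since $x_i^t \in X_i$), so $L_h$-smoothness gives $\bilin{\nabla h(x^t)}{x^{t+1}-x^t} \le h(x^{t+1}) - h(x^t) + O(L_h G^2\eta_t^2)$. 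Summing $\tfrac{1}{\eta_t}\big(h(x^{t+1})-h(x^t)\big)$ over $t$ by parts — using that $1/\eta_t$ is non-decreasing and that $h$ oscillates by at most $O(G_h\,\mathrm{diam}(X))$ on the compact $X$ — produces the $\tfrac{1}{\eta_T}+\tfrac{1}{\eta_1}$ contribution, while $\sum_t \tfrac{1}{\eta_t}\,O(L_h G^2\eta_t^2) = O(L_h G^2)\sum_t\eta_t$.

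The crux is the claim that, under either regularity hypothesis on $X_i$, the projection residual obeys $\bilin{v_i^t}{y_i^t-p_i^t} \le C\,\eta_t^2$ with $C$ polynomial in the data. One starts from $y_i^t - p_i^t \in \NC{X_i}{p_i^t}$ and $\|y_i^t-p_i^t\| \le \|y_i^t - x_i^t\| \le \eta_t G$. For a polytope, $\NC{X_i}{p_i^t}$ is generated by the normals $a_{i\ell}$ active at $p_i^t$, so $y_i^t - p_i^t = \sum_\ell \mu_\ell a_{i\ell}$ with $\mu_\ell \ge 0$; the hypothesis $\bilin{a_{i\ell}}{a_{i\ell'}} \le 0$ for $\ell \neq \ell'$ forces $\|y_i^t-p_i^t\|^2 \ge \sum_\ell \mu_\ell^2\|a_{i\ell}\|^2$, hence $\mu_\ell\|a_{i\ell}\| \le \eta_t G$ for each active $\ell$. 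If $\ell$ is also active at $x_i^t$, tangentiality of $h$ at $x_i^t$ gives $\bilin{v_i^t}{a_{i\ell}} \le 0$; if $\ell$ becomes active only at $p_i^t$, then $p_i^t$ lies on that facet, tangentiality of $h$ at $p_i^t$ gives $\bilin{\nabla_i h(p_i^t)}{a_{i\ell}} \le 0$, and since $\|v_i^t - \nabla_i h(p_i^t)\| \le L_h\|x_i^t - p_i^t\| \le L_h\eta_t G$ one gets $\bilin{v_i^t}{a_{i\ell}} \le L_h\eta_t G\,\|a_{i\ell}\|$. Multiplying by $\mu_\ell$ and summing over the boundedly many constraints yields $\bilin{v_i^t}{y_i^t-p_i^t} \le O(L_h G^2)\,\eta_t^2$. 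For a smooth boundary of curvature $\le\kappa$ the residual is $\mu\,n(p_i^t)$ with $\mu = \|y_i^t-p_i^t\| \le \eta_t G$; the same dichotomy applies — if $x_i^t$ is on the boundary, compare $n(p_i^t)$ with $n(x_i^t)$ using $\|n(p_i^t)-n(x_i^t)\|\le\kappa\|p_i^t-x_i^t\|$ and tangentiality at $x_i^t$; if $x_i^t$ is interior, use tangentiality at $p_i^t$ and the Lipschitz bound on $\nabla_i h$ — either way $\bilin{v_i^t}{y_i^t-p_i^t} = O(\eta_t^2)$.

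Assembling the per-step bounds over $t = 1,\dots,T$ and dividing by $T$ yields the displayed inequality, with the game- and $h$-dependent constants absorbed into $\poly(G,L,G_h,L_h)$. The consequence is then immediate: $\eta_t = \omega(1/t)$ makes $1/\eta_T = o(T)$ while $1/\eta_1$ is a fixed constant, and $\sum_{t=1}^T\eta_t = o(T)$ by hypothesis, so the right-hand side vanishes; since every estimate above is uniform over tangential $h$ with the prescribed gradient and Lipschitz bounds, regret against that whole family vanishes. I expect the geometric lemma to be the main obstacle — specifically, the point that a facet (or boundary piece) becoming active only after the projection step must have been within $O(\eta_t)$ of active beforehand, so that evaluating tangentiality of $h$ at the \emph{projected} point $p_i^t$ and paying a Lipschitz correction of order $\eta_t$ upgrades the naive $O(\eta_t)$ estimate on $\bilin{v_i^t}{y_i^t-p_i^t}$ to the $O(\eta_t^2)$ needed for summability against the $1/\eta_t$ weights; the acute-angle condition on the facet normals is exactly what keeps the normal-cone coefficients $\mu_\ell$ of size $O(\eta_t)$ there.
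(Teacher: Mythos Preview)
The paper does not reprove Proposition~\ref{prop:LCCE}; it is imported from \cite{ahunbay2024local}. However, the paper does prove the closely related extension Theorem~\ref{thm:P-scaled-sets}, and that proof (Appendix~\ref{sec:scale-proofs}) follows exactly your telescoping outline: the same splitting $\eta_t g_i^t = (y_i^t-p_i^t)+(p_i^t-x_i^t)$, the same smoothness estimate $\bilin{\nabla h(x^t)}{x^{t+1}-x^t}\le h(x^{t+1})-h(x^t)+\tfrac12 L_h\eta_t^2\sum_i G_i^2$, and the same Abel summation for $\sum_t\tfrac{1}{\eta_t}(h(x^{t+1})-h(x^t))$. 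So the architecture of your argument is correct and is essentially the paper's.

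Your treatment of the projection residual, however, has a sign error. You claim that $\bilin{a_{i\ell}}{a_{i\ell'}}\le 0$ for $\ell\neq\ell'$ forces $\|y_i^t-p_i^t\|^2\ge\sum_\ell\mu_\ell^2\|a_{i\ell}\|^2$, but expanding $\|\sum_\ell\mu_\ell a_{i\ell}\|^2$ the cross terms $2\mu_\ell\mu_{\ell'}\bilin{a_{i\ell}}{a_{i\ell'}}$ are non\emph{positive}, so in fact $\|y_i^t-p_i^t\|^2\le\sum_\ell\mu_\ell^2\|a_{i\ell}\|^2$. The obtuse-angle hypothesis therefore gives no upper bound on the individual coefficients $\mu_\ell$, and your per-facet estimate $\mu_\ell\|a_{i\ell}\|\le\eta_t G$ does not follow; without it, summing $\mu_\ell\cdot\bilin{v_i^t}{a_{i\ell}}$ over active $\ell$ does not produce an $O(\eta_t^2)$ bound.

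The paper's fix is simpler than your facet decomposition and needs neither the obtuse-angle condition nor any bound on the $\mu_\ell$. Since $y_i^t-p_i^t\in\NC{X_i}{p_i^t}$ and tangentiality of $h$ at $p_i^t=x_i^{t+1}$ gives $\nabla_i h(x^{t+1})\in\TC{X_i}{p_i^t}$, one has $\bilin{\nabla_i h(x^{t+1})}{y_i^t-p_i^t}\le 0$ directly from the tangent/normal cone duality. Then
\[
\bilin{v_i^t}{y_i^t-p_i^t}\le\bilin{v_i^t-\nabla_i h(x^{t+1})}{y_i^t-p_i^t}\le L_h\|x^{t+1}-x^t\|\cdot\|y_i^t-p_i^t\|=O(\eta_t^2)
\]
by Lipschitz continuity of $\nabla h$ and non-expansiveness of the projection. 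This is precisely the ``evaluate tangentiality at the projected point and pay a Lipschitz correction'' mechanism you yourself flagged as the key idea in your final paragraph --- you just do not need to open up the normal cone into facet contributions to use it.
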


Important for our work here, Proposition \ref{prop:LCCE} extends to a refined equilibrium notion over usual CCE for normal form-games that is valid for the outcomes of projected gradient ascent, as the defining inequalities of the probability simplices $\Delta(A_i)$ satisfy the desired regularity assumption. Moreover, the framework allows bounding the expectation of quantities over the set of mixed-strategy profiles $\times_{i \in N} \Delta(A_i)$, allowing for primal-dual proofs of performance guarantees. Explicitly, suppose that $\gamma \in \mathbb{R}$ and $d, h : X \rightarrow \mathbb{R}$ are continuously differentiable functions, with $h$ tangential with Lipschitz gradients, such that
\begin{equation}\label{eq:PD-sol}\forall \ x \in X, \gamma + \sum_{i \in N} \bilin{\nabla_i h(x)}{\nabla_i u_i(x)} \geq d(x),\end{equation}
Then the time average distribution of play $\sigma$ resulting from projected gradient ascent, obtained by sampling $x^t$ for $1 \leq t \leq T$ with probability $1/T$, satisfies $\mathbb{E}_{x\sim \sigma}[d(x)] \leq \gamma + \epsilon \cdot \poly(G,L,G_h,L_h)$, where $\epsilon$ depends on the step sizes $\eta_t$ and $T$ as in Proposition \ref{prop:LCCE}. The result follows from considering (\ref{eq:PD-sol}) as dual feasibility constraints on $(\gamma,h)$ for the primal problem
\begin{align}
    \max_{\sigma \geq 0} \int_X d\sigma(x) \cdot d(x) \textnormal{ subject to } &  \label{opt:cont-PD}\\
    \int_X d\sigma(x) & = 1 \tag{$\gamma$}\\
   \forall \ h \in H, \int_X d\sigma(x) \cdot \sum_{i \in N} \bilin{\Pi_{\TC{X_i}{x_i}}[\nabla_i h(x)]}{\nabla_i u_i(x)} & \leq 0. \tag{$\epsilon_h$} \label{con:eq-h}
\end{align} 
The consideration of such local (coarse) correlated equilibria can then be considered an equilibrium refinement over usual (C)CE, as $\epsilon$-correlated equilibria of a normal-form game are equivalent to $\epsilon$-local correlated equilibria of its mixed extension with respect to the set of vector fields 
$F = \{ e_{ia_i} - e_{ia'_i} \ | \ i \in N, a_i,a'_i \in A_i \}$. 
Likewise, $\epsilon$-coarse correlated equilibria of a normal-form game are equivalent to the $\epsilon$-local coarse correlated equilibria of its mixed extension with respect to the set of functions $\{-\|x_i - x^*_i\|^2/2 \ | \ i \in N, x^* \in \Delta(A_i) \}$.

\section{Equilibrium Refinement \& Semicoarse Equilibria}\label{sec:semicoarse-def}

Our central goal in this paper is to refine the guarantees of projected gradient ascent through linear programming. For an illustrative example, suppose that we would like to argue that in a normal-form game, when all players employ projected gradient ascent with the same step sizes $\eta_t \propto 1/\sqrt{t}$, the aggregate distribution of play converges to a probability distribution supported only over the set $O \subseteq A$. Since we know that the aggregate distribution of play $\sigma(a) = \frac{1}{T} \cdot \sum_{t = 1}^T \prod_{j \in J} x^t_j(a_j)$ converges to a coarse correlated equilibrium as $T \rightarrow \infty$, one way to prove convergence in this sense would be to show that the value of the LP
\begin{align}
    \max_{\sigma \geq 0} \sum_{a \notin O} \sigma(a) \textnormal{ subject to } &  \label{opt:easy}\\
    \sum_{a \in A} \sigma(a) & = 1 \tag{$\gamma$}\\
   \forall \ i \in N, a'_i \in A_i, \sum_{a \in A} \sigma(a) \cdot (u_i(a'_i,a_{-i}) - u_i(a)) & \leq 0 \tag{$\epsilon_i(a'_i)$}
\end{align}
equals $0$. This is, for instance, the approach considered in \cite{ahunbay2025uniqueness} to certify (near-)convergence to equilibrium of no-regret learning in Bayesian single-item auctions. However, even in very simple games where such convergence behaviour is \emph{``obvious''}, it is possible for the notion of coarse correlated equilibrium to be too weak to certify it. 

\begin{example}\label{ex:bad-game}
    Consider the following $2 \times 3$ bimatrix game:
    \begin{center}
        \begin{tabular}{ r|c|c|c| }
            \multicolumn{1}{r}{}
             &  \multicolumn{1}{c}{L}
             & \multicolumn{1}{c}{M} & \multicolumn{1}{c}{R} \\
            \cline{2-4}
            T & 0,1 & 0,0 & 0,0 \\
            \cline{2-4}
            B & 0,0 & 0,0 & 0,1 \\
            \cline{2-4}
        \end{tabular}
    \end{center} 
    Player $1$, the row player, has utility gradient $0$ for any choice of strategies $x_1 \in \Delta(A_1)$, whereas player $2$, the column player, clearly best responds by playing $L$ if $x_1(T) > 1/2$, by playing $R$ if $x_1(B) < 1/2$, or by playing any distribution which assigns $x_2(M) = 0$ otherwise. Specifically, at any $x_1 \in \Delta(A_1)$, the action $M$ is utility minimising for player $2$ and  $\Pi_{\TC{\Delta(A_2)}{x_2}}[\nabla_2 u_2(x_1,x_2)]_M \leq 0$, with equality if and only if $x_2(M) = 0$. However, for small enough $\epsilon$, 
    $$ \sigma(a) = \begin{cases}
        \frac{1}{2} - \epsilon & (a_1,a_2) \in \{(T,L),(B,R)\} \\
        \epsilon & a_2 = M \\
        0 & \textnormal{otherwise}
    \end{cases}$$
    is a coarse correlated equilibrium. Thus for $O = \{a \in A \ | \ a_2 \neq M \}$, the LP (\ref{opt:easy}) has value $> 0$.
\end{example}

Our conclusion is that we need to somehow strengthen the LP (\ref{opt:easy}), refining its feasible region via the addition of further, linear equilibrium constraints. For a given distribution $\sigma \in \Delta(X)$, denote $\tilde\sigma(a) = \int_X d\sigma(x) \cdot \prod_{j \in N} x_j(a_j)$ as the induced probability distribution over the set of outcomes. Comparing (\ref{opt:easy}) with (\ref{opt:cont-PD}), we see that to identify our equilibrium refinement we need to identify the maximal set of functions $H$ such that for each $h \in H$, there exists a function $\tilde{h} : A \rightarrow \mathbb{R}$, such that for any $\sigma \in \Delta(X)$,
\begin{equation}\label{eq:shatter}
\int_X d\sigma(x) \cdot \sum_{i \in N} \bilin{\nabla_i h(x)}{\nabla_i u_i(x)} = \sum_{a \in A} \tilde{h}(a) \cdot \tilde{\sigma}(a).\end{equation}
This equality is, of course, satisfied if and only if it holds pointwise in $x$ for the integrands. Or in words, the inner product of $\nabla_i h$ and $\nabla_i u_i$ over the set of mixed strategies must \emph{``shatter''} nicely onto a sum over the set of pure action profiles. Now, in the mixed extension of the normal-form game, utilities are given via expectation and $u_i(x) = \sum_{a \in A} u_i(a) \cdot \prod_{j \in N} x_j(a_j)$. Therefore, 
$\partial u_i(x) / \partial x_i(a'_i) = \sum_{a_{-i} \in A_{-i}} u_i(a'_i,a_{-i}) \cdot \prod_{j \neq i} x_j(a_j)$. 
Hence, the condition (\ref{eq:shatter}) is equivalent to the existence of a function $\tilde{h}$ such that for any $x \in X$,
\begin{equation}
    \sum_{i \in N} \sum_{a'_i \in A_i, a_{-i} \in A_{-i}} \frac{\partial h(x)}{\partial x_i(a'_i)} \cdot u_i(a'_i,a_{-i}) \cdot \prod_{j \neq i} x_j(a_j) = \sum_{a \in A} \tilde{h}(a) \cdot \prod_{j \in N} x_j(a_j).
\end{equation}
The RHS is multilinear, and thus so too must be the LHS. Now, consider setting $h(x) = \sum_{i \in N} \frac{1}{2} x_i^T Q_i x_i + q_i^T x_i$ for some symmetric matrices $Q_i$ and vectors $q_i$ for each player $i$. Then for any pair of players $i \neq j$ and any pair of actions $a_i \in A_i,a_j \in A_j$, $\partial^2 h(x) / \partial x_i(a_i) \partial x_j(a_j) = 0$. Moreover, for each $i$,
\begin{align*}
    & \quad \sum_{a_i'\in A_i, a_{-i} \in A_{-i}} \frac{\partial h(x) }{\partial x_i(a'_i)} \cdot u_i(a'_i,a_{-i}) \cdot \prod_{j \neq i} x_j(a_j) \\ & = \sum_{a_i'\in A_i, a_{-i} \in A_{-i}} \left( q_i(a'_i) \cdot 1 + \sum_{a_i \in A_i} Q_i(a'_i,a_i) \cdot x_i(a_i) \right) \cdot u_i(a'_i,a_{-i}) \cdot \prod_{j \neq i} x_j(a_j) \\
    & = \sum_{a'_i \in A_i, a \in A} \left( \prod_{j \in N} x_j(a_j) \right) \cdot \left( Q_i(a'_i,a_i) + q_i(a'_i) \right) \cdot u_i(a'_i,a_{-i}) \\
    & = \sum_{a \in A} \tilde{\sigma}(a) \cdot \sum_{a'_i \in A_i} \left( Q_i(a'_i,a_i) + q_i(a'_i) \right) \cdot u_i(a'_i,a_{-i}). \\
    \Rightarrow \tilde{h}(a) & = \sum_{i \in N} \sum_{a'_i \in A_i} \left( Q_i(a'_i,a_i) + q_i(a'_i) \right) \cdot u_i(a'_i,a_{-i}).
\end{align*}
Here, the second inequality follows from $\sum_{a_i \in A_i} x_i(a_i) = 1$, and the third follows from exchanging the order of the sum. Thus $\tilde{h}(a)$ is a conical sum of functions  
$\sum_{a'_i \in A_i} \left( Q_i(a'_i,a_i) + q_i(a'_i) \right) \cdot u_i(a'_i,a_{-i})$ defined by a pair $(Q_i,q_i)$ for some player $i$, which suggests that our inequality constraints should be generated by functions $h(x) = \frac{1}{2} x_i^T Q_i x_i + q_i^T x_i$ for some player $i$ and pair $(Q_i,q_i)$. 

We still require conditions on $(Q_i,q_i)$ such that $\nabla h$ is tangential; otherwise, the constraints (\ref{con:eq-h}) is equal to neither term in (\ref{eq:shatter}) due to the tangent cone projection required, and thus the inequality constraints can no longer be added to the LP (\ref{opt:easy}). Since the value of $h$ can only change with $x_i$,  $\nabla_i h(x)$ must point to the relative interior of the probability simplex $\Delta(A_i) = \{ x_i \in \mathbb{R}^{d_i} \ | \ \sum_{a_i \in A_i} x_i(a_i) = 1, x_i \geq 0\}$. For this, $\nabla_i h(x)$ must conserve probability, hence we require 
$\sum_{a'_i \in A_i} Q_i(a'_i,a_i) \cdot x_i(a_i) + q_i(a'_i) = 0$ for any $x_i \in \Delta(A_i)$. Considering the cases when $x_i(a''_i) = 1$ for some $a''_i$ and $= 0$ otherwise, we conclude that $(Q_i,q_i)$ satisfies
\begin{equation}\label{con:preserve}
    \sum_{a'_i \in A_i} Q_i(a'_i,a_i) + q_i(a'_i) = 0 \ \forall \ a_i \in A_i.
\end{equation}
Moreover, whenever $x_i(a_i') = 0$, it must be the case that $\partial h(x) / \partial x_i(a'_i) \geq 0$, otherwise following along $\nabla_i h(x)$ would result in assigning negative probability to $a'_i$, necessitating a projection back onto $\Delta(A_i)$. Formally, we require
$$ \frac{\partial h(x)}{\partial x_i(a'_i)} = \sum_{a_i \in A_i} Q_i(a'_i,a_i) \cdot x_i(a_i) + q_i(a'_i) \geq 0 \ \forall \ x_i \in \Delta(A_i), x_i(a'_i) = 0.$$
Similarly considering the cases when $x_i(a_i) = 1$ for $a_i \neq a_i$, we conclude
\begin{equation}\label{con:tangency}
    Q_i(a'_i,a_i) + q_i(a'_i) \geq 0 \ \forall \ a_i \neq a'_i.
\end{equation}
Finally, the symmetry of $Q_i$ may be expressed explicitly, 
\begin{equation}\label{con:symmetry}
    Q_i(a'_i,a_i) - Q_i(a_i,a'_i) = 0 \ \forall \ a_i, a'_i \in A_i.
\end{equation}
These conditions on $(Q_i,q_i)$ altogether allows us to pinpoint our desired equilibrium refinement.

\begin{definition}\label{def:semicoarse}
    For a normal-form game $\Gamma$, a probability distribution $\sigma \in \Delta(A)$ is called a \textbf{semicoarse correlated equilibrium} if for any player $i$ and any pair $(Q_i,q_i)$ which satisfy the tangency conditions (\ref{con:preserve}), (\ref{con:tangency}), and the symmetry condition (\ref{con:symmetry}),
    \begin{equation}\label{con:semicoarse}
        \sum_{a \in A} \sigma(a) \cdot \sum_{a'_i \in A_i} \left( Q_i(a'_i,a_i) + q_i(a'_i) \right) \cdot u_i(a'_i,a_{-i}) \leq 0.
    \end{equation}
\end{definition}

By Proposition \ref{prop:LCCE}, we infer that we do not necessarily require all players to employ the same step sizes for projected gradient ascent after all; as each pair $(Q_i,q_i)$ specifies a tangential function which depends on $x_i$ only. As long as for each player $i$, the non-increasing step sizes satisfy $\eta_{it} = \omega(1/t)$ and $\sum_{t=1}^T \eta_{it} = o(T)$, any convergent subsequence of the aggregate distribution of play does so to a semicoarse equilibrium as $T \rightarrow \infty$.

We remark that whereas the set of possible $(Q_i,q_i)$ forms a polyhedral cone, they can be interpreted as generators of a subclass of linear endomorphisms on $\Delta(A_i) \rightarrow \Delta(A_i)$. That is, for small enough $\delta$, $x_i \mapsto x_i + \delta(Q_ix_i + q_i)$ is indeed such a map; (7) \& (8) ensure the mapping is an endomorphism, whereas (9) enforces the symmetry of $Q$. As a consequence, the resulting strategy modifications in fact correspond to a subset $\Phi_{SEMI}$ of the strategy modifications $\Phi_{LIN}$ of the mixed extension of the game; and hence the set of semicoarse equilibria of a normal-form game is a superset of its correlated equilibria.

\begin{proposition}\label{prop:linsubset}
    Suppose the pair $(Q_i,q_i)$ satisfies the tangency \& symmetry conditions. Then there exists $\delta > 0$ such that the matrix
    $P_i(a'_i,a_i) = \mathbf{1}(a'_i, a_i) + \delta \cdot \left( Q_i(a'_i,a_i) + q_i(a'_i) \right)$ 
    is a (left) stochastic matrix. As a consequence, if $\sigma$ is a semicoarse correlated equilibrium, then player $i$ incurs no regret against such a strategy modification $x_i \mapsto P_i x_i$.
\end{proposition}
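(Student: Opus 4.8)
The plan is to verify directly that $P_i$ is a left stochastic matrix for a suitably small $\delta>0$, and then to compute the regret of player $i$ against the induced strategy modification $x_i \mapsto P_i x_i$ by unwinding definitions; both steps are elementary, and the conditions (\ref{con:preserve})--(\ref{con:semicoarse}) are tailored precisely so that they go through.

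First I would observe that the column sums of $P_i$ are correct for \emph{every} $\delta$: for each fixed $a_i \in A_i$,
$$\sum_{a'_i \in A_i} P_i(a'_i, a_i) = \sum_{a'_i \in A_i} \mathbf{1}(a'_i, a_i) + \delta \sum_{a'_i \in A_i}\bigl(Q_i(a'_i, a_i) + q_i(a'_i)\bigr) = 1 + \delta \cdot 0 = 1,$$
the last sum vanishing by the probability-conservation condition (\ref{con:preserve}). Next, the off-diagonal entries are nonnegative for every $\delta \geq 0$: if $a'_i \neq a_i$ then $P_i(a'_i,a_i) = \delta\bigl(Q_i(a'_i,a_i)+q_i(a'_i)\bigr) \geq 0$ by the tangency condition (\ref{con:tangency}). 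Only the diagonal entries $P_i(a_i,a_i) = 1 + \delta\bigl(Q_i(a_i,a_i)+q_i(a_i)\bigr)$ can be driven negative, and these stay nonnegative once $\delta \leq \bigl(1 + \max_{a_i \in A_i}|Q_i(a_i,a_i)+q_i(a_i)|\bigr)^{-1}$; any positive $\delta$ below this bound works, and the bound depends only on $(Q_i,q_i)$. Hence $P_i$ is left stochastic, so $x_i \mapsto P_i x_i$ maps $\Delta(A_i)$ into itself and is a legitimate element of $\Phi_{LIN}$ for the mixed extension. (Note the symmetry condition (\ref{con:symmetry}) is not needed for this part; it only matters for the linear field being a genuine gradient field.)

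For the no-regret claim, fix a semicoarse correlated equilibrium $\sigma \in \Delta(A)$. The modification sends action $a_i$ to the mixed strategy given by the $a_i$-th column of $P_i$, so the ex-ante regret of player $i$ is
$$\sum_{a \in A} \sigma(a) \left( \sum_{a'_i \in A_i} P_i(a'_i, a_i)\, u_i(a'_i, a_{-i}) - u_i(a) \right).$$
Writing $u_i(a) = \sum_{a'_i \in A_i} \mathbf{1}(a'_i,a_i)\, u_i(a'_i,a_{-i})$ and substituting the definition of $P_i$, the bracketed difference collapses to $\delta \sum_{a'_i \in A_i}\bigl(Q_i(a'_i,a_i)+q_i(a'_i)\bigr) u_i(a'_i,a_{-i})$, so the regret equals $\delta$ times the left-hand side of the semicoarse inequality (\ref{con:semicoarse}) for the hypothesised pair $(Q_i,q_i)$. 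Since that pair satisfies the tangency and symmetry conditions, (\ref{con:semicoarse}) applies, the left-hand side is $\leq 0$, and multiplying by $\delta>0$ finishes the proof. There is essentially no obstacle here: the statement is pure bookkeeping off the cone constraints plus the trivial identity $u_i(a)=\sum_{a'_i}\mathbf{1}(a'_i,a_i)u_i(a'_i,a_{-i})$; the only point meriting care is recording that the threshold for $\delta$ is determined by $(Q_i,q_i)$ alone and not by $\sigma$.
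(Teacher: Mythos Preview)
Your proof is correct and follows essentially the same approach as the paper: verify column sums via (\ref{con:preserve}), off-diagonal nonnegativity via (\ref{con:tangency}), and bound $\delta$ to keep the diagonal nonnegative. The paper chooses the tight value $\delta = 1/\Delta$ with $\Delta = \max_{a_i}\sum_{a'_i\neq a_i}(Q_i(a'_i,a_i)+q_i(a'_i))$ rather than your cruder bound, and leaves the regret computation implicit, but the substance is identical.
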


Moreover, the subset $\Phi_{SEMI}$ of $\Phi_{LIN}$ we consider is \emph{maximal} for the no-regret guarantees of the  gradient dynamics of the game. That is, with initial conditions $x(0)$, a solution to the projected gradient dynamics of the normal-form game is a curve\footnote{That such a solution exists in our setting follows from \cite{dupuis1993dynamical,cojocaru2004existence}.} $x : [0,\infty) \rightarrow X$ such that for any player $i$, 
$ dx_i(t) /dt = \Pi_{\TC{X_i}{x_i(t)}}[\nabla_i u_i(x(t))]$
almost everywhere on $[0,\infty)$. As remarked in \cite{ahunbay2024local}, whenever the curve remains in the relative interior of $X = \times_{i \in N} \Delta(A_i)$, then for any continuously differentiable function $h : X \rightarrow \mathbb{R}$,
$\frac{1}{T} \cdot \int_0^T dt \cdot \sum_{i \in N} \bilin{\nabla_i h(x(t))}{\nabla_i u_i(x(t))} = \frac{h(T)-h(0)}{T}$, 
which goes to $0$ as $T \rightarrow \infty$. However, replacing $\nabla h$ with an arbitrary vector field $f$ above does not guarantee that the expectation of the inner product along the curve vanishes. And for a linear transformation defined by a non-conservative vector field $x_i \mapsto P_i x_i$, we may fix utilities and an initial condition $x(0)$, such that the solution $x$ is cyclic, and for the induced probability distribution $\sigma$ on $X$, player $i$ incurs positive regret against the associated strategy modification. 

\begin{proposition}\label{prop:linmax}
 Let $N \geq 2, (A_i)_{i \in N}$ be fixed where $|A_j| > 2$ for at least two players $j$. Then for any linear transformation $\phi : \Delta(A_i) \rightarrow \Delta(A_i)$ in $\Phi_{LIN} \setminus \Phi_{SEMI}$, there exists utilities $(u_i)_{i \in N}$ such that for some learning cycle in the resulting game, player $i$ incurs positive regret against $\phi$.
\end{proposition}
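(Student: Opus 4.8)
The plan is to follow the strategy of \cite{viossat2015evolutionary}: given a linear modification $\phi \in \Phi_{LIN} \setminus \Phi_{SEMI}$ for player $i$, we explicitly construct a game together with a cyclic solution to the projected gradient dynamics along which $i$ accrues positive regret against $\phi$. First I would reduce to a normal form for $\phi$: writing $\phi(x_i) = P_i x_i$ with $P_i$ a left-stochastic matrix (probability-preserving, so $\Sigma^T P_i = \Sigma^T$), the matrix $M_i := P_i - \mathbf{1}$ satisfies $\Sigma^T M_i = 0$. The condition that $\phi \notin \Phi_{SEMI}$ means, by the derivation preceding Definition \ref{def:semicoarse}, that $M_i$ cannot be written (up to positive scaling) as $Q_i + q_i \Sigma^T$ with $Q_i$ symmetric and the sign conditions (\ref{con:tangency}) holding; since the tangency/sign conditions are automatically inherited from $P_i$ being stochastic with a nonnegative off-diagonal, the only way to fail membership in $\Phi_{SEMI}$ is that the ``symmetric part after removing a rank-one $q_i\Sigma^T$ correction'' is impossible — equivalently, $M_i$ has a genuinely non-conservative (non-symmetrisable) component. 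I would make this precise by decomposing $M_i$ relative to the subspace $\{x_i : \Sigma^T x_i = 0\}$ and isolating an antisymmetric piece that no choice of $q_i$ can cancel.

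The heart of the construction is then to build utilities so that the induced dynamics rotates. Concretely, I would let player $i$'s utility gradient (in the mixed extension) be $\nabla_i u_i(x) = M_i^T x_i$ plus whatever lower-order terms are needed — i.e. have $i$'s payoff depend on the opponents only through a single ``switch'' that, as in Example \ref{ex:bad-game}, is used to force the relevant opponent (one of the $\geq 2$ players with $|A_j| > 2$) to keep their strategy in a regime where the effective vector field seen by $i$ is the linear field $x_i \mapsto \Pi_{\TC{\Delta(A_i)}{x_i}}[M_i^T x_i]$. The hypothesis $|A_j| > 2$ for two players is used exactly here and in the same way it is used for semicoarse equilibria: one needs enough dimensions both in $A_i$ (so that the non-symmetrisable part of $M_i$ actually produces rotation rather than being killed on a $1$-dimensional residual space) and in a second player $A_j$ (to implement the feedback loop/switch that pins the dynamics into the interior of $\Delta(A_i)$, where the tangent-cone projection is the identity on the relevant subspace). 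Once inside the interior, the dynamics is the linear ODE $\dot x_i = M_i^T x_i$ restricted to $\{\Sigma^T x_i = 0\} + x_i^{\mathrm{center}}$, and I choose $u_i$ (scaling $M_i$, adding a small conservative attracting term toward a center) so that this linear system has purely imaginary eigenvalues on the non-symmetrisable block, yielding a periodic orbit $x(t)$; the opponents are held at a constant interior profile consistent with the switch. This gives a genuine learning cycle.

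Finally I would compute the regret. For the induced distribution $\sigma$ supported on the cycle, player $i$'s regret against $\phi$ is, in the $\delta \downarrow 0$ limit and by Proposition \ref{prop:LCCE}'s identity for conservative fields, governed by $\frac{1}{T}\int_0^T \bilin{\Pi_{\TC{\Delta(A_i)}{x_i(t)}}[M_i^T x_i(t)]}{\nabla_i u_i(x(t))}\,dt$. Because the orbit stays in the interior the projection drops out and, decomposing $M_i = S_i + K_i$ into its symmetrisable and non-symmetrisable parts with respect to the time-averaged metric, the $S_i$ contribution is a perfect time-derivative (telescopes to $0$ over a period), while the $K_i$ contribution along a nontrivial closed orbit around the center is strictly positive — this is precisely the circulation-of-a-non-conservative-field computation, and it is nonzero exactly because $\phi \notin \Phi_{SEMI}$ guarantees $K_i \neq 0$ with a component supported on the plane of rotation.

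The main obstacle I expect is the bookkeeping in the second paragraph: ensuring that the gadget forcing the opponent's behaviour genuinely confines $x_i(t)$ to the interior of $\Delta(A_i)$ for all time (so the tangent-cone projection really is inert on the rotating block) while simultaneously keeping the opponents' strategies constant and consistent with their own best-response/gradient behaviour — i.e. verifying that the constructed $x(t)$ is an honest solution of the coupled projected gradient dynamics, not just of $i$'s isolated linear ODE. Handling the boundary-avoidance carefully, likely by taking the orbit to be a small circle around the barycenter and scaling $u_i$ so excursions stay $\epsilon$-interior, is the delicate step; the rest is the circulation computation of \cite{viossat2015evolutionary} adapted to the tangent-cone setting.
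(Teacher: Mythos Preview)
Your proposal has a genuine gap at its core construction. In the mixed extension of a normal-form game, $\nabla_i u_i(x)$ is \emph{independent of $x_i$} (utilities are multilinear, so $\partial u_i/\partial x_i(a_i)$ depends only on $x_{-i}$). Thus there is no way to make ``the effective vector field seen by $i$'' be the linear field $x_i \mapsto M_i^T x_i$ via the choice of $u_i$ alone; any rotation in $x_i$ must be driven by coupling with an opponent's \emph{moving} strategy. You flag this coupling as ``the main obstacle'' but propose to resolve it by a switch gadget that pins opponents to a \emph{constant} interior profile --- yet if $x_{-i}$ is constant then $\nabla_i u_i(x)$ is constant, and projected gradient dynamics in $x_i$ converges to a face rather than cycling. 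So the construction as sketched cannot produce a learning cycle at all. (There is also a transpose slip: the regret integrand is $\langle M_i x_i, \nabla_i u_i(x)\rangle$, not $\langle M_i^T x_i, \nabla_i u_i(x)\rangle$.)

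The paper sidesteps this entirely via a reduction you are missing. A short lemma shows that $\phi \notin \Phi_{SEMI}$ is equivalent to the existence of a \emph{triplet} of actions $a_i,a'_i,a''_i$ on which $P_i$ violates the cyclic symmetry $P_i(a_i,a'_i)+P_i(a'_i,a''_i)+P_i(a''_i,a_i) = P_i(a_i,a''_i)+P_i(a''_i,a'_i)+P_i(a'_i,a_i)$. One then embeds a standard $3\times 3$ rock--paper--scissors game between player $i$ (on this triplet) and a second player $j$ with $|A_j|\geq 3$, with all other players non-strategic. The gradient dynamics of rock--paper--scissors has an explicit family of circular orbits of radius $\epsilon$ about the uniform point, with \emph{both} $x_i$ and $x_j$ rotating; the opponent is not held fixed but is the source of the rotation. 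A direct trace computation then gives the time-averaged regret against $\phi$ as $\epsilon^2/2$ times exactly the triplet asymmetry, hence strictly positive. The triplet reduction is the key idea: it localises the ``non-symmetrisable component'' you were trying to extract abstractly onto three actions, where an off-the-shelf cycling game already does all the work.
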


We defer the proofs of Propositions \ref{prop:linsubset} and \ref{prop:linmax} to Appendix \ref{sec:semicoarse-def-proofs}.

\subsection{Polyhedral Representations of Semicoarse Equilibria}\label{sec:representation}

Our discussion in Section \ref{sec:semicoarse-def} suggests that, to obtain sharper guarantees for the outcomes of projected gradient ascent, we could simply analyse the set of semicoarse equilibrium of the game. However, the set of possible $(Q_i,q_i)$ is uncountably infinite in general. This also obfuscates any intuition regarding the strategy modifications; while we can see from (\ref{con:semicoarse}) and Proposition \ref{prop:linsubset} that the actions $a_i$ are transformed to a probability distribution over actions $a'_i$ in some manner, \emph{how} these transformations are done are not yet clear. To illustrate the kind of strategy modifications that we may need to include, it is illustrative to revisit Example \ref{ex:bad-game}.

\begin{example}\label{ex:good-proof}
    In Example \ref{ex:bad-game}, consider the function 
    $h(x) = -\frac{1}{4}\cdot\left( 1 - x_2(L) - x_2(R) \right)^2 - \frac{1}{2} \cdot x_2(M)$. Then $\nabla_2 h(x) = x_2(M) \cdot (1/2,-1,1/2)$, and we observe immediately that $\nabla_2 h(x)$ is a tangential vector field over $\Delta(A_1) \times \Delta(A_2)$. Moreover, $\nabla_2 h(x)$ admits a straightforward interpretation; player $2$ deviates from the action $M$ by playing $L$ or $R$ both with probability $1/2$ each. Meanwhile, they do not deviate from playing $L$ or $R$. Therefore, it is not possible to represent this strategy modification as a combination of uniform deviations.
    
    Moreover, we remark that 
    $\bilin{\nabla_2 h(x)}{\nabla_2 u_2(x)} = x_2(M) \cdot (x_1(T) + x_1(B))/2 = x_2(M)/2 \geq 0$, 
    with equality only when $x_2(M) = 0$. However, the semicoarse equilibrium constraint necessitates that $\int_X d\sigma(x) \cdot \bilin{\nabla_2 h(x)}{\nabla_2 u_2(x)} \leq 0$, which is only possible if $\sigma$ assigns all probability to the facet of $\Delta(A_1) \times \Delta(A_2)$ such that $x_2(M) = 0$. This implies that the semicoarse equilibrium constraints adequately capture the behaviour of gradient dynamics in this game.
\end{example}

From Example \ref{ex:good-proof}, one may conjecture that the defining strategy modifications of semicoarse equilibrium are generated by functions of the form 
\begin{equation}\label{eq:gen-ansatz}h(x) = -\frac{1}{2|A_i \setminus S_i|} \cdot \left( 1 - \sum_{a_i \in A_i \setminus S_i} x_i(a_i) \right)^2 - \frac{1}{2} \cdot \sum_{a_i \in S_i} x_i(a_i)^2\end{equation}
for proper subsets $S_i$ of $A_i$, corresponding to deviating from $a_i$ to the uniform distribution on $A_i \setminus S_i$ for each $a_i \in S_i$ and not deviating otherwise. This turns out to be not entirely correct, but provides a key idea for the proof of our characterisation theorem. Namely, we first recall that the set of possible $(Q_i,q_i)$ which defines the generators of our linear transformations form a polyhedral cone; this follows since the tangency conditions (\ref{con:preserve}), (\ref{con:tangency}), and (\ref{con:symmetry}) are all $= 0$ or $\geq 0$ constraints. A polyhedral cone is generated by its finitely many rays, and thus characterising them is equivalent to characterising the strategy modifications that define semicoarse equilibria by Proposition \ref{prop:linsubset}. Our observation is that the functions (\ref{eq:gen-ansatz}) \emph{completely capture} the effect of the constant vector $q_i$. Factoring out their contribution to $(Q_i,q_i)$ then allows us to reduce the problem to that of representations of symmetric doubly stochastic matrices as a convex combination of symmetric $0-1$ matrices, which were studied in \cite{cruse1975note}. The result, the proof of which is deferred to Appendix \ref{sec:representation-proofs}, is then:

\begin{theorem}\label{thm:strategy-def}
    The set of inequalities (\ref{con:semicoarse}) are equivalent to the inequalities
    \begin{equation}
        \sum_{a \in A} \sigma(a) \cdot \left[ \left( \sum_{a'_i \in A_i} P_i(a'_i,a_i) 
 \cdot u_i(a'_i,a_{-i})\right) - u_i(a) \right] \leq 0 \ \forall \ P_i \in \mathbf{P}(A_i),
    \end{equation}
    where $\mathbf{P}(A_i)$ is the set of left stochastic matrices corresponding to the strategy modifications:
    \begin{enumerate}
        \item For a proper subset $S_i \subset A_i$, each action $a_i \in S_i$ is transformed to the uniform distribution on $A_i \setminus S_i$, whereas for $a_i \in S_i \setminus A_i$, $a_i \mapsto a_i$.
        \item For a cycle of actions $C_i = \{a_{i1}, a_{i2}, \ldots, a_{ik}\}$, writing $a_{i(k+1)} = a_{i1}$, each action $a_{i\ell}$ is transformed to the uniform distribution on $\{a_{i(\ell-1)},a_{i(\ell+1)}\}$, whereas for $a_i \in A_i \setminus C_i$, $a_i \mapsto a_i$.
    \end{enumerate}
\end{theorem}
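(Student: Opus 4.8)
The plan is to exploit the polyhedral structure of the cone of valid pairs $(Q_i,q_i)$: since the defining relations \eqref{con:preserve}, \eqref{con:tangency}, \eqref{con:symmetry} are all homogeneous linear equalities and inequalities, the feasible set is a polyhedral cone $\mathcal{C}$, and the inequalities \eqref{con:semicoarse} are equivalent to the finitely many inequalities obtained by letting $(Q_i,q_i)$ range over the extreme rays of $\mathcal{C}$. So it suffices to classify these extreme rays and check that each one produces (up to a positive scalar and additive constant) exactly one of the two families of left stochastic matrices $P_i$ listed in the statement, with the constraint \eqref{con:semicoarse} reading $\sum_a \sigma(a)\big[\sum_{a_i'}P_i(a_i',a_i)u_i(a_i',a_{-i}) - u_i(a)\big] \le 0$ once the condition \eqref{con:preserve} is used to absorb the $-u_i(a)$ term (note $\sum_{a_i'}(Q_i(a_i',a_i)+q_i(a_i')) = 0$, so $P_i(a_i',a_i) = \mathbf{1}(a_i',a_i)+\delta(Q_i(a_i',a_i)+q_i(a_i'))$ has columns summing to $1$, and the bracketed expression is $\delta^{-1}$ times the summand in \eqref{con:semicoarse}).

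The key reduction, hinted at in the text before Example~\ref{ex:good-proof}, is to split off the contribution of the constant vector $q_i$. First I would use \eqref{con:preserve} to solve $q_i(a_i') = -\sum_{a_i}Q_i(a_i',a_i)$, eliminating $q_i$ entirely; then $Q_i(a_i',a_i)+q_i(a_i') = Q_i(a_i',a_i) - \sum_{b}Q_i(a_i',b)$, and the tangency condition \eqref{con:tangency} becomes $Q_i(a_i',a_i) - \sum_b Q_i(a_i',b) \ge 0$ for $a_i \ne a_i'$, i.e. $\sum_{b \ne a_i'} Q_i(a_i',b) \le 0$ for every $a_i'$; combined with symmetry \eqref{con:symmetry}, $Q_i$ is a symmetric matrix with all off-diagonal row sums nonpositive. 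Writing $M_i(a_i',a_i) := -(Q_i(a_i',a_i)+q_i(a_i'))$ for $a_i' \ne a_i$ (so $M_i \ge 0$ off-diagonal by \eqref{con:tangency}) and setting the diagonal of $M_i$ so that its columns sum to $0$ (forced by \eqref{con:preserve}), the matrix $P_i = \mathbf{1} - \delta M_i$ is, for small $\delta$, left stochastic, and $M_i$ ranges over a polyhedral cone of symmetric matrices with zero column sums and nonnegative off-diagonal entries — exactly the cone whose relation to symmetric doubly-stochastic matrices and symmetric $0$–$1$ matrices is analyzed in \cite{cruse1975note}. I would then invoke (the symmetric analogue of) the Birkhoff–von Neumann type decomposition from \cite{cruse1975note}: the extreme rays of this cone are the Laplacian-like matrices $L_{S_i}$ of a complete bipartite "star" structure $S_i \leftrightarrow A_i\setminus S_i$ (giving family 1, the uniform-deviation-from-$S_i$ modifications, matching the ansatz \eqref{eq:gen-ansatz}) together with the cycle Laplacians $L_{C_i}$ of a directed/undirected cycle $C_i$ (giving family 2). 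Translating $L_{S_i}$ and $L_{C_i}$ back through $P_i = \mathbf{1}-\delta M_i$ yields precisely the two families of stochastic matrices in the statement.

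The main obstacle is the ray-classification step: one must show that the cone of symmetric matrices with nonnegative off-diagonal entries and zero column sums has extreme rays given \emph{only} by the star-Laplacians and cycle-Laplacians, and nothing else. This is where \cite{cruse1975note} does the real work — essentially a graph-theoretic argument that a symmetric nonnegative "conductance" matrix with the stated constraints decomposes into stars and cycles, and that these are extreme (not further decomposable). I would cite that result and verify the dictionary: (i) a star $S_i \leftrightarrow A_i\setminus S_i$ Laplacian, suitably normalized, gives $\sum_{a_i'}P_i(a_i',a_i)u_i(a_i',a_{-i}) = \frac{1}{|A_i\setminus S_i|}\sum_{a_i'\in A_i\setminus S_i}u_i(a_i',a_{-i})$ for $a_i\in S_i$ and $u_i(a_i,a_{-i})$ otherwise; (ii) a cycle $C_i$ Laplacian gives $\frac12(u_i(a_{i(\ell-1)},a_{-i})+u_i(a_{i(\ell+1)},a_{-i}))$ for $a_i = a_{i\ell}\in C_i$ and $u_i(a_i,a_{-i})$ otherwise — matching \eqref{con:semicoarse} exactly after dividing by $\delta$. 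A minor point to handle carefully is the degenerate overlap between the two families when $|S_i|$ or $|C_i|$ is small (a $2$-cycle is a transposition, which also arises as a degenerate star), and the edge case $S_i = \emptyset$ or $S_i = A_i$ excluded by "proper subset"; these should be absorbed without changing the set of generated inequalities.
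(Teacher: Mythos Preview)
Your overall strategy—reduce to the extreme rays of the polyhedral cone of admissible $(Q_i,q_i)$—is the right one, but the execution breaks at the elimination step. Constraint \eqref{con:preserve} reads $\sum_{a'_i}\bigl(Q_i(a'_i,a_i)+q_i(a'_i)\bigr)=0$ for each \emph{column} index $a_i$; it forces every column sum of $Q_i$ to equal $-\sum_{a'_i}q_i(a'_i)$, but it does \emph{not} determine the individual entries $q_i(a'_i)$ from $Q_i$. Your identity $q_i(a'_i)=-\sum_{a_i}Q_i(a'_i,a_i)$ therefore fails, and with it the reduction to a cone of symmetric matrices. Indeed the matrix $M_i(a'_i,a_i)=-(Q_i(a'_i,a_i)+q_i(a'_i))$ is \emph{not} symmetric in general: $M_i(a'_i,a_i)-M_i(a_i,a'_i)=q_i(a_i)-q_i(a'_i)$, which vanishes only when $q_i$ is constant. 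Family (1) arises precisely from nonconstant $q_i$—the associated stochastic matrices are visibly asymmetric (actions in $S_i$ map into $A_i\setminus S_i$ but not conversely)—so they cannot be recovered as ``star Laplacians'' inside any symmetric cone. Separately, the extreme rays of the cone of symmetric, zero-column-sum, nonnegative-off-diagonal matrices are just the single-edge Laplacians ($2$-cycles), not stars or longer cycles; Cruse's theorem supplies a convenient generating set, not the ray classification you assert.

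The paper's proof treats $q_i$ differently. After quotienting by the lineality direction $(-\Sigma\Sigma^T,\Sigma)$ so that $q_i\ge 0$ with $\min q_i=0$, it iteratively peels off conic multiples of explicit rays $(R^s_i,r^s_i)$—one for each positive value taken by $q_i$—verifying that each such ray satisfies \eqref{con:preserve}--\eqref{con:symmetry} and corresponds to a subset deviation of type (1). Only once $q_i$ has been driven to zero does one land in the symmetric case, where the residual $(Q''_i,0)$ corresponds via Proposition~\ref{prop:linsubset} to a symmetric doubly stochastic matrix, and Cruse's decomposition into averages $\tfrac12(P+P^T)$ of permutation matrices (followed by the cycle decomposition of each permutation) yields the deviations of type (2). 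The idea you are missing is this two-stage split: first strip off the $q_i$-contribution using family-(1) generators, then apply \cite{cruse1975note} to the symmetric remainder.
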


We remark that Theorem \ref{thm:strategy-def} shows that the set of semicoarse correlated equilibrium of a game are defined by a superexponential\footnote{For each player $i$, there are $\sum_{k = 2}^{|A_i|} \binom{|A_i|}{k} \cdot \frac{(k-1)!}{2}$ cycles of actions; the index $k$ starts at $2$ as we allow for cycles of length $2$.} number of constraints; and even restricting attention to the arguably ``more interesting'' transformations considered in Theorem \ref{thm:strategy-def}.(1), we are left with $2^{|A_i|}-2$ equilibrium constraints for each player $i$. Unlike both correlated and coarse correlated equilibria, this representation of the equilibrium set is totally unsuited for numerical evaluation.

However, the set of semicoarse equilibria of a game does turn out to have low extension complexity, which we show in Appendix \ref{sec:representation-proofs}. The high level idea is that each $(Q_i,q_i)$ are defined by $O(|A_i|^2)$ variables subject to $O(|A_i|^2)$ constraints, which allows us to obtain a succint LP formulation via taking the dual of the inner LP in the associated feasibility problem. 

\begin{theorem}\label{thm:short-extension}
    For a normal-form game $\Gamma$, suppose that $(\sigma,\omega,\rho)$ satisfies the linear (in)equalities:
    \begin{align}
        \sum_{a \in A} \sigma(a) & = 1, \label{cons:prob} \tag{$\gamma$} \\
        \forall \ i \in N, \ a_i \neq a'_i, \gamma_i(a'_i,a_i) + \rho_i(a_i,a'_i) - \rho_i(a'_i,a_i) + \sum_{a_{-i} \in A_{-i}} \sigma(a) \cdot \left( u_i(a'_i,a_{-i}) - u_i(a) \right) & = 0 \label{cons:Q-off} \tag{$Q_i(a'_i,a_i)$}\\
        \forall i \in N, \forall a'_i \in A_i, \sum_{a_i \neq a'_i} \gamma_i(a'_i,a_i) + \sum_{a \in A} \sigma(a) \cdot \left( u_i(a'_i,a_{-i}) - u_i(a) \right) & = 0 \label{cons:q} \tag{$q_i(a'_i)$} \\
        \sigma,(\gamma_i)_{i \in N} & \geq 0.
    \end{align}
    Then $\sigma$ is a semicoarse correlated equilibrium of $\Gamma$. As a consequence, the set of semicoarse correlated equilibrium of a normal-form game can be represented via an extension which is polynomial in the size of the normal-form game, with $|A| + \sum_{i \in N} \frac{3}{2}|A_i|(|A_i|-1)$ variables and $1 + \sum_{i \in N} |A_i|^2$ constraints.
\end{theorem}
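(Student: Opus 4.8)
The plan is to recognise the defining system of semicoarse equilibrium, separately for each player, as the assertion that a small linear program in the matrix/vector pair $(Q_i,q_i)$ is \emph{bounded}, and then to replace "bounded" by "dual feasible" via LP duality — the dual multipliers become exactly the auxiliary variables $\gamma_i,\rho_i$ of the claimed extension. Concretely, fix $\sigma\in\Delta(A)$ and a player $i$; by Definition~\ref{def:semicoarse}, $\sigma$ satisfies the semicoarse constraints for $i$ if and only if the value of
\[ \max_{(Q_i,q_i)}\ \sum_{a\in A}\sigma(a)\sum_{a'_i\in A_i}\bigl(Q_i(a'_i,a_i)+q_i(a'_i)\bigr)u_i(a'_i,a_{-i})\ \text{ s.t. }(\ref{con:preserve}),(\ref{con:tangency}),(\ref{con:symmetry}) \]
is at most $0$. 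The constraints (\ref{con:preserve})--(\ref{con:symmetry}) are all homogeneous, so the feasible set is a polyhedral cone containing the origin; the value is therefore either $0$ (attained at the origin, where the objective vanishes) or $+\infty$. Hence $\sigma$ is semicoarse for $i$ precisely when this program is bounded, and since it is always feasible, by LP strong duality precisely when its dual is feasible.

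The second step is to write down this dual explicitly. Every primal variable $Q_i(a'_i,a_i),q_i(a'_i)$ is unrestricted in sign, so each dual constraint is an equality; and every right-hand side in (\ref{con:preserve})--(\ref{con:symmetry}) is zero, so the dual objective is identically $0$ and only feasibility matters. Introduce a free multiplier for each preservation equation (\ref{con:preserve}), a nonnegative multiplier $\gamma_i(a'_i,a_i)\ge 0$ for each tangency inequality (\ref{con:tangency}) (indexed by ordered pairs $a'_i\neq a_i$), and an antisymmetric multiplier $\rho_i$ for the symmetry equations (\ref{con:symmetry}). The key observation is that the dual constraint coming from the \emph{diagonal} variable $Q_i(a'_i,a'_i)$ forces the preservation-multiplier at $a'_i$ to equal $\sum_{a:\,a_i=a'_i}\sigma(a)u_i(a)$; substituting this closed form into the remaining dual constraints — those indexed by the off-diagonal $Q_i(a'_i,a_i)$ and by $q_i(a'_i)$ — eliminates the preservation-multiplier entirely and yields, after fixing signs, exactly the equalities (\ref{cons:Q-off}) and (\ref{cons:q}). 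Adjoining (\ref{cons:prob}) and the sign restrictions $\sigma,(\gamma_i)\ge0$ reproduces the system in the statement. Weak duality then gives the first assertion (any feasible triple yields a semicoarse $\sigma$), and strong duality — applicable because the inner LP is always feasible and, when bounded, has finite value $0$ — gives the converse, so that the projection of the system onto the $\sigma$-coordinates is exactly the set of semicoarse equilibria. For the size bound: the variables are $\sigma$ ($|A|$ of them) together with, per player $i$, the $|A_i|(|A_i|-1)$ off-diagonal entries of $\gamma_i$ and the $\binom{|A_i|}{2}$ free entries of the antisymmetric $\rho_i$, i.e.\ $|A|+\sum_{i\in N}\tfrac32|A_i|(|A_i|-1)$ in total; the equality constraints are (\ref{cons:prob}) plus, per player, the $|A_i|(|A_i|-1)$ equations (\ref{cons:Q-off}) and the $|A_i|$ equations (\ref{cons:q}), i.e.\ $1+\sum_{i\in N}|A_i|^2$ in total (sign restrictions not counted).

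The main obstacle is the bookkeeping of the dual: correctly pairing free versus sign-constrained primal variables and constraints with their duals, and — crucially — noticing that the diagonal variables $Q_i(a'_i,a'_i)$ pin down and thereby eliminate the probability-preservation multiplier. Without that elimination one still obtains a valid polynomial-size extension, but with an extra $\sum_i|A_i|$ multiplier variables, which would not match the stated counts. The remaining points — that the feasible set of $(Q_i,q_i)$ is a cone so the inner value is $0$ or $+\infty$, and that strong duality applies since that LP is always feasible and finite-valued whenever bounded — are routine but should be recorded to make the "feasible extension $\iff$ semicoarse equilibrium" equivalence airtight.
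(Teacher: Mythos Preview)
Your proposal is correct and follows essentially the same route as the paper: dualise the inner cone LP in $(Q_i,q_i)$ for each player, then use the diagonal $Q_i(a'_i,a'_i)$ dual constraint to eliminate the preservation multiplier $\mu_i(a_i)=\sum_{a_{-i}}\sigma(a)u_i(a)$, leaving exactly (\ref{cons:Q-off}) and (\ref{cons:q}). Your explicit remark that the feasible cone forces the inner value to be $0$ or $+\infty$, and your invocation of strong duality for the converse direction, are if anything slightly more careful than the paper's own write-up.
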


Whereas the extension in Theorem \ref{thm:short-extension} is tractable to evaluate in general, we find that it is prone to numerical instability when input into a solver. However, for an objective $d : A \rightarrow \mathbb{R}$, the LP
\begin{align}
    \max_{\sigma,\omega,\rho} \sum_{a \in A} \sigma(a) \cdot d(a) \textnormal{ subject to (\ref{cons:prob}),(\ref{cons:Q-off}),(\ref{cons:q})} \label{opt:primal}
\end{align}
has the same value as the dual Lyapunov function estimation problem,
\begin{align}
    \min_{\gamma,(Q_i,q_i)_{i \in N}} \gamma \textnormal{ subject to } & \textnormal{(\ref{con:preserve}), (\ref{con:tangency}), (\ref{con:symmetry}) for each $(Q_i,q_i)$, and } \label{opt:lyapunov-LP} \\
    \gamma + \sum_{i \in N} \sum_{a'_i \in A_i} \left( Q_i(a'_i,a_i) + q_i(a'_i) \right) \cdot u_i(a'_i,a_{-i}) & \geq d(a) \ \forall \ a \in A, \label{cons:lyapunov-cond} \tag{$\sigma(a)$}
\end{align}
which turns out to be better behaved in practice. The form of the dual problem is, of course, in line with the duality between a primal problem for worst-case bounds for a game's (coarse) correlated equilibria and its dual best-fit Lyapunov function problem, as remarked in \cite{ahunbay2024local}.

\section{Semicoarse Equilibrium of (Simple) Auction Games}

Examples \ref{ex:bad-game}, \ref{ex:good-proof} show that the notion of a semicoarse equilibrium can indeed provide sharper guarantees for the gradient dynamics of a normal-form game, compared to those of a coarse correlated equilibrium. However, the game we considered there is rather artificial; which raises the question of whether there are interesting settings wherein semicoarse equilibrium constraints allow us sharper predictions, and if so, how tight these predictions are. In this section, we shall address this question, by considering two settings. In the complete information model of Bertrand competition with linearly decreasing demand, we will see that the semicoarse equilibrium of the game is necessarily close to the equilibrium of the game whenever there are at least two firms of lowest marginal cost. Meanwhile, our analysis of the complete information first-price auction will highlight a special property of semicoarse equilibria -- sensitivity to the ``density'' of actions -- which will motivate us to consider how our equilibrium concept generalises to arbitrary polyhedral action sets.

\subsection{Convergence to Equilibrium in Bertrand Competition}

We consider a discretised Bertrand competition \cite{bertrand1883review} game with complete information and linear costs for firms. In our setting, $N$ firms who supply an identical item compete in a market. Each firm $i$ has marginal cost $c_i/n \in \{0,1/n,2/n,...,1\}$, and they post a price $p_i \in \{0,1/n,2/n,...,1\} = A_i$ simultaneously. Given the price vector $(p_i)_{i \in N}$, the demand is captured by the firm posting the lowest price, with ties broken uniformly at random. Therefore, the utilities are given
$$u_i(p) = \begin{cases}
    \frac{1}{|\arg\min_{j \in N} p_j |}  \cdot (p_i-c_i/n) D(p_i) & p_i \in \arg\min_{j \in N} p_j  \\
    0 & \textnormal{otherwise.}
\end{cases}$$
Here, $D : [0,1] \rightarrow (0,\infty)$ is the demand function, generally taken to be non-increasing in the price. If a firm $i$ has cost $c_i/n$, we will also refer to $(p-c_i/n)D(p)$ as the associated profit function.

\begin{example}\label{ex:bertrand-bad-CCE}
    Consider the case with two firms with $0$ marginal cost, where the demand is decreasing linearly in price, $D(p_i) = 1-p_i$ up to normalisation. With continuous action sets $p_i \in [0,1]$, the unique equilibrium of the game is in pure strategies, where both firms post price $p_i = 0$. However, the continuous game does not have a unique coarse correlated equilibrium. Following a construction similar to \cite{FLN16}, consider the case when both firms post the same price $p$, following the distribution on $[0,1/4]$ with probability density function $f(p) = 1/\sqrt{p}$. In this case, each firm makes profit
    $\int_0^{1/4} dp \cdot f(p) \cdot p(1-p)/2 = 17/480 \simeq .035417...$ 
    On the other hand, if buyer $i$ deviates uniformly to bidding $p_i > 1/4$, they obtain $0$ utility. Meanwhile, deviating to a price $p_i \in [0,1/4]$, their utility equals
    $p_i(1-2\sqrt{p_i})(1-p_i)$, with maximum $\simeq .0330892... $
    
    Then, for the size $n$ discretisation of the game, we consider in (\ref{opt:primal}) maximising the expected square distance from the equilibrium bid, 
    $\max_{\sigma \geq 0} \sum_{p_1,p_2 \in \{0,1/n,2/n,...,1\}} \sigma(p_1,p_2) \cdot \left( p_1^2 + p_2^2 \right)$, 
    over the set of the game's coarse and semicoarse correlated equilibria. Figure \ref{fig:bertrand} illustrates our observation; the discretised game \emph{``inherits''} poorly behaved coarse correlated equilibria of its continuous counterpart, whereas its distance maximal semicoarse correlated equilibrium assigns probability $1$ on the strategy pair $(p_1,p_2) = (1/n,1/n)$.

    \begin{figure}
        \begin{center}
            \includegraphics[width=0.49\textwidth]{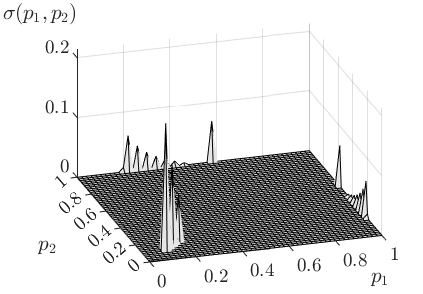}   
            \includegraphics[width=0.49\textwidth]{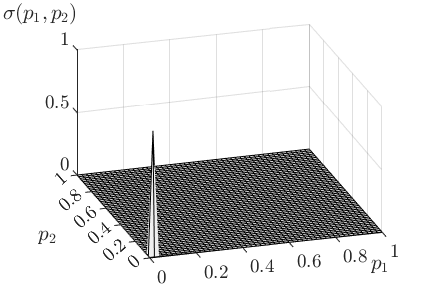}
        \end{center}
        \caption{Distributions $\sigma$ which are square price maximising over the set of coarse (left) and semicoarse (right) correlated equilibria, for the discretisation with $n = 50$.}\label{fig:bertrand}
    \end{figure}
\end{example}

The observation of Example \ref{ex:bertrand-bad-CCE} turns out to be generalisable. In particular, we obtain results that parallel those of \cite{deng2022nash} regarding mean-based learning and convergence to Nash equilibria\footnote{In this version, we have opted to not discuss precise convergence rates to retain focus on the characterisation \& validity of the equilibrium concept. The reader, if interested in the derivation of exact time-average / finite iterate convergence guarantees, is referred to Appendix \ref{sec:explicit-convergence}.}. The high level implication is that when all firms employ projected gradient ascent with the same step sizes, under suitable concavity assumptions on the profit function, when the number of firms with minimum marginal cost is $\geq 2$ we have convergence to Nash equilibrium. 

\begin{theorem}\label{thm:bertrand-unique}
    Suppose there exists at least two firms with minimum marginal cost $c/n \in \{0,1/n,...,1-1/n\}$, and that $(p - c/n) D(p)$ is strictly concave over $[c/n,1]$. Then in every semicoarse equilibrium $\sigma$ of the discretised Bertrand competition game with bidding sets $\{0,1/n,...,1\}$, each outcome $(p_i)_{i \in N}$ assigned positive probability $\sigma(p) > 0$ is a pure strategy Nash equilibrium of the game, with at least two firms of minimum marginal cost posting prices in $c/n + \{0,1/n\}$.
\end{theorem}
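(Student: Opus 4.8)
The plan is to use the dual/primal-extension framework from Theorem \ref{thm:short-extension} (equivalently, the Lyapunov formulation \ref{opt:lyapunov-LP}), but since we want a structural statement about \emph{every} semicoarse equilibrium rather than a bound on one LP objective, I would argue directly from the inequality family in Theorem \ref{thm:strategy-def}. Fix a semicoarse equilibrium $\sigma$ and let $\tilde\sigma$ be its induced distribution on outcomes (here $\tilde\sigma = \sigma$ since we are already in the normal-form game). The strategy is: (i) use the cycle-type modifications of Theorem \ref{thm:strategy-def}.(2) of length $2$ — i.e. swap two prices $p_i, p_i'$ with probability $1/2$ each — to extract the ``correlated-equilibrium-like'' pairwise inequalities, and combine these with strict concavity of the profit function to force each firm, conditionally on the prices of the others, to be best-responding and in fact to play a \emph{pure} best response; (ii) use the subset-type modifications of Theorem \ref{thm:strategy-def}.(1) with $S_i = \{p_i : p_i > p^*\}$ (deviate from any high price to the uniform distribution on low prices) to rule out the bad ``symmetric high price'' correlated outcomes that survive as CCE (cf. Example \ref{ex:bertrand-bad-CCE}); (iii) finally, a counting/undercutting argument to pin down that at least two minimum-cost firms post prices in $c/n + \{0, 1/n\}$, and that if the minimum price realised is $(c+1)/n$ then \emph{all} minimum-cost firms post it.

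For step (i): write $\pi(p) := (p - c/n)D(p)$ for the minimum-cost profit function, strictly concave on $[c/n,1]$, so on the grid it is strictly concave in the discrete sense and in particular has a unique maximiser $p^\dagger$ on $\{c/n,\dots,1\}$, with $\pi$ strictly increasing up to $p^\dagger$ and strictly decreasing after. The key observation is that for a minimum-cost firm $i$, conditional on any realisation $p_{-i}$, firm $i$'s payoff $u_i(p_i,p_{-i})$ as a function of $p_i$ is: $0$ for $p_i > \min_j p_j$; a ``tie-share'' $\tfrac{1}{k}\pi(p_i)$ when $p_i$ equals the incumbent minimum (with $k$ the number of ties); and $\pi(p_i)$ when $p_i$ strictly undercuts. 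Applying the $\Phi_{SEMI}$ inequality for the $2$-cycle swapping two adjacent actions (and more general subset deviations) and summing/telescoping, strict concavity should force $\sigma$ to put zero weight on any profile where some minimum-cost firm is not playing a payoff-maximising pure response against the realised $p_{-i}$; the place where the argument has teeth is precisely the tie-breaking, which is why two minimum-cost firms are needed (with one firm the game is like a monopoly and many prices can be supported, exactly as in the two-vs-three-buyer distinction of \cite{deng2022nash}). Concretely I expect to show: in the support of $\sigma$, the minimum price is either $c/n$ or $(c+1)/n$; at least two minimum-cost firms realise it (so it is genuinely a Bertrand tie, not a profitable undercut); and no firm can strictly improve by undercutting — which, combined with optimality for higher-price firms (they already get $0$ and cannot do better since undercutting a price $\le (c+1)/n$ with marginal cost $> c/n$ or no cost advantage is unprofitable, using concavity again), gives the Nash property.

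For step (ii), the CCE-killing step: the distributions that plague CCE are the ones where all firms are symmetrically mixing over a range of prices bounded away from $c/n$, as in Example \ref{ex:bertrand-bad-CCE}. The modification $P_i$ of type (1) with $S_i = \{p_i : p_i \ge p^*\}$ sends every ``too-high'' price to the uniform distribution on $\{0,1/n,\dots,p^*-1/n\}$; the semicoarse inequality $\sum_a \sigma(a)[(\sum_{a_i'}P_i(a_i',a_i)u_i(a_i',a_{-i})) - u_i(a)] \le 0$ then says this mass-shifting deviation is not profitable on average, and a direct estimate — using that against a high realised price profile the deviating firm captures the whole market with probability $1$ by posting below $p^*$, at profit roughly $\pi$ of a low price, while its status-quo payoff is a small tie-share — contradicts the assumption that $\sigma$ places noticeable weight on such configurations. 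Iterating (or choosing $p^*$ optimally) squeezes the support down to prices in $c/n + \{0,1/n\}$ at the bottom.

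The main obstacle I anticipate is step (i)'s tie-breaking bookkeeping: the payoff $u_i(\cdot,p_{-i})$ is not concave as a function of $p_i$ (it jumps down to $0$ past the minimum, and has the $1/k$ kink at a tie), so I cannot simply invoke a one-line concavity argument — I have to exploit that the \emph{allowed} deviations in $\Phi_{SEMI}$ are exactly swaps along cycles and ``collapse a subset onto its complement'', and check that these suffice to detect every non-best-response in the support despite the discontinuity. The clean way is probably to argue by contradiction: if $\sigma$ gave positive weight to an outcome where minimum-cost firm $i$ plays $p_i$ but some $p_i'$ strictly dominates it against the realised $p_{-i}$, build an explicit $P_i$ (a single $2$-cycle $(p_i,p_i')$, or the subset $S_i=\{p_i\}$) that is profitable in expectation, contradicting \eqref{con:semicoarse}; strict concavity is what guarantees the profitable deviation target is reachable by such a simple $P_i$. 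Once every support outcome is a profile of mutual pure best responses with a genuine minimum-cost tie at a low price, it is a pure-strategy Nash equilibrium essentially by definition, and step (iii) is a short undercutting computation.
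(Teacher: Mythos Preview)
Your step (i) has a genuine gap. The 2-cycle modification $(p_i, p_i')$ from Theorem~\ref{thm:strategy-def}.(2) yields only the \emph{symmetric} constraint
\[
\sum_{p_{-i}} \sigma(p_i, p_{-i})\bigl(u_i(p_i', p_{-i}) - u_i(p_i, p_{-i})\bigr) + \sum_{p_{-i}} \sigma(p_i', p_{-i})\bigl(u_i(p_i, p_{-i}) - u_i(p_i', p_{-i})\bigr) \le 0,
\]
i.e.\ the \emph{sum} of the two directional CE constraints, not either one. If both $p_i$ and $p_i'$ occur in the support of $\sigma$, a profitable swap $p_i \to p_i'$ at some profiles can be cancelled by an unprofitable swap $p_i' \to p_i$ at others, so you cannot conclude conditional best response. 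Your fallback, the subset $S_i = \{p_i\}$, sends $p_i$ to the uniform distribution on \emph{all} of $A_i \setminus \{p_i\}$, including prices below cost and prices well above the opponents' minimum, so there is no reason this is profitable either. The semicoarse constraints simply do not hand you pairwise CE-type inequalities, and the ``contradiction from a single targeted deviation'' plan does not go through.

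Step (ii) has a related flaw: with $S_i = \{p_i : p_i \ge p^*\}$ the deviation target is the uniform distribution on $\{0, 1/n, \ldots, p^* - 1/n\}$, which contains prices strictly below $c/n$. Against a high-price profile the deviating firm captures the market but sometimes at a loss, so the inequality need not have the sign you want.

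The paper's proof is purely dual and uses only type-(1) deviations, with the sets chosen the other way around: for each $k$, the modification $\phi_k$ maps every price \emph{outside} $B^k = \{(c+1)/n, \ldots, (c+k)/n\}$ to the uniform distribution \emph{on} $B^k$. The target is thus always a window of strictly profitable prices just above cost; both sub-cost prices and high prices are redirected there. One then exhibits nonnegative multipliers $\epsilon_k$ for these $\phi_k$ (plus, for each firm $i$ of higher cost, a single ``deviate away from prices below $c_i/n$'' modification with multiplier $\delta_i > 0$) such that the weighted sum of deviation gains is $\ge 0$ at every profile and $> 0$ at every non-Nash profile, i.e.\ a feasible dual solution to (\ref{opt:lyapunov-LP}) with value $0$. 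The only delicate case is when all minimum-cost firms tie at some price $(c+\ell)/n$ with $\ell > 1$; strict concavity of $(p - c/n)D(p)$ enters there through Jensen's inequality, guaranteeing that the coefficient of $\epsilon_\ell$ in the incremental gain is strictly positive, which lets the $\epsilon_k$ be constructed inductively up to the monopoly price. No cycle-type deviations are used.
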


The proof of Theorem \ref{thm:bertrand-unique} proceeds by constructing an explicit solution to the dual Lyapunov function estimation problem (\ref{opt:lyapunov-LP}), where $f(p) = 0$ in the case two firms post prices in $c/n + \{0,1/n\}$ and $= 1$ otherwise. We consider, for the firms with marginal cost $c/n$, action transformations $\phi_k$ of the form of those in Theorem \ref{thm:strategy-def}.(1), where each $\phi_k$ maps actions \emph{not} in $B^k = \{c+1/n,...,c+k/n\}$ to the uniform distribution on $B^k$. In turn, for firms with marginal cost $> c/n$, we only impose deviating away from the bids strictly lesser than their marginal cost. A brief case analysis shows that the interesting price vectors have multiple firms of minimum marginal cost who post the same minimum price $> (c+1)/n$; strict concavity of the profit function then allows us to inductively define dual multipliers for each $\phi_k$. The details are deferred to Appendix \ref{sec:bertrand-proof}.

We are able to conclude a slightly weaker result when the function $(p-c/n)D(p)$ is merely weakly concave. This covers the case when the demand is \emph{inelastic}, i.e. $D(p)$ is constant, which implies that $(p-c/n)D(p)$ is affine-linear. The key observation is that the strict concavity assumption enters our proof only in the case when there are exactly two firms with minimum marginal cost through applications of Jensen's inequality; and hence the identical line of arguments works for weak concavity when there are at least \emph{three} firms with minimum marginal cost.

\begin{theorem}\label{thm:bertrand-unique-three}
    Suppose there exists at least \underline{three} firms with minimum marginal cost $c/n \in \{0,1/n,...,1-1/n\}$, and that $(p - c/n) D(p)$ is \underline{weakly concave} over $[c/n,1]$. Then in every semicoarse equilibrium $\sigma$ of the discretised Bertrand competition game with bidding sets $\{0,1/n,...,1\}$, each outcome $(p_i)_{i \in N}$ assigned positive probability $\sigma(p) > 0$ is a Nash equilibrium.
\end{theorem}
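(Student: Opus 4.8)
The plan is to mirror the proof of Theorem~\ref{thm:bertrand-unique} almost verbatim, exhibiting an explicit feasible point of the dual Lyapunov LP~(\ref{opt:lyapunov-LP}) for the objective $f(p)$ that equals $0$ when $p$ is a pure Nash equilibrium and $1$ otherwise, and then invoking LP duality (together with the polyhedral representation of Theorem~\ref{thm:short-extension}) to force every semicoarse equilibrium $\sigma$ onto the zero set of $f$. The conclusion is genuinely weaker than in Theorem~\ref{thm:bertrand-unique} — we claim only ``a Nash equilibrium'' rather than the structural refinement that two minimum-cost firms post in $c/n + \{0,1/n\}$ — because under merely weak concavity (in particular for inelastic demand, where the profit is affine) the discretised game admits additional pure Nash equilibria with a higher common minimum price, so the zero set of $f$ must be enlarged to all pure Nash equilibria; part of the construction is checking that this enlarged set is exactly the pure Nash equilibria, so that ``$f(p) = 0$'' and ``$p$ is a Nash equilibrium'' coincide.

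For the dual construction I reuse the same two families of strategy modifications as in Theorem~\ref{thm:bertrand-unique}, all drawn from Theorem~\ref{thm:strategy-def}.(1): for each firm of minimum marginal cost $c/n$, the transformation $\phi_k$ that sends every price outside $B^k = \{(c+1)/n,\ldots,(c+k)/n\}$ to the uniform distribution on $B^k$ and fixes the prices in $B^k$; and for each firm of marginal cost $>c/n$, the transformation pushing any price strictly below its marginal cost up to that marginal cost. One then runs the same case analysis over the type of outcome. Outcomes in which some firm posts below cost, or in which a minimum-cost firm can profitably undercut a strictly larger minimum price, are handled as before by a single transformation with a fixed positive multiplier. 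The delicate case is the one in which $m \geq 2$ firms of minimum marginal cost tie at a common minimum price $q$; there the multipliers $\lambda_k$ for the $\phi_k$ are defined inductively in $k$ using concavity of $(p-c/n)D(p)$ to control the telescoping of after-deviation profits, and this induction is the unique place in the proof of Theorem~\ref{thm:bertrand-unique} where \emph{strict} concavity enters, through a Jensen-type comparison of $(q-c/n)D(q)$ with the averaged profits $\tfrac1k\sum_{j=1}^{k}(j/n)D((c+j)/n)$.

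The crux is to re-examine exactly that inequality and confirm it survives the weakening to weak concavity once $m \geq 3$. The mechanism is an $m$-scaling asymmetry: when the outcome has $m$ minimum-cost firms tied at a common price $q > (c+1)/n$, summing the per-firm regret contributions of $\phi_k$ over all $m$ such firms multiplies the positive (undercutting) part $\tfrac1k\sum_{j=1}^k (j/n)D((c+j)/n)$ by a factor $m$, while the negative tie-split part $\tfrac1m(q-c/n)D(q)$ has its $\tfrac1m$ cancelled and so contributes the $m$-independent amount $(q-c/n)D(q)$; the net effect is a positive slack of order $m-2$ over the $m=2$ case, which is precisely what lets the Jensen comparison be taken with a weak rather than strict inequality. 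With $m=2$ this slack vanishes, which is exactly why Theorem~\ref{thm:bertrand-unique} needs strict concavity at this point. Everything else — inductive nonnegativity of the $\lambda_k$, validity of the Lyapunov condition~(\ref{cons:lyapunov-cond}) at every outcome, and the boundary subcases $q \in c/n + \{0,1/n\}$ where $f = 0$ and the chosen transformations fix the relevant prices so the inequality collapses to $\gamma \geq 0$ — carries over unchanged. I expect the main obstacle to be bookkeeping rather than a new idea: carefully isolating the Jensen step inside the Appendix~\ref{sec:bertrand-proof} argument and verifying that strictness of concavity is invoked nowhere else, so that the $m \ge 3$ slack genuinely suffices everywhere.
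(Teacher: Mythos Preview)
Your proposal is correct and follows essentially the same approach as the paper: the paper's own argument for Theorem~\ref{thm:bertrand-unique-three} is precisely the observation that strict concavity enters the proof of Theorem~\ref{thm:bertrand-unique} only through the Jensen step bounding the coefficient of $\epsilon_\ell$, and that with $m \geq 3$ the factor $m$ multiplying the averaged profits yields strict positivity even under weak concavity. Your identification of the $(m-2)$-slack mechanism is exactly the content of the paper's one-paragraph justification; one minor remark is that with $m \geq 3$ the additional pure Nash equilibria you mention do not actually arise (the paper's NE lemma locates them only at $m=2$), so the support characterisation you obtain is in fact identical to that of Theorem~\ref{thm:bertrand-unique}, and the weaker phrasing of Theorem~\ref{thm:bertrand-unique-three} is cosmetic rather than substantive.
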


\subsection{First-Price Auctions and the Gauge-Dependency Problem}\label{sec:first-and-gauge}

In the discretised first-price auction with complete information, there are $N$ buyers who bid on a single-item. Each buyer $i$ has value $v_i/n \in \{0,1/n,...,1\}$, and they simultaneously place bids $b_i \in  \{0,1/n,...,1\} = A_i$. Given the bid vector $(b_i)_{i \in N}$, the buyer with the highest bid wins the item, with ties broken uniformly at random. The buyers' utilities are thus given,
$$u_i(p) = \begin{cases}
    \frac{1}{|\arg\max_{j \in N} b_j |}  \cdot (v_i/n - b_i) & b_i \in \arg\min_{j \in N} p_j  \\
    0 & \textnormal{otherwise.}
\end{cases}$$
Letting $c_i = n-v_i$, $p_i = 1-b_i$ and $D(p) = 1$, the first-price auction is thus equivalent to the Bertrand competition with complete information and inelastic demand. We may thus conclude:

\begin{corollary}\label{thm:fp-unique-three}
    Suppose there exists at least three buyers with maximum valuation $v/n \in \{1/n,...,1\}$. Then in every semicoarse equilibrium $\sigma$ of the discretised first-price auction with bidding sets $\{0,1/n,...,1\}$, each outcome $(b_i)_{i \in N}$ assigned positive probability $\sigma(p) > 0$ is necessarily a pure strategy Nash equilibrium, where at least two buyers of maximum value $v/n$ bid $v/n - \{0,1/n\}$.
\end{corollary}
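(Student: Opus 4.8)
The plan is to derive Corollary \ref{thm:fp-unique-three} directly from Theorem \ref{thm:bertrand-unique-three} via the explicit equivalence already stated in the text. First I would set up the change of variables: given the discretised first-price auction with buyers $i$ of valuation $v_i/n$, define $c_i = n - v_i$, so that $c_i/n \in \{0,1/n,\ldots,1\}$ is a valid marginal cost, and define the bijection $p_i = 1 - b_i$ on the common action set $\{0,1/n,\ldots,1\}$. Under this substitution, a buyer winning with the highest bid $b_i$ corresponds to a firm posting the lowest price $p_i$, the tie-breaking rule is preserved, and the payoff $(v_i/n - b_i)$ becomes $(1 - c_i/n) - (1 - p_i) = p_i - c_i/n$, which is exactly the profit $(p_i - c_i/n)D(p_i)$ with inelastic demand $D \equiv 1$. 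Hence the first-price auction game is isomorphic, as a normal-form game, to a Bertrand competition instance with $D \equiv 1$.

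Next I would observe that the property of being a semicoarse correlated equilibrium is invariant under this isomorphism: relabelling actions by the bijection $b_i \leftrightarrow p_i = 1-b_i$ and reindexing utilities does not change the cone of admissible $(Q_i, q_i)$ (conditions \eqref{con:preserve}, \eqref{con:tangency}, \eqref{con:symmetry} are stated purely in terms of the abstract action sets), and it carries the inequality \eqref{con:semicoarse} for the auction to the corresponding inequality for the Bertrand game. So $\sigma$ is a semicoarse equilibrium of the auction if and only if its pushforward $\tilde\sigma$ under $b \mapsto p = \mathbf{1}\cdot n - b$ (componentwise) is a semicoarse equilibrium of the associated Bertrand game. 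The same bijection identifies pure Nash equilibria of one game with those of the other.

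Then the hypotheses translate cleanly. ``At least three buyers with maximum valuation $v/n$'' becomes ``at least three firms with minimum marginal cost $c/n$'', where $c = n - v$; note $v/n \in \{1/n,\ldots,1\}$ forces $c/n \in \{0,1/n,\ldots,1-1/n\}$, matching the range required by Theorem \ref{thm:bertrand-unique-three}. The profit function is $(p - c/n)D(p) = p - c/n$, which is affine-linear, hence weakly concave over $[c/n,1]$, so the concavity hypothesis of Theorem \ref{thm:bertrand-unique-three} holds. Applying that theorem to $\tilde\sigma$ yields that every price profile $(p_i)$ with $\tilde\sigma(p) > 0$ is a pure Nash equilibrium with at least two minimum-cost firms posting prices in $c/n + \{0,1/n\}$; translating back via $b_i = 1 - p_i$ and $v/n = 1 - c/n$ gives that every bid profile $(b_i)$ with $\sigma(b) > 0$ is a pure Nash equilibrium in which at least two buyers of maximum value bid in $v/n - \{0,1/n\}$, which is the claim.

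There is no substantive obstacle here; the only point requiring a line of care is the bookkeeping in the change of variables — specifically checking that the tie-breaking and the ``lowest price'' vs.\ ``highest bid'' correspondence match up (they do, since $p_i$ is minimal exactly when $b_i$ is maximal), and that the affine profit function indeed counts as weakly concave so that Theorem \ref{thm:bertrand-unique-three} rather than Theorem \ref{thm:bertrand-unique} is the applicable statement (the latter would require strict concavity, which fails for inelastic demand). I would state these two checks explicitly and then invoke Theorem \ref{thm:bertrand-unique-three} as a black box.
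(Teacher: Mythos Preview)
Your proposal is correct and mirrors the paper's own argument, which simply invokes the stated equivalence $c_i = n - v_i$, $p_i = 1 - b_i$, $D \equiv 1$ to reduce the first-price auction to Bertrand competition with inelastic demand and then applies Theorem~\ref{thm:bertrand-unique-three}. One small point of bookkeeping: Theorem~\ref{thm:bertrand-unique-three} as stated only asserts that each outcome in the support is a Nash equilibrium, so the clause ``at least two minimum-cost firms post prices in $c/n + \{0,1/n\}$'' comes not from that theorem directly but from the separate characterisation of pure Nash equilibria when $m \geq 3$ (the lemma preceding the proof of Theorem~\ref{thm:bertrand-unique}); you should cite that rather than attribute it to Theorem~\ref{thm:bertrand-unique-three}.
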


We thus would like to take a closer look at the setting with two buyers, both with a valuation of $1$. Keeping in mind the equivalence with Bertrand competition, we see that if the utilities were instead defined, for an arbitrary $\varepsilon > 0$
$$u_i(p) = \begin{cases}
    \frac{1}{|\arg\max_{j \in N} b_j |}  \cdot (1 - b_i) \cdot (1+\varepsilon \cdot b_i)) & b_i \in \arg\min_{j \in N} p_j  \\
    0 & \textnormal{otherwise,}
\end{cases}$$
then the unique semicoarse correlated equilibrium of the auction would again concentrate the highest two bids amongst the bidders on $1-\{0,1/n\}$; for the equivalent Bertrand competition game, $D(p) = 1 + \varepsilon - \varepsilon p$ in this case, and the profit function $(p+v/n-1)D(p)$ is strictly concave. Restricting attention to the case when $v/n = 1$, we see that 
$(1-b_i)(1+\varepsilon b_i) = 1 - b_i + \varepsilon b_i(1-b_i)$,
which corresponds to shifting the bids by a quadratic polynomial. This hints that if the set of bids $A_i$ were to be replaced by the corresponding bids, all semicoarse equilibrium of the game would result in both buyers concentrating their bids close to the Nash equilibrium.

\begin{figure}
        \begin{center}
            \includegraphics[width=0.49\textwidth]{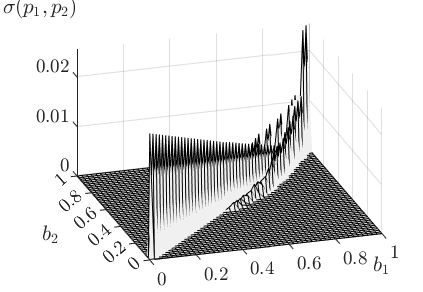}   
            \includegraphics[width=0.49\textwidth]{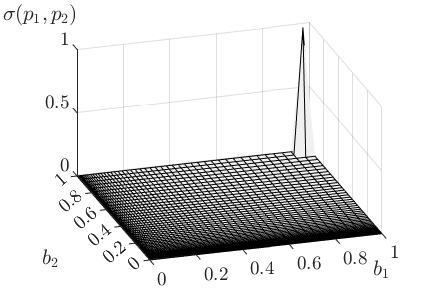}
        \end{center}
        \caption{Distributions $\sigma$ which are square distance maximising from the equilibrium bids $(1,1)$, for the first-price auction with gauges $A^{(1)}$ and $A^{(2)}$ and discretisation size $n = 50$.}\label{fig:gauge}
\end{figure}

In Figure \ref{fig:gauge} we demonstrate this phenomenon, presenting optimal solutions to (\ref{opt:primal}) with its objective $\sum_{b \in A} \sigma(b) \cdot ((1-b_1)^2 + (1-b_2)^2)$ maximising the square distance of the bids from the Nash equilibrium bid $1$. We consider our previous choice of $A^{(1)}_i = \{0,1/n,...,1\}$ as well as the modified set of bids, $A^{(2)}_i = \{0,1/n^2,(2/n)^2,...,1\}$, which corresponds to a choice\footnote{We have found that picking $\varepsilon$ smaller tends to result in numerical instability for the solver. Our dual solution in Appendix \ref{sec:bertrand-proof} provides an explanation; the dual solutions diverge to $\infty$ as $\varepsilon$ is picked closer to $0$.} of $\varepsilon = 1$. We observe that the former results in a semicoarse correlated equilibrium clearly far from the Nash equilibrium in square distance, whereas numerical solutions to the latter problem suggest bids are necessarily concentrated on $\{1,(1-1/n)^2\}$. The action set $A^{(2)}_i$ is, of course, equivalent to the pre-image $F^{-1}(\{0,1/n,...,1\})$ for the function $F(b) = \sqrt{b}$. Thus, $A^{(2)}_i$ may be seen as a discrete, uniformly spaced interval from a different gauge on $[0,1]$; intuitively, modifying the \emph{density} of the discretised action set within the action set of the corresponding continuous auction results in certifiable convergence to near-equilibrium.

\section{Weighted Actions Sets \& Semicoarse Equilibria Under Linear Transformations}

The sensitivity of the set of semicoarse equilibria of the first-price auction with respect to the distribution of the discretised bids motivates us to inspect the consequences of changing the density of each individual action, without actually modifying the actions themselves. Formally, we consider a normal-form game $\Gamma = (N,(A_i)_{i\in N},(u_i)_{i \in N})$, and integer weights $w_i \in \mathbb{N}^{A_i}$ on the actions of each player $i$. The $w$-weighted normal-form game $\Gamma^w = (N,(A^w_i)_{i \in N},(u^w_i)_{i \in N})$ is then such that each $A_i^w$ admits a partition $\cup_{a_i \in A_i} \kappa(a_i)$, defined by the map $\kappa_i : A_i \rightarrow 2^{A_i^w}$, such that $|\kappa(a_i)| = w_i(a_i)$ for every action $a_i$. By the partition property, $\kappa_i$ admits a left-inverse, $\lambda_i : A^w_i \rightarrow A_i$, such that $\lambda\kappa(a_i) = \{a_i\}$ for any $a_i \in A_i$. For each $a^w \in A^w$ the utilities are then defined, $u_i^w(a^w) = u_i(\lambda(a^w))$. Intuitively, we extend our original game by creating $w_i(a_i)$ copies of each action $a_i$. 

In this case, each distribution $\sigma^w \in \Delta(A^w)$ induces a distribution on $\sigma \in \Delta(A)$ by contracting $\sigma^w$ along $\kappa$, i.e. $\sigma(a) = \sum_{a^w \in \kappa(a)} \sigma^w(a^w)$. It is straightforward to verify that if $\sigma^w$ is a (coarse) correlated equilibrium of the weighted game $\Gamma^w$, then $\sigma$ is a (coarse) correlated equilibrium of the original game $\Gamma$. However, it is in general false that the induced distribution $\sigma$ is a semicoarse equilibrium whenever $\sigma^w$ is so. Leveraging the fact that two polyhedra $P,P'$ are equal if and only if $\max_{x \in P} x^Tc = \max_{x \in P'} x^T c$ for any vector $c$, we show the following equilibrium constraints instead apply to $\sigma$ for a given set of weights $w$; the details of the proof, as all others in this section, are deferred to Appendix \ref{sec:scale-proofs}.

\begin{theorem}\label{thm:weight-constraint}
    Suppose that $\sigma^w \in \Delta(A^w)$ is a semicoarse equilibrium of the weighted normal-form game $\Gamma^w$. Then the induced distribution $\sigma(a) = \sum_{a^w \in \kappa(a)} \sigma^w(a^w)$ on $A$ satisfies the equilibrium constraints
    \begin{equation}
        \sum_{a \in A} \sigma(a) \cdot \left[ \left( \sum_{a'_i \in A_i} P_i(a'_i,a_i) 
 \cdot u_i(a'_i,a_{-i})\right) - u_i(a) \right] \leq 0 \ \forall \ P_i \in \mathbf{P}(A_i),
    \end{equation}
    where $\mathbf{P}(A_i)$ is the set of left stochastic matrices corresponding to the strategy modifications:
    \begin{enumerate}
        \item For a proper subset $S_i \subset A_i$, each action $a_i \in S_i$ is transformed to the distribution $F$ on $A_i \setminus S_i$, such that $F(a'_i) \propto w_i(a'_i)$ for each $a'_i \in A_i \setminus S_i$. Meanwhile, for $a_i \in S_i \setminus A_i$, $a_i \mapsto a_i$.
        \item For a cycle of actions $C_i = \{a_{i1}, a_{i2}, \ldots, a_{ik}\}$, writing $a_{i(k+1)} = a_{i1}$ and $\delta = \min_{\ell} w_i(a_{i\ell})$, each action $a_{i\ell}$ is transformed to the distribution on $\{a_{i(\ell-1)},a_{i\ell}, a_{i(\ell+1)}\}$, where $a_{i\ell}$ has probability $(1-\delta/w_i(a_{i\ell}))$ and the actions $\{a_{i(\ell-1)},a_{i(\ell+1)}\}$ are equally likely. Whereas for $a_i \in A_i \setminus C_i$, $a_i \mapsto a_i$.
    \end{enumerate}
\end{theorem}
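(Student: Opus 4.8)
The plan is to reduce the stated inequalities for $\sigma$, one prescribed matrix $P_i\in\mathbf P(A_i)$ at a time, to a constraint (or a convex combination of constraints) that $\sigma^w$ satisfies by Theorem~\ref{thm:strategy-def} applied to $\Gamma^w$. The bridge is a bookkeeping identity. For a left-stochastic $M$ on $A_i$ write $\Psi_i(\mu,M)=\sum_{a\in A}\mu(a)\big[\sum_{a'_i\in A_i}M(a'_i,a_i)\,u_i(a'_i,a_{-i})-u_i(a)\big]$, and define $\Psi_i^w$ analogously for $\Gamma^w$. I claim that whenever a left-stochastic $M^w$ on $A_i^w$ \emph{block-aggregates} to $M$, meaning $\sum_{(a'_i)^w\in\kappa(a'_i)}M^w((a'_i)^w,a_i^w)=M(a'_i,\lambda_i(a_i^w))$ for all $a_i^w$ and $a'_i$, then $\Psi_i(\sigma,M)=\Psi_i^w(\sigma^w,M^w)$. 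This is immediate from substituting $u_i^w((a'_i)^w,\cdot)=u_i(\lambda_i((a'_i)^w),\cdot)$ into $\Psi_i^w(\sigma^w,M^w)$, regrouping the inner sum over $(a'_i)^w$ by blocks $\kappa(a'_i)$ so the summand depends on $a^w$ only through $\lambda(a^w)$, and using $\sum_{a^w\in\kappa(a)}\sigma^w(a^w)=\sigma(a)$. Since $\Psi_i^w(\sigma^w,\cdot)$ is affine in its matrix argument, the same identity lets a \emph{convex combination} of semicoarse constraints of $\Gamma^w$ descend to a single constraint on $\sigma$.

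It therefore suffices, for each of the two prescribed families, to exhibit a lift $P_i^w$ that block-aggregates to $P_i$ and lies in $\mathbf P(A_i^w)$ (or in its convex hull). For the subset transformation with proper $S_i\subsetneq A_i$, take $S_i^w:=\kappa(S_i)$, which is a proper subset of $A_i^w$ since all weights are positive, and let $P_i^w$ be the type-1 transformation of $\Gamma^w$ for $S_i^w$: copies in $S_i^w$ go to the uniform distribution on $A_i^w\setminus S_i^w$, copies outside are fixed. A direct check gives the block-aggregate: a fixed copy outside $S_i^w$ contributes $\mathbb I(a'_i=a_i)$, while a copy in $S_i^w$ contributes $|\kappa(a'_i)|/|A_i^w\setminus S_i^w|=w_i(a'_i)/\sum_{b\notin S_i}w_i(b)$ to block $\kappa(a'_i)$ — exactly $F(a'_i)\propto w_i(a'_i)$. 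Hence $\Psi_i^w(\sigma^w,P_i^w)\le 0$ by Theorem~\ref{thm:strategy-def} for $\Gamma^w$, and the identity gives $\Psi_i(\sigma,P_i)\le0$.

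For the cycle transformation on $C_i=\{a_{i1},\dots,a_{ik}\}$ with $\delta=\min_\ell w_i(a_{i\ell})$, a single cycle in $\Gamma^w$ does not coarse-grain correctly, because it would treat the copies of an action asymmetrically; the remedy is to concatenate and average. For each $\ell$ choose distinct copies $b_\ell^{(1)},\dots,b_\ell^{(\delta)}\in\kappa(a_{i\ell})$ and form the length-$k\delta$ cycle $(b_1^{(1)},\dots,b_k^{(1)},b_1^{(2)},\dots,b_k^{(2)},\dots,b_1^{(\delta)},\dots,b_k^{(\delta)})$; one checks its cyclic neighbours are always one copy of $a_{i(\ell-1)}$ and one copy of $a_{i(\ell+1)}$ (including the wrap-arounds, via $a_{i0}=a_{ik}$, $a_{i(k+1)}=a_{i1}$), so this is a genuine type-2 cycle of $\Gamma^w$. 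Let $P_i^w$ be the average of the corresponding type-2 matrices over the uniform random choice of the tuples $(b_\ell^{(1)},\dots,b_\ell^{(\delta)})$. A fixed copy of $a_{i\ell}$ is used in the cycle with probability $\delta/w_i(a_{i\ell})$, contributing $\tfrac12$ mass each to blocks $\kappa(a_{i(\ell-1)})$ and $\kappa(a_{i(\ell+1)})$, and is fixed otherwise; averaging yields block-aggregate $(1-\delta/w_i(a_{i\ell}))$ on $\kappa(a_{i\ell})$ and $\delta/(2w_i(a_{i\ell}))$ on each of $\kappa(a_{i(\ell\pm1)})$ — precisely the prescribed $P_i$, with the degenerate $k=2$ case (the two neighbour blocks coincide) handled by the same computation. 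Each summand obeys the Theorem~\ref{thm:strategy-def} constraint for $\sigma^w$, so by affinity $\Psi_i^w(\sigma^w,P_i^w)\le0$, and the identity gives $\Psi_i(\sigma,P_i)\le0$.

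The routine ingredients are the two block-aggregation checks and the substitution behind the identity; the single delicate point is the cycle construction — recognising that the correct lift is the $\delta$-fold concatenation of the cycle through distinct representatives, averaged over the representatives, so that the combinatorial fact ``a given copy of $a_{i\ell}$ is used with probability $\delta/w_i(a_{i\ell})$'' reproduces exactly the self-loop weight $1-\delta/w_i(a_{i\ell})$ in the statement. An alternative packaging, matching the hint in the text, is to show directly that the image of the semicoarse polytope of $\Gamma^w$ under the contraction $\sigma^w\mapsto\sigma$ is contained in the polytope defined by the stated inequalities, by comparing support functions along each direction; the lifting argument above is simply the constructive form of that comparison.
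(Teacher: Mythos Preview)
Your proof is correct and takes a genuinely different route from the paper's. The paper argues via LP duality: it characterises the full set of facet-defining inequalities on $\sigma$ by writing down the dual Lyapunov problem for $\Gamma^w$, symmetrising any feasible $(Q^w_i,q^w_i)$ over copies of the same base action (which preserves feasibility and the induced $\tilde h$), and then invoking the ray-decomposition machinery from the proof of Theorem~\ref{thm:strategy-def} to classify the resulting symmetric pairs; the two families in the statement emerge as the contractions of those rays via the representation identity~(\ref{eq:represent}). Your approach instead attacks each prescribed $P_i$ directly, constructing an explicit lift to $\mathbf P(A_i^w)$ (or its convex hull) whose block-aggregate is $P_i$, and applying the bookkeeping identity $\Psi_i(\sigma,M)=\Psi_i^w(\sigma^w,M^w)$. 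This is more elementary --- it never touches the dual LP or the ray classification --- and makes the combinatorics of the cycle case transparent through the $k\delta$-fold concatenated-cycle construction averaged over representatives. The paper's approach, on the other hand, yields more than the statement asks: it shows the listed constraints are essentially \emph{all} the facet-defining inequalities on the image polytope, not merely valid ones. Your final paragraph correctly identifies the relationship between the two arguments.
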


Thus, given non-constant weights on players' action sets, we obtain distinct equilibrium constraints for $\sigma \in \Delta(A)$; we will call $\sigma$ a $w$\textbf{-weighted semicoarse equilibrium} whenever it is induced by the semicoarse equilibrium $\sigma^w$ of the $w$-weighted game. Turning our attention back to the setting of Bertrand competition, we note that the deviations we considered for the firms with lowest cost become strictly utility improving in aggregate, if the weights are strictly increasing in the prices. This allows us to conclude convergence in the setting when only two firms have the lowest marginal cost, though with respect to a slightly larger set of prices.

\begin{proposition}\label{prop:weighted-bertrand}
    Suppose there exists at least two firms with minimum marginal cost $c/n \in \{0,1/n,...,1-1/n\}$, and that $(p - c/n) D(p)$ is weakly concave over $[c/n,1]$. Then if $w : \{0,1/n,...,1\} \rightarrow \mathbb{N}$ is strictly increasing, then in every $w$-weighted semicoarse equilibrium $\sigma$ of the discretised Bertrand competition game with bidding sets $\{0,1/n,...,1\}$, each outcome $(p_i)_{i \in N}$ assigned positive probability $\sigma(p) > 0$ is necessarily a Nash equilibrium. An analogous statement also holds for the first-price auction when $w$ is strictly decreasing, through its equivalence with Bertrand competition with inelastic demand.
\end{proposition}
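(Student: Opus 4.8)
The plan is to mimic the dual-certificate argument used for Theorems~\ref{thm:bertrand-unique} and~\ref{thm:bertrand-unique-three}, but now working with the $w$-weighted semicoarse equilibrium constraints supplied by Theorem~\ref{thm:weight-constraint}. Fix the objective $f(p)=0$ if at least two firms of minimum marginal cost $c/n$ post prices in $c/n+\{0,1/n\}$ (and, when the overall minimum price is $(c+1)/n$, all minimum-cost firms do so), and $f(p)=1$ otherwise. Recall that Theorem~\ref{thm:short-extension} reduces the claim that every positive-probability outcome is of this form to exhibiting a feasible dual solution $\bigl(\gamma,(Q_i,q_i)_{i\in N}\bigr)$ to the Lyapunov LP~(\ref{opt:lyapunov-LP}) with $\gamma=0$; here the relevant deviations $(Q_i,q_i)$ are restricted, by Theorem~\ref{thm:weight-constraint}, to the $w$-weighted versions of the transformations in Theorem~\ref{thm:strategy-def}.(1). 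Concretely, for each minimum-cost firm $i$ and each $k\in\{1,\dots,n-c-1\}$, let $\phi_k^{(i)}$ be the transformation sending every price \emph{outside} the block $B^k=\{(c{+}1)/n,\dots,(c{+}k)/n\}$ to the distribution $F$ on $B^k$ with $F(p)\propto w(p)$, fixing prices inside $B^k$; for firms $j$ with marginal cost $>c/n$ we again only use the deviation that moves mass off prices strictly below their cost. Nonnegative multipliers for these finitely many deviations, plus the probability multiplier $\gamma$, give the dual solution.

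The heart of the argument is the same case analysis as in Appendix~\ref{sec:bertrand-proof}. An outcome is ``bad'' ($f(p)=1$) exactly when the two lowest minimum-cost prices coincide at some value $q>(c+1)/n$, or the minimum price exceeds $(c+1)/n$ with some minimum-cost firm bidding strictly above it. In the bad case the deviation $\phi_k^{(i)}$ with $B^k=\{(c{+}1)/n,\dots,q-1/n\}$ spreads the deviating firm's price mass, currently at $q$, over the block $B^k$ according to the weights $w$; because $w$ is strictly increasing, this weighting places \emph{more} mass on the higher prices in $B^k$, which are the more profitable ones. This is precisely where the strict monotonicity of $w$ substitutes for the strict concavity of the profit function used in Theorem~\ref{thm:bertrand-unique}: even when $(p-c/n)D(p)$ is only affine (inelastic demand), the $w$-weighted average of the profit over $B^k$ strictly exceeds the profit at $q$ divided by the number of firms attaining the minimum, because the weights skew the average upward toward $q-1/n$ and the firm moves from a tie to (essentially) sole possession of a lower, hence strictly more profitable, price. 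Summing the resulting strictly positive gradient-inner-product contributions over the relevant $k$'s and firms, and calibrating the multipliers inductively in $k$ exactly as in the unweighted proof, yields the Lyapunov inequality $\gamma+\sum_i\langle\nabla_i h,\nabla_i u_i\rangle\ge f$ with $\gamma=0$. The first-price auction statement follows by the substitution $c_i=n-v_i$, $p_i=1-b_i$, $D\equiv 1$, under which a strictly increasing weight in $p$ corresponds to a strictly decreasing weight in $b$.

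I expect the main obstacle to be the bookkeeping of the $w$-weighted averages inside the inductive construction of the dual multipliers. In the unweighted proof the telescoping over blocks $B^1\subset B^2\subset\cdots$ relies on clean identities for uniform averages of $(p-c/n)D(p)$; here each block-average is a $w$-weighted average, so one must verify that the positivity margins accumulated at stage $k$ still dominate the deficits that stage $k{+}1$ must absorb, now with weights entering every term. Establishing the right monotone comparison — that a strictly increasing weight makes the block-averaged profit strictly increasing in the block's top price even for merely weakly concave profit — is the one genuinely new estimate; once it is in hand, the rest is a weighted transcription of Appendix~\ref{sec:bertrand-proof}. A secondary check is that the $w$-weighted deviations of Theorem~\ref{thm:weight-constraint}.(1) indeed suffice and that no cyclic deviations from part~(2) are needed, which follows because, as in the unweighted case, only the ``collapse-to-a-block'' transformations are activated with positive multipliers.
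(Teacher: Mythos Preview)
Your proposal follows essentially the same route as the paper's proof: transplant the dual-certificate argument of Theorem~\ref{thm:bertrand-unique} to the $w$-weighted deviations of Theorem~\ref{thm:weight-constraint}.(1), and observe that the strict monotonicity of $w$ turns the weak Jensen inequality (from weak concavity of the profit function) into a strict one, so the coefficient of $\epsilon_\ell$ in the inductive step is strictly positive even when $m=2$. The paper carries this out with exactly the same block deviations $\phi_k$ and the same inductive construction of the multipliers.

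One substantive correction: your objective $f$ is too narrow. You set $f(p)=0$ only on the Nash equilibria of the \emph{strictly} concave case (minimum prices in $c/n+\{0,1/n\}$), but under weak concavity with $m=2$ there can be additional pure Nash equilibria where both minimum-cost firms post $(c+2)/n$ --- this occurs when $D((c+1)/n)=D((c+2)/n)$ and no firm has marginal cost $(c+1)/n$. On those outcomes the deviation $\phi_1$ is \emph{not} strictly utility-improving (the firm trades $\tfrac{1}{2}\cdot\tfrac{2}{n}D((c+2)/n)$ for $\tfrac{1}{n}D((c+1)/n)$, which are equal), so your dual certificate cannot achieve a strictly positive LHS there, and the argument for $\gamma=0$ would fail. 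The fix is simply to take $f$ to be the indicator of non-Nash outcomes, as the proposition actually asserts; the paper handles the $\ell=2$ case separately for exactly this reason before running the induction for $\ell>2$.
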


In particular, our LP-based formalism results in different certifiable convergence behaviour over weighted versions of the game. The deeper reason for this is because \emph{the weights distort the gradient dynamics of the game}. Indeed, for the $w$-weighted version of the game, at the interior of the action set, the projected utility gradient has terms 
\begin{align*}  \frac{ \partial u^w_i(x^w)}{\partial x^w_i(a^w_i)} - \frac{1}{|A^w_i|} \sum_{a'^w_i \in A^w_i}  \frac{ \partial u^w_i(x^w)}{\partial x^w_i(a'^w_i)}, \\ \textnormal{ with a contracted sum } \left(\sum_{a_i^w \in \kappa_i(a_i)}  \frac{\partial u^w_i(x^w)}{\partial x^w_i(a^w_i)}\right) - \frac{w_i(a_i)}{|A^w_i|} \sum_{a'^w_i \in A^w_i} \frac{ \partial u^w_i(x^w)}{\partial x^w_i(a'^w_i)}. \end{align*}
This implies that in the original game, in the interior of $\Delta(A_i)$, player $i$'s utility gradient is projected effectively with respect to a $w$-weighted average of the utilities for playing each action $a_i$.

We would like to obtain a general description of such changes in the projection of the utility gradients, and the resulting equilibrium constraints. This can be modelled through a linear transformation of the feasible strategy set over which projected gradient ascent is employed; for a normal form game, for each player $i$ and a set of invertible linear transformations over $\mathbb{R}^{A_i}$ defined via matrices $P_i$, we consider the smooth game with action sets $Y_i = P_i^{-1} \Delta(A_i)$ and utilities 
\begin{equation}\label{def:util-scaled}u^P_i(y) = \sum_{a \in A} \left(\prod_{j \in N} \sum_{a''_i \in A_i} P_j(a_j,a''_j) y_j(a''_j) \right) \cdot u_i(a) \textnormal{ for each player } i.\end{equation}
In this case, $u^P_i(y) = u_i(x)$, where $x_i = P_i y_i \in \Delta(A_i)$ for the game's mixed extension. 

The effect of such transformations were considered before for \emph{ensuring} convergence to a fixed-point of a dynamical system in \cite{heemels2020oblique}; indeed, for a general dynamical system, linear transformations of the feasible set do not in general preserve the set of fixed-points unless they are conformal (i.e. when they preserve angles between vectors). However, for the gradient dynamics of a normal-form game, linear transformations of individual strategy sets $\Delta(A_i)$ do not change the set of its Nash equilibria. We thus wonder whether projected gradient ascent over such transformed sets have similar LP based guarantees for their outcomes.

The answer requires two parts. First, the result of \cite{ahunbay2024local} that we leverage (Proposition \ref{prop:LCCE}) applies only when the inequality constraints which delineate the strategy space satisfy a sign condition on their pairwise inner products. This condition is always satisfied by the probability simplex $\Delta(A_i)$. However, when the linear transformation $P_i$ is non-conformal, it is not necessarily the case that $P_i^{-1} \Delta(A_i)$ also satisfies this condition. Towards this end, we extend Proposition \ref{prop:LCCE} to arbitrary polyhedra when the function $h$ is tangential and bounded, with bounded, Lipschitz continuous gradients. The robustness parameter $\Delta$ for $(\epsilon,\Delta)$-local coarse correlated equilibria in this case depends on a condition number $\chi(X)$, which we define in Lemma \ref{lem:angle-bound} in Appendix \ref{sec:scale-proofs}.

\begin{theorem}\label{thm:P-scaled-sets}
    In a smooth game, suppose that players $i \in N'$ all employ the same non-increasing step sizes $\eta_t$, and have polyhedral strategy sets $X_i$. Let $h : X \rightarrow \mathbb{R}$ be a bounded tangential function, with bounded and Lipschitz gradients, which depends only on coordinates $(x_i)_{i \in N'}$ -- i.e. for some $M, G_h, L_h > 0$, for any $x, x' \in X$,
    $|h(x)| \leq M$, $ \| \nabla h(x) \| \leq G_h \textnormal{ and } \| \nabla h(x) - \nabla h(x') \| \leq L_h$, and $\nabla_j h(x) = 0$ for any $x \in X$. Then for any $T > 1$, and any $\delta \in [0,1/(L_h \cdot \chi(X)]$,
    \begin{align*}
        & \frac{1}{T} \cdot \sum_{i \in N} \sum_{t = 1}^{T} (u_i(\Pi_{X_i}[x^t_i+\delta \nabla_i h(x^t)],x^t_{-i})-u_i(x^t)) \\
        \leq \ & \delta \cdot \left( \frac{2M}{T} \left[\frac{1}{\eta_{T}} + \frac{1}{\eta_1}\right]+ \frac{\sum_{t = 1}^T \eta_t}{T} \cdot \sum_{i \in N'}\left(\frac{G_i^2 L_h}{2} + G_iL_h \cdot \sum_{j \in N'}G_j\right) \right) + \frac{1}{2T} \cdot \delta^2 G_h^2 \sum_{i \in N'} L_i.
    \end{align*}
    As a consequence, when all players use the same step sizes $\eta_t = C/\sqrt{t}$, after $T$ rounds of projected gradient ascent, the time-averaged distribution $\sigma$ which samples $x^t$ with probability $1/T$ is an $(\epsilon,\Delta)$-local coarse correlated equilibrium with respect to the set of bounded tangential functions $H$ with bounded gradients and  Lipschitz modulus $\leq 1$, where we may fix 
    $\epsilon = ( 4M/C + 2\max\{1,C\} )/\sqrt{T}$ and $\Delta = \min\{1/\chi(X),1/\sqrt{T}\}$. Moreover, evaluating $\delta \downarrow 0$, $\sigma$ is an $\epsilon$-local coarse correlated equilibrium with respect to $H$.
\end{theorem}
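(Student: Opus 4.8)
The plan is to run the Lyapunov‑potential argument behind Proposition~\ref{prop:LCCE} — using $h$ itself as the potential, whose discrete increments track $\sum_i\langle\nabla_i h,\nabla_i u_i\rangle$ — while replacing the obtuse‑normals hypothesis on the $X_i$ by the condition‑number estimate of Lemma~\ref{lem:angle-bound}. Two ingredients are needed: (i) when $\delta\le 1/(L_h\chi(X))$ the deviated point $x_i+\delta\nabla_i h(x)$ is already feasible, so no projection occurs in the left‑hand side; and (ii) a per‑step inequality comparing $\eta_t\langle\nabla_i h(x^t),\nabla_i u_i(x^t)\rangle$ with $h(x^{t+1})-h(x^t)$ up to $O(\eta_t^2)$ error.

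For (i), write $X_i=\{y:\langle a_{i\ell},y\rangle\le b_{i\ell}\}$ with unit normals. If constraint $\ell$ is active at $x_i$, tangency of $h$ forces $\nabla_i h(x)\in\TC{X_i}{x_i}$, hence $\langle a_{i\ell},\nabla_i h(x)\rangle\le 0$ and the constraint is preserved under the step. If $\ell$ is inactive with slack $s_\ell=b_{i\ell}-\langle a_{i\ell},x_i\rangle>0$, the estimate of Lemma~\ref{lem:angle-bound} provides a point $\bar x_i\in X_i$ at which $\ell$ is active with $\|x_i-\bar x_i\|\le\chi(X)\,s_\ell$; combining $\langle a_{i\ell},\nabla_i h(\bar x_i)\rangle\le 0$ with the $L_h$‑Lipschitz continuity of $\nabla h$ gives $\langle a_{i\ell},\nabla_i h(x)\rangle\le L_h\chi(X)s_\ell\le s_\ell/\delta$, so this constraint is preserved as well. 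Hence $\Pi_{X_i}[x_i^t+\delta\nabla_i h(x^t)]=x_i^t+\delta\nabla_i h(x^t)$ along the whole trajectory, and $L_i$‑smoothness of $u_i$ bounds each deviation term by $\delta\langle\nabla_i u_i(x^t),\nabla_i h(x^t)\rangle+\tfrac{\delta^2 L_i}{2}\|\nabla_i h(x^t)\|^2$, whose second part sums (using $\|\nabla_i h\|\le G_h$) to the final $\delta^2$ error term; for $i\notin N'$ the term vanishes since $\nabla_i h\equiv 0$.

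For (ii), set $r_i^t=(x_i^t+\eta_t\nabla_i u_i(x^t))-x_i^{t+1}\in\NC{X_i}{x_i^{t+1}}$, with $\|r_i^t\|\le\eta_t G_i$ and $\|x_i^{t+1}-x_i^t\|\le\eta_t G_i$. Decompose $\eta_t\langle\nabla_i h(x^t),\nabla_i u_i(x^t)\rangle=\langle\nabla_i h(x^t),x_i^{t+1}-x_i^t\rangle+\langle\nabla_i h(x^t),r_i^t\rangle$, bound the residual term by $0$ (tangency of $h$ at $x^{t+1}$ gives $\langle\nabla_i h(x^{t+1}),r_i^t\rangle\le 0$) plus an $L_h$‑Lipschitz correction $\le L_h\eta_t^2 G_i\sum_{j\in N'}G_j$, and bound $\sum_{i\in N'}\langle\nabla_i h(x^t),x_i^{t+1}-x_i^t\rangle\le h(x^{t+1})-h(x^t)+\tfrac{L_h}{2}\sum_{i\in N'}\|x_i^{t+1}-x_i^t\|^2$ by the descent lemma for $h$ (valid since $h$ depends only on the $N'$‑coordinates, where all players run the same $\eta_t$‑update). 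Summing over $i\in N'$, dividing by $\eta_t$, summing over $t$, and applying Abel summation to $\sum_t\eta_t^{-1}(h(x^{t+1})-h(x^t))$ — where $\eta_t$ non‑increasing and $|h|\le M$ yield the telescoped bound $2M(\eta_T^{-1}+\eta_1^{-1})$ — gives $\sum_t\sum_{i\in N'}\langle\nabla_i h(x^t),\nabla_i u_i(x^t)\rangle\le 2M(\eta_T^{-1}+\eta_1^{-1})+\sum_t\eta_t\cdot\sum_{i\in N'}\big(\tfrac{G_i^2L_h}{2}+G_iL_h\sum_{j\in N'}G_j\big)$. Multiplying by $\delta$, adding the $\delta^2$ term, and dividing by $T$ produces the displayed inequality.

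The ``as a consequence'' claims are then routine: substituting $\eta_t=C/\sqrt t$ and using $\sum_{t\le T}1/\sqrt t\le 2\sqrt T$ yields the stated $\epsilon$; the choice $\Delta=\min\{1/\chi(X),1/\sqrt T\}$ uses $L_h\le 1$ to replace $1/(L_h\chi(X))$ by $1/\chi(X)$, while the $\delta^2$ term is of lower order once $\delta\le\Delta\le 1/\sqrt T$ and is absorbed into the $\poly(G,L,G_h,L_h)$ factor of the definition of an $(\epsilon,\Delta)$‑local coarse correlated equilibrium; dividing by $\delta$ and letting $\delta\downarrow 0$ recovers the first‑order inequality, since $\Pi_{\TC{X_i}{x_i}}[\nabla_i h(x)]=\nabla_i h(x)$ for tangential $h$. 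I expect the main obstacle to be ingredient (i): pinning down the right condition number $\chi(X)$ in Lemma~\ref{lem:angle-bound} and showing $\delta\le 1/(L_h\chi(X))$ suffices for feasibility uniformly along the trajectory, since this is precisely where the obtuse‑normals structure exploited in Proposition~\ref{prop:LCCE} is traded for a quantitative geometric estimate; the rest is a careful but essentially standard adaptation of the projected‑gradient‑ascent/potential bookkeeping.
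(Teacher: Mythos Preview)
Your proposal is correct and follows essentially the same approach as the paper's proof. Ingredient (i) is exactly the content of the paper's Lemma~\ref{lem:condition} (proved via Lemma~\ref{lem:angle-bound} just as you sketch), and ingredient (ii) --- the decomposition $\eta_t\langle\nabla_i h(x^t),\nabla_i u_i(x^t)\rangle=\langle\nabla_i h(x^t),x_i^{t+1}-x_i^t\rangle+\langle\nabla_i h(x^t),r_i^t\rangle$, the tangency-plus-Lipschitz bound on the residual, the descent lemma for the first part, and the Abel summation --- matches the paper's argument step for step, including the constants.
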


Second, we need to deduce the appropriate set of linear inequalities that the aggregate probability distribution satisfies. The arguments required here essentially identical to our initial construction of semicoarse equilibria, due to the multilinearity of the utilities (\ref{def:util-scaled}) over $\times_{i \in N} \Delta(A_i)$. 

\begin{theorem}\label{thm:P-scaled}
    Suppose that each player $i$ employs projected gradient ascent on their utilities over their scaled strategy sets $Y_i = P_i^{-1}\Delta(A_i)$ with decreasing step sizes $\eta_{it} = \omega(t), o(1)$. Denote the time averaged distribution $\sigma^T$, sampling each $x^t = (P_iy_i^t)_{i \in N}$ with probability $1/T$, and $(P_iP_i^T) = Z_i$. Then for any  convergent subsequence $\sigma^{T_k} \rightarrow \sigma$ and for any player $i$,
    $$ \sum_{a \in A} \sigma(a) \cdot \sum_{a'_i, a''_i \in A_i} Z_i(a'_i,a''_i) \cdot (Q_i(a''_i,a_i) + q_i(a''_i)) \cdot u_i(a'_i,a_{-i}) \leq 0$$
    whenever $Q_i$ is a symmetric matrix such that $Z_i(Q_i,q_i)$ satisfies (\ref{con:preserve}), (\ref{con:tangency}).  
\end{theorem}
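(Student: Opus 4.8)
\emph{Proof plan.} The plan is to replay the shattering argument of Section~\ref{sec:semicoarse-def}, but on the transformed polyhedron $Y = \times_{i\in N} Y_i$ and with Theorem~\ref{thm:P-scaled-sets} standing in for Proposition~\ref{prop:LCCE} -- the latter is what supplies vanishing regret on a polyhedron whose defining normals need not be pairwise non-positive. Fix a player $i$ and a symmetric matrix $Q_i$ with $(Z_iQ_i,Z_iq_i)$ satisfying (\ref{con:preserve}) and (\ref{con:tangency}). As generating function I would take the quadratic
\[
    h(y) \;=\; \tfrac12\, y_i^T\!\left(P_i^T Q_i P_i\right) y_i \;+\; \left(P_i^T q_i\right)^{\!T}\! y_i,
\]
so that $\nabla_i h(y) = (P_i^T Q_i P_i) y_i + P_i^T q_i$ (the matrix $P_i^T Q_i P_i$ is symmetric since $Q_i$ is) and $\nabla_j h \equiv 0$ for $j\neq i$. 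As a polynomial on the compact polytope $Y$, $h$ is bounded with bounded, Lipschitz gradients, so it is eligible for Theorem~\ref{thm:P-scaled-sets} applied with $N' = \{i\}$; this makes the ``common step sizes'' hypothesis vacuous exactly as in the unweighted case, so players need not share step sizes.

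First I would verify that $h$ is tangential on $Y$. Since $P_i$ is invertible, $\TC{Y_i}{y_i} = P_i^{-1}\,\TC{\Delta(A_i)}{x_i}$ for $x_i := P_i y_i \in \Delta(A_i)$, because an invertible linear map carries the pointed cone $\{\eta(\tilde y - y_i) : \eta\ge 0,\ \tilde y\in Y_i\}$ onto $\{\eta(\tilde x - x_i) : \eta \ge 0,\ \tilde x\in\Delta(A_i)\}$ and commutes with closure. A direct computation, using $y_i = P_i^{-1}x_i$ and $P_iP_i^T = Z_i$, gives $P_i\nabla_i h(y) = Z_iQ_ix_i + Z_iq_i = Z_i(Q_ix_i + q_i)$. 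The derivation in Section~\ref{sec:semicoarse-def} showing that (\ref{con:preserve}) and (\ref{con:tangency}) are exactly the conditions under which $x_i \mapsto Mx_i + m$ maps $\Delta(A_i)$ into its tangent cones applies verbatim to the pair $(M,m) = (Z_iQ_i, Z_iq_i)$; hence $Z_i(Q_ix_i+q_i) \in \TC{\Delta(A_i)}{x_i}$, and therefore $\nabla_i h(y) \in P_i^{-1}\TC{\Delta(A_i)}{x_i} = \TC{Y_i}{y_i}$. So $h$ is tangential.

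Next comes the shattering step. Since $u^P_i(y) = \sum_{a\in A} u_i(a)\prod_{j\in N}(P_jy_j)(a_j)$ is multilinear in $x := (P_jy_j)_{j\in N}$, the computation behind Definition~\ref{def:semicoarse} -- differentiate, swap the order of summation, and use $\sum_{a_i}x_i(a_i)=1$ -- yields, pointwise along any trajectory,
\[
    \bilin{\nabla_i h(y)}{\nabla_i u^P_i(y)} \;=\; \sum_{a\in A}\Big(\prod_{j\in N}x_j(a_j)\Big)\cdot \tilde h(a), \quad
    \tilde h(a) \;=\; \sum_{a'_i,a''_i\in A_i} Z_i(a'_i,a''_i)\big(Q_i(a''_i,a_i)+q_i(a''_i)\big)\,u_i(a'_i,a_{-i}),
\]
since $(P_i^T Q_i P_i, P_i^T q_i)$ enters only through $P_i(P_i^T Q_i P_i)P_i^{-1} = Z_iQ_i$ and $P_i(P_i^T q_i) = Z_iq_i$; note $\tilde h(a)$ is precisely the bracketed quantity in the statement. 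Averaging over the trajectory, $\frac1T\sum_{t=1}^T \bilin{\nabla_i h(y^t)}{\nabla_i u^P_i(y^t)} = \sum_{a\in A}\sigma^T(a)\,\tilde h(a)$, where $\sigma^T(a) = \frac1T\sum_t\prod_j x^t_j(a_j)$ is the induced outcome distribution.

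Finally I would invoke Theorem~\ref{thm:P-scaled-sets} on the smooth game $(N,(Y_i)_{i\in N},(u^P_i)_{i\in N})$, whose strategy sets are polytopes and hence of finite condition number: evaluating at $\delta\downarrow 0$ and using that $h$ is tangential and depends only on $y_i$, the left-hand side above is at most $\epsilon_T\cdot\poly(\dots)$ with $\epsilon_T\to 0$ under the stated step-size conditions. Passing to a convergent subsequence $\sigma^{T_k}\to\sigma$ then gives $\sum_{a\in A}\sigma(a)\,\tilde h(a)\le 0$, the claim. The main obstacle is the middle part: one must see that conjugating $(Q_i,q_i)$ by $P_i$ -- i.e.\ taking $\tilde Q_i = P_i^T Q_i P_i$, $\tilde q_i = P_i^T q_i$ so that $\nabla_i h$ is a gradient field that is genuinely tangential on the oblique polytope $Y_i$ -- is exactly the reparametrisation forcing the Gram matrix $Z_i = P_iP_i^T$ to appear, in the same slot, both in the tangency conditions pulled back to $\Delta(A_i)$ and in the shattered function $\tilde h$. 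Everything else parallels Section~\ref{sec:semicoarse-def} verbatim and is routine once this alignment is in place.
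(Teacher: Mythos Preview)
Your proposal is correct and follows essentially the same approach as the paper: both define the quadratic $h^P(y)=\tfrac12 y_i^T(P_i^TQ_iP_i)y_i+(P_i^Tq_i)^Ty_i$, verify tangentiality on $Y_i$ via the pulled-back conditions on $Z_i(Q_i,q_i)$, shatter the bilinear form $\langle\nabla_i h,\nabla_i u^P_i\rangle$ onto $A$ using multilinearity, invoke Theorem~\ref{thm:P-scaled-sets} with $N'=\{i\}$, and pass to the limit. The only cosmetic difference is that you check tangentiality directly via the tangent-cone pullback $\TC{Y_i}{y_i}=P_i^{-1}\TC{\Delta(A_i)}{x_i}$, whereas the paper phrases it as ``$x+\delta Z_i(Q_ix+q_i)\in\Delta(A_i)$ for some $\delta>0$''; these are equivalent formulations of the same condition.
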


We shall thus call $\sigma$ a $P$\textbf{-scaled semicoarse equilibrium} whenever it satisfies the equilibrium constraints of Theorem \ref{thm:P-scaled}. Two observations are immediate; first, for each $P$, the set of $P$-scaled semicoarse equilibria of a normal form game necessarily contains the set of its coarse correlated equilibrium. This can be seen by setting, for $x_i^* \in \Delta(A_i)$ the characteristic vector of a pure strategy, $(Q_i,q_i) = (P_iP_i^T)^{-1}(\mathbf{1},x_i^*)$. Second, we in fact may recover $w$-weighted equilibria as a specific instance of $P$-scaled equilibria. In particular, the equilibrium constraints induced on $\sigma$ by $\sigma^w$ are those obtained if each player $i$ uses projected gradient ascent over a transformed action set $P^{-1}_i \Delta(A_i)$, where the matrix $P_i = \textnormal{diag}(\sqrt{w_i})$ is the diagonal matrix obtained via the square root of the action weights. That is to say, if each player $i$ updates 
$$ y_i^{t+1} = \arg\min_{y_i \in P_i^{-1}\Delta(A_i)} \| y - y_i^{t} - \eta_{it} \cdot P_i^T \nabla_i u_i(x^t)\|^2, x_i^{t+1} = P_i^{-1} y_i^{t+1},$$
then for appropriately chosen step sizes $\eta_{it}$, the time average distribution of play $\sigma$ satisfies the $w$-weighted equilibrium constraints in the limit $T \rightarrow \infty$.

\section{Concluding Remarks \& Insights}

In this paper, we have provided an LP based refinement over coarse correlated equilibrium that is valid for the outcomes of projected gradient ascent. The equilibrium refinement allowed us to provide stronger convergence results for projected gradient ascent, purely through linear programming arguments. In discretised Bertrand competitions, our equilibrium analysis provided guarantees parallelling those of \cite{deng2022nash} for the first-price auction with mean-based learners. 

We find two insights particularly striking. First, is that our clean, duality based framework allows us to completely circumvent the epoch based arguments and the associated probability theoretic machinery employed in \cite{kolumbus2022auctions,deng2022nash} in their analysis of mean-based learning. Whereas the \emph{construction} of the framework and classification of the relevant strategy modifications, through its theoretical roots in \cite{ahunbay2024local}, required effort, analysis of semicoarse equilibria is truly not significantly more difficult in comparison to the analysis of correlated and coarse correlated equilibria in games.

More interesting for us, is \emph{how} the semicoarse correlated equilibria of the Bertrand competition (and through its equivalence in the case of inelastic demand, the first-price auction) parallels the results of \cite{deng2022nash}. Specifically, there appears to be a correspondence between convergence guarantees for mean-based learners and convergence guarantees for projected gradient ascent. The sufficient conditions for last-iterate convergence for mean-based buyers in the first-price auction are precisely those that ensure that its semicoarse equilibria are convex combinations of Nash equilibria. Meanwhile, when time-average but not last-iterate convergence is guaranteed for mean-based learners, the set of the game's semicoarse equilibrium becomes ``unstable'', collapsing onto the equilibrium under arbitrarily small changes in the set of bids itself or weighings over it. 

We conjecture the existence of some relationship between the guarantees of semicoarse equilibria, and those for mean-based learners. However, at this point we can only speculate on its precise nature. For one, our equilibrium constraints are specialised for projected gradient ascent, and we do not expect players to incur vanishing regret against the corresponding strategy deviations when they employ e.g. Hedge as their learning algorithm in general. However, it is certainly plausible for there to exist ``translations'' between the guarantees of semicoarse equilibria, and dynamics of learning with mean-based algorithms; this is at present left as a future direction of inquiry. 

Another question we ask is whether further equilibrium refinement is possible, and how convergence guarantees or price of anarchy bounds would be tightened for the outcomes of projected gradient ascent. The result of \cite{ahunbay2024local} shows that any family of twice differentiable functions over the set of mixed strategies would generate a family of strategy modifications, allowing the construction of more refined notions of equilibria. However, we show in Proposition \ref{prop:linmax} that the strategy modifications which define semicoarse equilibria are maximal within the set of linear strategy modifications -- the latter corresponds to the usual notion of correlated equilibrium.

Any hypothetical equilibrium refinement over semicoarse equilibrium, valid for the outcomes of projected gradient ascent, would thus cease to be a subset of correlated equilibrium. It would necessarily have to account for non-linear strategy modifications. The construction of a framework which can account for such non-linearities is thus another open problem. Given the correspondence pointed out by \cite{ahunbay2024local} between primal linear programs over such equilibria and their dual Lyapunov function fit problems, we expect results on non-linear Lyapunov function estimation to be relevant. For instance, \cite{souaiby2021lyapunov} considers using semidefinite programming to compute Lyapunov functions, which suggests SOS programming might provide a hierarchy of further refined notions of equilibrium. We remark, however, that such a formulation may preclude the existence of a finite, simple set of strategy modifications as those established for semicoarse equilibria in Theorem \ref{thm:strategy-def}; as the set of semidefinite matrices is not polyhedral.

\vspace{8pt}\noindent\textbf{Acknowledgements.}  This project was funded by the Deutsche Forschungsgemeinschaft (DFG, German Research Foundation) under Grant No. 405445463.

\bibliographystyle{alpha}
\bibliography{sample-bibliography}

\appendix

\section{Notations Table}\label{sec:notation}

\begin{tabular}{p{3.7cm} p{1cm} p{10.5cm}}
		\hline
		General Notation & & \\ 
		\hline
		$\mathbb{N}$ & & Set of natural numbers, in the convention without $0$ \\
        $N$ & & For $N \in \mathbb{N}$, the set $\{1,2,...,N\}$ \\
		$\mathbb{R}$ & & Set of real numbers \\
		$\mathbb{R}_+$ & & Set of non-negative real numbers \\
            $A^B$ & & Set of functions $B \rightarrow A$ \\
		$\Delta(S)$ & & Set of probability distributions over $S$ \\
            $\mathbb{E}_{x \sim \sigma}[f(x)]$ & & Expectation of $f : S\rightarrow \mathbb{R}$ when $x$ is drawn from $\sigma$ \\
		$\mathbb{I}[s = s']$ & & Indicator function, equals $1$ if $s = s'$ and $0$ otherwise \\
            $\poly(x,y,z,..)$ & & A fixed polynomial over variables $x,y,z,...$ \\
            $\| \cdot \|$ & & The usual Euclidean norm \\
		\hline
        Smooth Functions & & \\
        \hline
            $f : S \subseteq \mathbb{R}^D \rightarrow \mathbb{R}$ & & A differentiable function \\
            $\nabla f(x)$ & & Gradient of $f$ \\
            $\nabla_i f(x_i,x_{-i})$ & & Gradient of $f$ with respect to $x_i$ \\
            $G_f$ & & Upper bound on $\|\nabla f(x)\|$ over $x \in S$\\
            $L_f$ & & Upper bound on $\frac{\|\nabla f(x)-\nabla f(x')\|}{\|x-x'\|}$ for $x,x' \in S$ distinct \\
        \hline
        Common Subscripts & & \\
        \hline
        $i,j$ & & Quantity relating to player $i, j$ \\
        $-i, (-i)$ & & Quantity relating to players other than $i$ \\
        \hline \hline
		Games & &  \\
        \hline \hline
        Normal-Form Games & & \\
        \hline
        $\Gamma$ && A normal form game \\
        $N$ & $i,j$ & Number, set of agents / players \\
        $A = \times_{i \in [N]} A_i$ & $a$ & Set of action profiles / outcomes \\
        $u_i : A \rightarrow \mathbb{R}$ & & Utility function for player $i$ \\
        $x_i \in \Delta(A_i)$ & & Mixed strategy for agent $i$ \\
        $\sigma \in \Delta(A)$ & & Correlated distribution over pure strategies \\
        \hline
        Smooth Games & &\\
        \hline
         $\Gamma$ && A smooth game \\
         $N$ & $i,j$ & Number, set of players \\
        $X = \times_{i \in N} X_i$ & $x$ & Set of action profiles \\
        $u_i : X \rightarrow \mathbb{R}$ & & Utility function for player $i$ \\
        $\sigma \in \Delta(X)$ & & Correlated distribution over strategies \\
        $\Phi \subseteq \sqcup_{i \in N} X_i^{X_i}$ & & Set of action transformations / strategy modifications \\
        \hline
\end{tabular}

\begin{tabular}{p{3.7cm} p{1cm} p{10.5cm}}
        \hline
        Mixed Extensions & & \\
        \hline
        $X_i = \Delta(A_i)$ & & Set of mixed strategies \\
        $u_i(x) = \mathbb{E}_{a_j \sim x_j}[u_i(a)]$ & & Utilities, defined via expectations \\
        $\Phi_{EXT}$ & & Set of constant transformations, $\phi : x_i\mapsto x^*_i$ \\
        $\Phi_{LIN}$ & & Set of linear transformations, $\phi : \Delta(A_i) \rightarrow \Delta(A_i)$ \\
        \hline
		\hline
		Vectors \& Matrices & & \\
        \hline 
        $v,w,q,...$ & & Vectors in $\mathbb{R}^D$ of appropriate dimension, often lower case \\
        $Q,A,P,...$ & & Matrices in $\mathbb{R}^{D\times D}$ of appropriate size, often upper case \\
        (+) $Q,q$ & & Will always denote generators of a strategy modification \\
        (+) $P$ & & A linear transformation $\Delta(A_i) \rightarrow \Delta(A_i)$, \emph{except} by Theorems \ref{thm:P-scaled-sets},\ref{thm:P-scaled}, a family of invertible linear maps $P_i : \mathbb{R}^{|A_i|} \rightarrow \mathbb{R}^{|A_i|}$ \\
        $v^T, Q^T$ & & Transpose of $v$, $Q$ respectively \\
        $v^Tw$, $\bilin{v}{w}$ & & The inner product, $\sum_i v_i w_i$ \\
        $vw^T$ & & The outer product, the matrix $Q_{ij} = v_iw_j$ \\
        $\Sigma$ & & The all-ones vector of appropriate size \\
        $e_i$ & & The characteristic vector with $i$'th coordinate $1$ and all others $0$ \\
        $ \mathbf{1}$ & & The identity matrix \\
        $\mathbf{1}_C$ & & For $C \subseteq D$, the matrix with its block $C \times C$ equal to identity, with all other entries equal to $0$  \\
        $\textnormal{diag}(v)$ & & The diagonal matrix $Q_{ii} = v_i$ \\
        $\textnormal{tr}(A)$ & & The trace, $\sum_{i} A_{i}$ \\
        \hline
        \end{tabular}

\begin{tabular}{p{3.7cm} p{1cm} p{10.5cm}}
        \hline
        Convex Sets & & \\
        \hline
        $\Pi_X[x]$ & & The projection, $\arg\min_{x' \in X} \|x-x'\|^2$ \\
        $\TC{X}{x}$ & & The tangent cone to $X$ at $x$, closure of $\{\eta (y-x) \ | \ \eta \geq 0, y \in X\}$ \\
        $\NC{X}{x}$ & & The normal cone to $X$ at $x$, $\{y \ | \ \forall \ z \in \TC{X}{x}, y^Tz \leq 0 \}$ \\
        $P$ & & Denotes a polyhedron, defined via inequalities $a_i^Tx \leq b_i$ \\
        $\chi(P)$ & & A condition number on $P$, as in Lemma \ref{lem:angle-bound} \\
        \hline
        Gradient Ascent & &  \\
        \hline
        $\eta_t$ & & Step size at time $t$ \\
        $x^t$ & & Mixed strategies at time $t$ \\
        \hline
\end{tabular}

\section{Projected Gradient Ascent is Not Mean-Based}\label{sec:not-mean-based}

We had remarked in Section \ref{sec:related} that projected gradient ascent is not a mean-based algorithm in general. Here, we clarify this statement. We will work (without loss of generality) when the utilities for actions are bounded in $[0,1]$.  Recall the definition of a mean-based algorithm:

\begin{definition}[Definition 1, \cite{deng2022nash}, also cf. \cite{braverman2018selling}]
    Let $\alpha_i^t(a_i)$ be the average reward of action $a_i$ in the first $t$ rounds: $\alpha_i^t(a_i) = \frac{1}{t} \cdot \sum_{\tau = 1}^t u_i(a_i,x_{-i}^t)$. An algorithm is then $\gamma_t$-mean based if, for any $a_i \in A_i$, whenever there exists $a'_i \in A_i$ such that $\alpha_i^{t-1}(a'_i) - \alpha_i^{t-1}(a_i) > \gamma_t$, $x_i^t(a_i) \leq \gamma_t$. An algorithm is mean-based if it is $\gamma_t$-mean-based for some decreasing sequence $(\gamma_t)_{t = 1}^\infty$ such that $\gamma_t \rightarrow 0$ as $t \rightarrow \infty$. 
\end{definition}

Explicitly, we will show that online (projected) gradient ascent, with full feedback, is not mean-based for any choice of step sizes $\eta_t = C \cdot t^{-\alpha}$ for any $\alpha \in [0,1)$ in the experts setting. In this case, $u_i^t \in [0,1]^{|A_i|}$ and $x_i^{t+1} = \Pi_{\Delta(A_i)}[x_i^t + \eta_t \cdot u_i]$. Our argument is rather simple; we will fix $a^*_i \in A_i$, and let $x_i^0$ be arbitrarily fixed. Then for $T \in \mathbb{N}$ such that $T \gg 1$, we will let $u^t_i(a_i) = 1$ for any $a_i \neq a^*_i$ and $u^t_i(a_i^*) = 0$ for any $t \leq T$. Then, for the next $K$ rounds, we will let $u^t_i(a^*_i) = 1$ and $u_i^t(a_i) = 0$ otherwise. We remark over the rounds $T+1,...,T+K$, $\Delta x_i^{t} = \Delta x^t_i(a^*_i) \geq C \cdot t^{-\alpha} / 2$ so long as $x_i^{t+1}(a_i) \neq \mathbb{I}[a_i = a^*_i]$. And thus, if $K$ is such that $\sum_{t = T+1}^{T+K} C \cdot t^{-\alpha}/2 \geq 1$, then $x_i^{t+1}(a_i) = \mathbb{I}[a_i = a^*_i]$. 

We will want to be able to pick $K \leq \epsilon T$ for some small, fixed $\epsilon > 0$. In this case, for large $T$, being able to pick $K$ such that $\sum_{t = T+1}^{T+K} C \cdot t^{-\alpha}/2 \geq 1$ is possible whenever
\begin{align*}
    \sum_{t = T+1}^{T+K} C \cdot t^{-\alpha}/2 & \simeq \int_T^{T+K} C \cdot t^{-\alpha}/2 \cdot dt = \frac{C \cdot T^{1-\alpha}}{2(1-\alpha)} \cdot \left( \left( 1+\frac{K}{T}\right)^{1-\alpha} - 1 \right)\\
    & \geq \frac{C(1+\epsilon)^{1-\alpha}}{2(1-\alpha)} \cdot \frac{K}{T^{\alpha}} \gg 1 \\
    \Rightarrow \epsilon T \geq K & \gg \frac{2T^{\alpha}(1-\alpha)}{C(1+\epsilon)^{1-\alpha}}.
\end{align*}
Thus, picking $T$ large enough, we have both that $\gamma_t < 1/2$ and that such a choice of $K$ is possible. In this case, mean-basedness would require $x_i^{T+K+1}(a_i^*) \leq 1/2$, since $\alpha_i^{T+K+1}(a_i) - \alpha_i^{T+K+1}(a^*_i) = T-K > (1-\epsilon)T > 1/2$. However, by construction $x_i^{T+K+1}(a_i) = 1$.

\section{Omitted Proofs}

\setcounter{subsection}{2}

\subsection{Equilibrium Refinement \& Semicoarse Equilibria}\label{sec:semicoarse-def-proofs}

\begin{proposition*}[\ref{prop:linsubset}]
    Suppose the pair $(Q_i,q_i)$ satisfies the tangency \& symmetry conditions. Then there exists $\delta > 0$ such that the matrix
    $ P_i(a'_i,a_i) = \mathbf{1}(a'_i, a_i) + \delta \cdot \left( Q_i(a'_i,a_i) + q_i(a'_i) \right) $
    is a (left) stochastic matrix. As a consequence, if $\sigma$ is a semicoarse correlated equilibrium, then player $i$ incurs no regret against the strategy modification $x_i \mapsto P_i x_i$.
\end{proposition*}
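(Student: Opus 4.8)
The plan is to verify directly that $P_i$ as defined is column-stochastic (i.e.\ each column is a probability vector) for all sufficiently small $\delta > 0$, and then to translate the semicoarse inequality \eqref{con:semicoarse} into a no-regret statement for the map $x_i \mapsto P_i x_i$. For column-stochasticity I need two things: that the entries of each column sum to $1$, and that all entries are non-negative. For the column sums, fix an action $a_i$ and compute $\sum_{a'_i \in A_i} P_i(a'_i,a_i) = \sum_{a'_i} \mathbf{1}(a'_i,a_i) + \delta \sum_{a'_i}(Q_i(a'_i,a_i)+q_i(a'_i)) = 1 + \delta \cdot 0 = 1$, where the second sum vanishes by the probability-preservation condition \eqref{con:preserve}. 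For non-negativity, the off-diagonal entries ($a'_i \neq a_i$) are $\delta(Q_i(a'_i,a_i)+q_i(a'_i))$, which is non-negative by the tangency condition \eqref{con:tangency}, for every $\delta \ge 0$. The diagonal entries are $1 + \delta(Q_i(a_i,a_i)+q_i(a_i))$, which is non-negative as soon as $\delta \le 1/\max_{a_i}|Q_i(a_i,a_i)+q_i(a_i)|$ (or any $\delta \ge 0$ if that maximum is $0$). Picking $\delta$ below this threshold makes $P_i$ column-stochastic, hence a well-defined strategy modification $\Delta(A_i) \to \Delta(A_i)$.

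For the second claim, I unpack the regret of player $i$ against $x_i \mapsto P_i x_i$ under a distribution $\sigma \in \Delta(A)$. Since utilities in the mixed extension are multilinear, the expected gain from applying $P_i$ to the $i$-th marginal, when $\sigma$ is a product of point masses on a pure profile $a$, is $\sum_{a'_i} P_i(a'_i,a_i) u_i(a'_i,a_{-i}) - u_i(a)$; averaging over $\sigma$ gives the regret as $\sum_{a \in A}\sigma(a)\big[\sum_{a'_i}P_i(a'_i,a_i)u_i(a'_i,a_{-i}) - u_i(a)\big]$. Substituting $P_i(a'_i,a_i) = \mathbf{1}(a'_i,a_i) + \delta(Q_i(a'_i,a_i)+q_i(a'_i))$, the $\mathbf{1}$ term contributes exactly $u_i(a)$ which cancels the $-u_i(a)$, leaving $\delta \sum_{a\in A}\sigma(a)\sum_{a'_i}(Q_i(a'_i,a_i)+q_i(a'_i))u_i(a'_i,a_{-i})$. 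This is $\delta \ge 0$ times the left-hand side of the semicoarse constraint \eqref{con:semicoarse}, which is $\le 0$ by hypothesis that $\sigma$ is a semicoarse correlated equilibrium (the pair $(Q_i,q_i)$ satisfies exactly the tangency and symmetry conditions required in Definition \ref{def:semicoarse}). Hence the regret is $\le 0$, i.e.\ player $i$ incurs no regret against $x_i \mapsto P_i x_i$.

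I do not anticipate a genuine obstacle here; the statement is essentially a repackaging of the derivation already carried out in Section \ref{sec:semicoarse-def}. The only mildly delicate point is bookkeeping: being careful that the constant vector $q_i$ enters $P_i$ additively in each column (it does not depend on $a_i$), so that \eqref{con:preserve} is precisely what kills the column sum and \eqref{con:tangency} is precisely what controls the off-diagonal sign — and noting that the symmetry condition \eqref{con:symmetry} plays no role in proving $P_i$ is stochastic but is needed only to ensure $(Q_i,q_i)$ is an admissible generator in Definition \ref{def:semicoarse}. A remark worth including is that the map $x_i \mapsto P_i x_i$ is exactly the time-$\delta$ Euler step of the linear flow $\dot x_i = (Q_i x_i + q_i)$ restricted to the simplex, which is why no projection is needed — the tangency conditions were designed for this.
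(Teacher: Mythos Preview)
Your proof is correct and follows essentially the same approach as the paper: you use \eqref{con:preserve} for the column sums, \eqref{con:tangency} for the off-diagonal signs, and bound $\delta$ via the diagonal entries (note that by \eqref{con:preserve} your threshold $1/\max_{a_i}|Q_i(a_i,a_i)+q_i(a_i)|$ coincides with the paper's choice $\delta = 1/\Delta$ where $\Delta = \max_{a_i}\sum_{a'_i\neq a_i}(Q_i(a'_i,a_i)+q_i(a'_i))$). You additionally spell out the no-regret consequence explicitly, which the paper leaves implicit; your derivation there is also correct.
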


\begin{proof}
    Since there are finitely many pairs $a'_i, a_i \in A_i$, consider $\Delta = \max_{a_i \in A_i}  \sum_{a'_i \neq a_i} Q_i(a'_i, a_i) + q_i(a'_i)$. By constraint (\ref{con:tangency}), $\Delta \geq 0$. Moreover, if $\Delta = 0$, then the matrix $Q_i(a'_i,a_i) + q_i(a'_i)$ is the zero matrix due to the constraint (\ref{con:preserve}), in which case we may fix $\delta = 1$. So suppose that $\Delta > 0$, and let $\delta = 1/\Delta$.

    To show that $P_i(a'_i,a_i)$ is left stochastic, we need to show that it is non-negative, and all of its columns sum up to $1$. Non-negativity of the off-diagonal elements follows from the non-negativity of the off-diagonals of $Q_i(a'_i,a_i) + q_i(a'_i)$. Meanwhile, by the constraint (\ref{con:preserve}),
    \begin{align*}
        \sum_{a'_i \in A_i} P_i(a'_i,a_i) & = \sum_{a'_i \in A_i} \mathbf{1}(a'_i, a_i) + \delta \cdot \left( Q_i(a'_i,a_i) + q_i(a'_i) \right) \\
        & = 1 + \delta \cdot \sum_{a'_i \in A_i} \left( Q_i(a'_i,a_i) + q_i(a'_i) \right) = 1.
    \end{align*}
    Moreover, again by (\ref{con:preserve}), 
    \begin{align*}Q_i(a_i,a_i) & = - \sum_{a'_i \neq a_i} Q_i(a'_i, a_i) + q_i(a'_i) \geq -\Delta
    \end{align*}
    by the definition of $\Delta$, which implies that $P_i(a_i,a_i) \geq 1 - \delta \cdot \Delta = 0$.
\end{proof}

The proof of Proposition \ref{prop:linmax} in turn depends on finding a triplet of strategies over which a linear transformation $\phi : \Delta(A_i) \rightarrow \Delta(A_i)$ is not induced by a gradient field, and embedding a \emph{rock-paper-scissors} game over these strategies versus another player $j$ with action set of size $> 2$. The identification of the triplet of strategies follows from a characterisation of linear transformations which have representations as in Proposition \ref{prop:linsubset}.

\begin{lemma}\label{lem:triplet}
    Let $P \in \mathbb{R}^{|A_i| \times |A_i|}$ be a left stochastic matrix, specifying a linear transformation $\Delta(A_i) \rightarrow \Delta(A_i)$. Then there exists a pair $(Q_i,q_i)$ satisfying the tangency conditions (\ref{con:preserve}), (\ref{con:tangency}), and (\ref{con:symmetry}) such that 
    $$ P_i(a'_i,a_i) = \mathbf{1}(a'_i,a_i) + Q_i(a'_i,a_i) + q_i(a'_i)$$
    if and only if for every triplet of distinct strategies $a_i, a'_i, a''_i \in A_i$,
    \begin{equation}\label{eq:triplet}
        P_i(a_i,a'_i) + P_i(a'_i, a''_i) + P_i(a''_i,a_i) = P_i(a_i,a''_i) + P_i(a''_i,a'_i) + P_i(a'_i,a_i).
    \end{equation}
\end{lemma}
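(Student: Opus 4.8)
The plan is to strip away the two tangency conditions, which hold automatically, and reduce the whole statement to the classical fact that an antisymmetric matrix is a ``discrete gradient'' exactly when its sum around every triangle vanishes. First I would set $M := P_i - \mathbf{1}$ and write the hypothesised decomposition as $M(a'_i,a_i) = Q_i(a'_i,a_i) + q_i(a'_i)$, i.e. $M = Q_i + q_i\Sigma^T$ at the level of matrices. Since $P_i$ is left stochastic and $\mathbf{1}$ is too, every column of $M$ sums to zero, which is precisely condition (\ref{con:preserve}); and since $P_i \ge 0$ while $\mathbf{1}$ has vanishing off-diagonal, the off-diagonal entries of $M$ are nonnegative, which is precisely condition (\ref{con:tangency}). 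So for any decomposition of $M$ of the above form, (\ref{con:preserve}) and (\ref{con:tangency}) are free, and the entire content of the lemma is: \emph{when does there exist a vector $q_i$ making $Q_i := M - q_i\Sigma^T$ symmetric?}

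Next I would translate this symmetry requirement into a condition on $P_i$ alone. Because $\mathbf{1}$ is symmetric, $Q_i = M - q_i\Sigma^T$ is symmetric iff $M - M^T = q_i\Sigma^T - \Sigma q_i^T$, that is, iff the antisymmetric matrix $W := P_i - P_i^T$ satisfies $W(a'_i,a_i) = q_i(a'_i) - q_i(a_i)$ for all $a'_i,a_i \in A_i$. Thus the desired pair $(Q_i,q_i)$ exists precisely when $W$ is the coboundary (path-independent difference) of some potential $q_i$ on $A_i$.

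The heart of the argument is then the standard equivalence on the complete graph with vertex set $A_i$: an antisymmetric $W$ has the form $W(a',a) = q(a') - q(a)$ if and only if $W(a_i,a'_i) + W(a'_i,a''_i) + W(a''_i,a_i) = 0$ for every triple; and expanding $W = P_i - P_i^T$, this cocycle identity is exactly (\ref{eq:triplet}) — the diagonal of $\mathbf{1}$ never appears because the three indices are distinct, and triples with two coinciding indices give tautologies directly from antisymmetry, so only distinct triples are needed. The ``only if'' direction is immediate. For ``if'', I would fix a reference action $a^0_i$, define $q_i(a_i) := W(a_i, a^0_i)$, and check, using antisymmetry and the cocycle identity applied to $(a'_i, a^0_i, a_i)$, that $q_i(a'_i) - q_i(a_i) = W(a'_i,a_i)$ (the cases where one index equals $a^0_i$ being handled straight from $W = -W^T$). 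With $q_i$ in hand, set $Q_i := M - q_i\Sigma^T$; it is symmetric by construction, it recovers $P_i(a'_i,a_i) = \mathbf{1}(a'_i,a_i) + Q_i(a'_i,a_i) + q_i(a'_i)$, and it satisfies (\ref{con:preserve}), (\ref{con:tangency}), (\ref{con:symmetry}) by the reductions above. The converse direction of the lemma — given such a pair, (\ref{eq:triplet}) holds — is just reading the same chain of equalities backwards: $M = Q_i + q_i\Sigma^T$ with $Q_i$ symmetric forces $W = q_i\Sigma^T - \Sigma q_i^T$, which is manifestly path-independent.

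I do not expect a genuine obstacle here; the proof is short linear algebra. The only subtleties worth care in the write-up are bookkeeping ones: keeping straight that in $q_i\Sigma^T$ the vector $q_i$ indexes rows while $\Sigma^T$ indexes columns (so the matrix is constant along rows), and treating the degenerate triples in the coboundary construction explicitly rather than appealing to (\ref{eq:triplet}), which is stated only for distinct triples. The one conceptual step is recognising (\ref{eq:triplet}) as the triangle/cocycle condition for the existence of a potential, after which everything is mechanical.
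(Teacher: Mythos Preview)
Your proposal is correct and follows essentially the same route as the paper: fix a reference action $a^0_i$ (the paper calls it $a^*_i$), set $q_i(a_i) = W(a_i,a^0_i) = P_i(a_i,a^0_i) - P_i(a^0_i,a_i)$, define $Q_i = P_i - \mathbf{1} - q_i\Sigma^T$, observe that (\ref{con:preserve}) and (\ref{con:tangency}) hold automatically from left-stochasticity and nonnegativity of $P_i$, and then verify symmetry of $Q_i$ via the triplet identity. Your framing in terms of the antisymmetric part $W$ and the cocycle/potential condition on the complete graph is a touch more conceptual than the paper's direct calculation, but the construction and the verification are identical.
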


\begin{proof}
    The forward direction $(\Rightarrow)$ is immediate from evaluating each side of the equation, and the symmetry constraint (\ref{con:symmetry}) on $Q$. So we need to show $(\Leftarrow)$. Given $P$, fix some $a^*_i \in A_i$, and for any pair $a_i, a'_i \in A_i$, let 
    \begin{align*}
        q_i(a'_i) & = P_i(a'_i,a^*_i) - P_i(a'_i,a^*_i), \\
        Q_i(a'_i,a_i) & = P_i(a'_i,a_i) - \mathbf{1}(a'_i,a_i) - q_i(a'_i).
    \end{align*}
    The constraints (\ref{con:preserve}) and (\ref{con:tangency}) then hold for the pair $(Q_i,q_i)$ immediately because $Q_i(a'_i,a_i) + q_i(a'_i) = P_i(a'_i,a_i) - \mathbf{1}(a'_i,a_i)$. We thus need to check whether $Q_i$ is a symmetric matrix under this definition. In this case, for any pair $a_i \neq a'_i$,
    \begin{align*}
        Q_i(a'_i,a_i) - Q_i(a_i,a'_i) & = P_i(a'_i,a_i) - q_i(a'_i) - P_i(a_i,a'_i) + q_i(a_i) \\
        & = P_i(a'_i,a_i) - P_i(a'_i,a^*_i) + P_i(a^*_i,a'_i) - P_i(a_i,a'_i) + P_i(a_i,a^*_i) - P_i(a^*_i,a_i) \\
        & = \left( P_i(a'_i,a_i) + P_i(a_i, a^*_i) + P_i(a^*_i, a'_i) \right) - \left( P_i(a'_i,a^*_i) + P_i(a^*_i,a_i) + P_i(a_i, a'_i) \right),
    \end{align*}
    which equals zero identically whenever one of $a_i, a'_i$ equals $a^*_i$. Meanwhile, if $a_i, a'_i, a^*_i$ are distinct, the expression again equals zero by our assumption (\ref{eq:triplet}).
\end{proof}

We will also want to make explicit the connection between the regret against linear transformations $\phi: \Delta(A_i) \rightarrow \Delta(A_i)$, where $\phi(x_i) = P x_i$ for some matrix $P$, and regret against tangential linear vector fields over $\Delta(A_i)$ obtained via $x_i \mapsto (P - \mathbf{1}) x_i$. The argument of the lemma is implicit in our identification of the set of semicoarse equilibria, and the equivalence results in \cite{ahunbay2024local}.

\begin{lemma}\label{lem:lin-reg}
    If a function $\phi : x_i \mapsto P_i x_i$ is a linear transformation over $\Delta(A_i) \rightarrow \Delta(A_i)$, then $f(x_i) = (P_i -\mathbf{1})x_i$ is a tangential vector field, i.e. 
    \begin{align*}
        \sum_{a'_i, a_i \in A_i} (P_i(a'_i, a_i) - \mathbf{1}(a'_i,a_i)) \cdot x_i(a_i) & = 0 \ \forall \ x_i \in \Delta(A_i), \\
        \sum_{a_i \in A_i} (P_i(a'_i,a_i) - \mathbf{1}(a'_i,a_i)) \cdot x_i(a_i) & \geq 0 \ \forall \ x_i \in \Delta(A_i) \textnormal{ such that } x_i(a'_i) = 0.
    \end{align*}
    Moreover, in this case, the regret incurred by player $i$ against the strategy modifications $\phi$ is equal to the regret they incur against the vector field $f$ for any probability distribution $\sigma \in X = \times_{j \in N} \Delta(A_j)$, i.e.
    $$ \int_X d\sigma(x) \cdot (u_i(P_ix_i,x_{-i}) - u_i(x)) = \int_X d\sigma(x) \cdot \bilin{(P_i - \mathbf{1})x_i}{\nabla_i u_i(x)}.$$
\end{lemma}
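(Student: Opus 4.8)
The plan is to check the two displayed (in)equalities by a direct computation from left stochasticity of $P_i$, observe that together they say exactly $f(x_i) \in \TC{\Delta(A_i)}{x_i}$, and then use multilinearity of the mixed-extension utilities to collapse the regret against $\phi$ onto the regret against $f$.

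First, for the tangency claims: since $P_i$ is left stochastic each of its columns sums to $1$, as does each column of the identity matrix, so $\sum_{a'_i \in A_i} \big(P_i(a'_i,a_i) - \mathbf{1}(a'_i,a_i)\big) = 0$ for every $a_i$. Multiplying by $x_i(a_i)$ and summing over $a_i$ gives the first identity, valid for all $x_i \in \Delta(A_i)$; equivalently $\Sigma^T (P_i - \mathbf{1}) x_i = 0$, so $f$ conserves total probability. For the inequality, fix $a'_i$ with $x_i(a'_i) = 0$; then $\sum_{a_i \in A_i} \big(P_i(a'_i,a_i) - \mathbf{1}(a'_i,a_i)\big) x_i(a_i) = \sum_{a_i \in A_i} P_i(a'_i,a_i) x_i(a_i) - x_i(a'_i) = \sum_{a_i \in A_i} P_i(a'_i,a_i) x_i(a_i) \geq 0$, because $P_i$ and $x_i$ have nonnegative entries. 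Recalling that $\TC{\Delta(A_i)}{x_i} = \{ v \mid \Sigma^T v = 0,\ v(a'_i) \geq 0 \text{ whenever } x_i(a'_i) = 0 \}$, these two facts are precisely the statement $f(x_i) \in \TC{\Delta(A_i)}{x_i}$, i.e. $\Pi_{\TC{\Delta(A_i)}{x_i}}[f(x_i)] = f(x_i)$, so $f$ is tangential.

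Next, for the regret equivalence: in the mixed extension $u_i(x) = \sum_{a \in A} u_i(a)\prod_{j \in N} x_j(a_j)$ is a \emph{linear} function of $x_i$ for fixed $x_{-i}$, and its gradient $\nabla_i u_i(x)$ — which does not depend on $x_i$, having $a_i$-coordinate $\sum_{a_{-i} \in A_{-i}} u_i(a_i,a_{-i})\prod_{j \neq i} x_j(a_j)$ — therefore satisfies $u_i(y_i, x_{-i}) = \bilin{y_i}{\nabla_i u_i(x)}$ for every vector $y_i \in \mathbb{R}^{A_i}$. Applying this with $y_i = P_i x_i$ (which lies in $\Delta(A_i)$) and with $y_i = x_i$ and subtracting yields, pointwise in $x$, $u_i(P_i x_i, x_{-i}) - u_i(x) = \bilin{(P_i - \mathbf{1}) x_i}{\nabla_i u_i(x)}$; integrating both sides against $d\sigma(x)$ over $X$ gives the claimed identity.

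I do not anticipate a real obstacle here: the argument is essentially bookkeeping, and the same linearity-of-$u_i$-in-$x_i$ observation already underlies the derivation of the semicoarse constraints in Section \ref{sec:semicoarse-def}. The only two points deserving care are (i) spelling out the $x_i$-dependent description of $\TC{\Delta(A_i)}{x_i}$ so that the two (in)equalities certify tangency at \emph{every} $x_i$, relative-boundary points included, and (ii) being explicit that it is linearity (not mere differentiability) of $u_i$ in $x_i$ that licenses $u_i(y_i,x_{-i}) = \bilin{y_i}{\nabla_i u_i(x)}$ for arbitrary $y_i$.
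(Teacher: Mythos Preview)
Your proof is correct and follows essentially the same approach as the paper: left stochasticity of $P_i$ gives both tangency conditions directly, and linearity of $u_i$ in $x_i$ (so that $u_i(y_i,x_{-i}) = \bilin{y_i}{\nabla_i u_i(x)}$) yields the pointwise equality of the integrands. The paper's proof is simply terser, compressing your tangent-cone discussion into a one-line remark about column sums and nonnegative off-diagonals.
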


\begin{proof}
    The first statement is immediate from $P_i$ being left-stochastic; each individual column of $P_i$ sums up to $1$, and the off-diagonal elements of $P_i - \mathbf{1}$ are non-negative. For the second statement, it is sufficient to show the integrands are pointwise equal. We note that
    \begin{align*}
        u_i(x) & = \sum_{a \in A} \left( \prod_{j \in N} x_j(a_j) \right) \cdot u_i(a), \\
        u_i(P_ix_i, x_{-i}) & = \sum_{a'_i \in A_i, a_{-i} \in A_{-i}} \left( \prod_{j \neq N} x_j(a_j) \right) \cdot \sum_{a_i \in A} P_i(a'_i,a_i) x_i(a_i) \cdot u_i(a'_i,a_{-i}).
    \end{align*}
    However, $\partial u_i(x) / \partial x_i(a'_i) = \sum_{a_{-i} \in A_{-i}} \left( \prod_{j \neq N} x_j(a_j) \right) \cdot u_i(a'_i,a_{-i})$. Thus, $
        u_i(P_ix_i,x_{-i}) = \bilin{P_ix_i}{\nabla_i u_i(x)}$ and $u_i(x) = \bilin{x_i}{\nabla_i u_i(x)}$.
\end{proof}

\begin{proposition*}[\ref{prop:linmax}]
 Let $N \geq 2, (A_i)_{i \in N}$ be fixed where $|A_j| > 2$ for at least two players $j$. Then for any linear transformation $\phi : \Delta(A_i) \rightarrow \Delta(A_i)$ in $\Phi_{LIN} \setminus \Phi_{SEMI}$, there exists utilities $(u_i)_{i \in N}$ such that for some learning cycle in the resulting game, player $i$ incurs positive regret against $\phi$.
\end{proposition*}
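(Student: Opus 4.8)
The plan is to reduce to a zero-sum rock--paper--scissors interaction between player $i$ and a second player with more than two actions, and to exhibit a periodic trajectory of the projected gradient dynamics along which $\phi$ accumulates nonzero, hence (after an orientation choice) positive, regret. First note that the statement is vacuous unless $|A_i| > 2$: if $|A_i| \le 2$ then $A_i$ contains no triple of distinct actions, so the condition of Lemma~\ref{lem:triplet} holds vacuously, $\Phi_{SEMI}$ contains every linear $\phi$, and $\Phi_{LIN}\setminus\Phi_{SEMI}=\emptyset$. So assume $|A_i|>2$; then $i$ is one of the (at least two) players with more than two actions, and we may fix a second player $j\ne i$ with $|A_j|>2$. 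Writing $\phi(x_i)=P_ix_i$, Lemma~\ref{lem:triplet} supplies distinct actions $a_i^1,a_i^2,a_i^3\in A_i$ for which (\ref{eq:triplet}) fails; denote by $\Theta$ the (nonzero) difference $\big(P_i(a_i^2,a_i^1)+P_i(a_i^3,a_i^2)+P_i(a_i^1,a_i^3)\big)-\big(P_i(a_i^1,a_i^2)+P_i(a_i^2,a_i^3)+P_i(a_i^3,a_i^1)\big)$, and fix also distinct actions $a_j^1,a_j^2,a_j^3\in A_j$.

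Next I would build the game. Fix every player other than $i$ and $j$ at an arbitrary pure strategy and assign them constant utility $0$, so their utility gradient vanishes and they never move. For $i$ and $j$ set $u_i(a_i^k,a_j^\ell,\cdot)=M_{k\ell}$ and $u_j(a_i^k,a_j^\ell,\cdot)=-M_{k\ell}$, where $M$ is the antisymmetric rock--paper--scissors payoff matrix, and put $u_i=u_j=0$ on every outcome in which $i$ or $j$ plays outside its chosen triple. Let $F$ be the face of $X$ on which players other than $i,j$ play their fixed pure strategies and $i,j$ randomise within their triples. On $F$ the utility-gradient coordinates of $i,j$ outside their triples are $0$ (hence nonnegative) and the coordinates on the triples sum to zero (since $\Sigma^TM=0$), so the tangent-cone projection acts as the identity there; consequently $F$ is forward invariant and on it the dynamics reduces to the exact linear system $\dot y=Mz$, $\dot z=My$, where $y(t),z(t)\in\mathbb{R}^3$ collect the probabilities that $i,j$ assign to their triples. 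This system has spectrum $\{0,\pm i\sqrt3\}$ on the relevant subspace, so taking the initial condition $y(0)=z(0)=\bar y+\varepsilon v$, a small perturbation of the barycenter with $v\perp\Sigma$, $v\ne0$, the trajectory is periodic of period $T_0=2\pi/\sqrt3$, remains in the relative interior of $F$, and $y(t)$ traces a non-degenerate ellipse in the affine plane $\{\Sigma^T y=1\}$.

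It then remains to evaluate the regret. By Lemma~\ref{lem:lin-reg}, for the time-average distribution $\sigma$ over one period, $\mathrm{Reg}=\tfrac1{T_0}\int_0^{T_0}\langle(P_i-\mathbf 1)x_i(t),\nabla_iu_i(x(t))\rangle\,dt$. Because $u_i$ vanishes off the triple and $\mathrm{supp}(x_i(t))$ equals the triple, the integrand equals $\langle R\,y(t),\dot y(t)\rangle$, where $R$ is the $3\times3$ matrix $R_{km}=P_i(a_i^k,a_i^m)-\mathbf 1(a_i^k,a_i^m)$ and we used that the triple coordinates of $\nabla_iu_i(x(t))$ are $Mz(t)=\dot y(t)$. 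Splitting $R=R_s+R_a$ into symmetric and antisymmetric parts, $\langle R_sy,\dot y\rangle=\tfrac12\tfrac{d}{dt}(y^TR_sy)$ integrates to $0$ over the period, while $\langle R_ay,\dot y\rangle=\omega\cdot(y\times\dot y)$ for the axial vector $\omega$ of $R_a$; integrating, $\int_0^{T_0}y\times\dot y\,dt$ is twice the oriented vector area of the ellipse, a nonzero multiple of $\Sigma$. A direct computation of $\omega$ from the off-diagonal entries of $R$ gives $\omega_1+\omega_2+\omega_3=\tfrac12\Theta\ne0$, so $\mathrm{Reg}$ is a nonzero scalar; replacing the payoffs $(M,-M)$ of players $(i,j)$ by $(-M,M)$ reverses the orientation of the orbit and hence flips the sign of $\mathrm{Reg}$, so one of the two choices yields $\mathrm{Reg}>0$, giving the desired learning cycle.

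The main obstacle is the boundary bookkeeping: one must check carefully that $F$ is genuinely invariant under \emph{projected} gradient dynamics (so that $\sigma$ is supported on $F$ and the identity relating the triple coordinates of $\nabla_iu_i$ with $\dot y$ holds along the whole orbit), and that the initial condition can be chosen so that $y(t)$ really encloses nonzero area --- a degenerate line-segment orbit, which arises for some natural choices (e.g. perturbing only one player), would make the argument collapse. The remaining content is the linear-algebra identity $\omega_1+\omega_2+\omega_3=\tfrac12\Theta$, linking the antisymmetric part of $P_i$ restricted to the triple with the cyclic-sum defect of (\ref{eq:triplet}), which is precisely the quantity Lemma~\ref{lem:triplet} certifies to be nonzero.
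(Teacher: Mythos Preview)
Your plan is correct and follows essentially the same route as the paper: detect a triple on which (\ref{eq:triplet}) fails via Lemma~\ref{lem:triplet}, embed a zero-sum rock--paper--scissors interaction between $i$ and a second player $j$ with $|A_j|>2$ on that triple, run the synchronized circular orbit around the barycenter of the triple face, and evaluate the $\phi$-regret along the cycle. The only differences are cosmetic bookkeeping: the paper works in an explicit orthonormal basis and reads off the regret as a trace (obtaining $\epsilon^2 K/2$), whereas you split $R=P_i-\mathbf 1$ into symmetric and antisymmetric parts and compute via the axial vector and the enclosed area; and the paper disposes of the sign by assuming $K>0$ up to negating all utilities, whereas you flip $(M,-M)\mapsto(-M,M)$ at the end. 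Your explicit check that the face $F$ is invariant under the \emph{projected} dynamics (because the off-triple gradient coordinates vanish and the on-triple ones sum to zero) and your caution about degenerate line-segment orbits are both well taken and handled correctly by the choice $y(0)=z(0)$.
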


\begin{proof}
    Note that if all players have $\leq 2$ actions, then $\Phi_{LIN} = \Phi_{SEMI}$, because the notions of correlated and coarse correlated equilibria coincide and the set of semicoarse equilibria of a game is a subset of its coarse correlated equilibria. In particular, if $\phi : \Delta(A_i) \rightarrow \Delta(A_i)$ is in $\Phi_{LIN} \setminus \Phi_{SEMI}$, then player $i$ must have at least $3$ actions.

    Now, $\phi$ corresponds to a linear transformation $x_i \mapsto P_i x_i$, which implies that $x_i \mapsto P_i x_i - x_i$ is a tangential vector field over $\Delta(A_i)$, and it is sufficient to compute the regret incurred against this vector field. Moreover, if $\phi \notin \Phi_{SEMI}$, then by Lemma \ref{lem:triplet} there exists a triplet $a_i, a'_i, a''_i$ such that (\ref{eq:triplet}) does not hold. Thus, restrict attention to the case when 
    $$ K = (P_i(a_i,a'_i) + P_i(a'_i, a''_i) + P_i(a''_i,a_i)) - (P_i(a_i,a''_i) + P_i(a''_i,a'_i) + P_i(a'_i,a_i)) > 0,$$
    as the alternate case will follow from negating all players' utilities. By assumption, there exists a player $j$ whose strategy set also contains $\geq 3$ strategies. Let $a_j, a'_j, a''_j \in A_j$ be distinct, and consider the $3 \times 3$ rock-paper-scissors game:
    \begin{center}
        \begin{tabular}{ r|c|c|c| }
            \multicolumn{1}{r}{}
             &  \multicolumn{1}{c}{$a_j$}
             & \multicolumn{1}{c}{$a'_j$} & \multicolumn{1}{c}{$a''_j$} \\
            \cline{2-4}
            $a_i$ & 0,0 & 1,-1 & -1,1 \\
            \cline{2-4}
            $a'_i$ & -1,1 & 0,0 & 1,-1 \\
            \cline{2-4}
            $a''_i$ & 1,-1 & -1,1 & 0,0 \\
            \cline{2-4}
        \end{tabular}
    \end{center} 
    We will embed this game into a larger game defined for player $N$ and action sets $(A_k)_{k \in N}$. Consider the orthonormal basis of $\mathbb{R}^{A_i}$,
    $$v_1 = \begin{pmatrix}
        1/\sqrt{6} \\
        -\sqrt{2/3} \\
        1/\sqrt{6} \\
        0 \\
        \vdots \\
        0
    \end{pmatrix}, v_2 = \begin{pmatrix}
        1/\sqrt{2} \\
        0 \\
        -1/\sqrt{2} \\
        0 \\
        \vdots \\
        0
    \end{pmatrix}, v_3 = \begin{pmatrix}
        1/\sqrt{3} \\
        1/\sqrt{3} \\
        1/\sqrt{3} \\
        0 \\
        \vdots \\
        0
    \end{pmatrix}, \textnormal{ and } v_k = e_k \textnormal{ for } k > 3,$$
    where the actions are reordered such that first three coordinates correspond to the actions $a_i, a'_i, a''_i$. Analogously define a basis $w_1, w_2, w_3, \ldots, w_{|A_j|}$ over $\mathbb{R}^{A_j}$. We will fix utilities,
    \begin{align*}
        u_i(x) & = -\sqrt{3} \cdot x_i^T ( v_1 w_2^T - v_2 w_1^T) x_j, \\
        u_j(x) & = -u_i(x) \\
        u_k(x) & = 0 \ \forall \ k \neq i,j,
    \end{align*}
    which implies that players $i,j$ have the $3 \times 3$ rock-paper-scissors game along the specified triplets of actions, and have $0$ utility otherwise. Meanwhile, players $k \neq i,j$ are non-strategic, with $0$ utility over all outcomes. Now, for small $\epsilon > 0$, fix initial conditions 
    \begin{align*}
        x_i(0) & = v_3/\sqrt{3} + \epsilon \cdot v_1, \\
        x_j(0) & = w_3/\sqrt{3} + \epsilon \cdot w_1, 
    \end{align*}
    and $x_k(0)$ chosen arbitrarily for $k \neq i,j$. Then 
    \begin{align*}
    x_i(t) & = v_3 / \sqrt{3} + \epsilon \cdot (\cos(t\sqrt{3}) \cdot v_1 + \sin(t\sqrt{3}) \cdot v_2) \\ 
    x_j(t) & = w_3 / \sqrt{3} + \epsilon \cdot (\cos(t\sqrt{3}) \cdot w_1 + \sin(t\sqrt{3}) \cdot w_2) \\
    x_k(t) & = x_k(0) \ \forall \ k \neq i,j,
    \end{align*}
    is a solution to the gradient dynamics of the game, that is, $\frac{dx_k(t)}{dt} = \nabla_k u_k(x(t))$ for every player $k \in N$. Moreover, for player $i$, the regret against the strategy modification $\phi$ is given by
    \begin{align*}
        & \quad\lim_{T \rightarrow \infty} \frac{1}{T} \int_0^T dt \cdot x_i(t)^T(P-\mathbf{1})^T\nabla_i u_i(x(t)) \\ 
        & = \frac{\sqrt{3}}{2\pi} \int_0^{2\pi/\sqrt{3}} dt \cdot \textnormal{tr}[(P-\mathbf{1})^T(\nabla_i u_i(x(t)) x_i^T)] \\
        & = \frac{\sqrt{3}}{2\pi} \int_0^{2\pi/\sqrt{3}} dt \cdot \textnormal{tr}[(x_i(t) \nabla_i u_i(x(t))^T)(P-\mathbf{1})] \\
        & = \frac{\sqrt{3}}{2\pi} \int_0^{2\pi/\sqrt{3}} dt \cdot \textnormal{tr}\left[\epsilon^2 \sqrt{3} \cdot \left(\cos^2(t\sqrt{3}) \cdot v_1 v_2^T - \sin^2(t\sqrt{3}) \cdot v_2 v_1^T \right) (P - \mathbf{1}) \right] \\
        & = \frac{\epsilon^2 \sqrt{3}}{2} \cdot \textnormal{tr}[(v_1 v_2^T - v_2 v_1^T )(P - \mathbf{1})].
    \end{align*}
    Here, the first inequality is by noting that the limit distribution is uniform on the cyclic trajectory, and by the usual trick of taking the trace of a scalar. The second inequality follows from the fact that the trace is cyclic and $\textnormal{tr}(A) = \textnormal{tr}(A^T)$, and the third inequality is via factoring out the cyclic terms whose integral evaluates to $0$. Finally, we have 
    \begin{align*}
       \sqrt{3} \cdot \textnormal{tr}[(v_1v_2^T - v_2 v_1^T)(P - \mathbf{1})] 
       = \textnormal{tr}\left[ \begin{pmatrix}
        0 & 1 & -1 \\
        -1 & 0 & 1 \\
        1 & -1 & 0 \\
    \end{pmatrix} \begin{pmatrix}
        P_i(a_i,a_i) & P_i(a_i,a'_i) & P_i(a_i,a''_i) \\
        P_i(a'_i,a_i) & P_i(a'_i,a'_i) & P_i(a'_i,a''_i) \\
        P_i(a''_i,a_i) & P_i(a''_i,a'_i) & P_i(a''_i,a''_i) \\
    \end{pmatrix} \right] = K,
    \end{align*}
    which implies that player $i$ incurs regret $\epsilon^2 K / 2$ against $\phi$.
\end{proof}

\subsubsection{Polyhedral Representations of Semicoarse Equilibria}\label{sec:representation-proofs}

\begin{theorem*}[\ref{thm:strategy-def}]
    The set of inequalities (\ref{con:semicoarse}) are equivalent to the inequalities
    \begin{equation}
        \sum_{a \in A} \sigma(a) \cdot \left[ \left( \sum_{a'_i \in A_i} P_i(a'_i,a_i) 
 \cdot u_i(a'_i,a_{-i})\right) - u_i(a) \right] \leq 0 \ \forall \ P_i \in \mathbf{P}(A_i),
    \end{equation}
    where $\mathbf{P}(A_i)$ is the set of left stochastic matrices corresponding to the strategy modifications:
    \begin{enumerate}
        \item For a proper subset $S_i \subset A_i$, each action $a_i \in S_i$ is transformed to the uniform distribution on $A_i \setminus S_i$, whereas for $a_i \in S_i \setminus A_i$, $a_i \mapsto a_i$.
        \item For a cycle of actions $C_i = \{a_{i1}, a_{i2}, \ldots, a_{ik}\}$, writing $a_{i(k+1)} = a_{i1}$, each action $a_{i\ell}$ is transformed to the uniform distribution on $\{a_{i(\ell-1)},a_{i(\ell+1)}\}$, whereas for $a_i \in A_i \setminus C_i$, $a_i \mapsto a_i$.
    \end{enumerate}
\end{theorem*}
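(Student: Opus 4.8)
The plan is to reduce the semicoarse inequalities (\ref{con:semicoarse}), which are indexed by a polyhedral cone of pairs $(Q_i,q_i)$, to a finite generating set and to identify that set with the two families of modifications in $\mathbf{P}(A_i)$. First I fix a player $i$. As in Definition \ref{def:semicoarse} and Proposition \ref{prop:linsubset}, (\ref{con:semicoarse}) depends on $(Q_i,q_i)$ only through the matrix $R_i$ with $R_i(a'_i,a_i)=Q_i(a'_i,a_i)+q_i(a'_i)$, and by Lemma \ref{lem:lin-reg} it is exactly the no-regret constraint for the linear modification $x_i\mapsto(\mathbf 1+R_i)x_i$. Conditions (\ref{con:preserve}), (\ref{con:tangency}), (\ref{con:symmetry}) say precisely that $R_i$ has zero column sums, nonnegative off-diagonal entries, and admits a decomposition $R_i=Q_i+q_i\Sigma^{T}$ with $Q_i$ symmetric --- the last equivalent, by the argument of Lemma \ref{lem:triplet}, to the triplet identity (\ref{eq:triplet}) for $\mathbf 1+R_i$. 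Hence the admissible $R_i$ form a polyhedral cone $\mathcal R_i$, the map $(Q_i,q_i)\mapsto R_i$ pins down $q_i$ up to an additive multiple of $\Sigma$, and it suffices to prove $\mathcal R_i=\mathrm{cone}\{P-\mathbf 1:P\in\mathbf P(A_i)\}$.

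For the easy inclusion I show every $P\in\mathbf P(A_i)$ satisfies (\ref{eq:triplet}), so $P-\mathbf 1\in\mathcal R_i$, and $\mathcal R_i$ being a cone gives $\mathrm{cone}\{P-\mathbf 1\}\subseteq\mathcal R_i$. A cycle-deviation matrix is symmetric, so (\ref{eq:triplet}) is immediate. For a subset deviation $S_i$, off the diagonal $P(a,b)$ equals $1/|A_i\setminus S_i|$ exactly when $b\in S_i$, $a\notin S_i$; both sides of (\ref{eq:triplet}) then count, up to that factor, the number of $S_i\!\to\!(A_i\setminus S_i)$ transitions around the directed triangle on $\{a,b,c\}$, once in each orientation, and these counts coincide because crossings into and out of $S_i$ balance along a cycle. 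The same computation (via Lemma \ref{lem:triplet}) shows that the constant vector attached to the subset deviation $S_i$ is $-|A_i\setminus S_i|^{-1}\sum_{b\in S_i}e_b$, up to a constant --- the fact used next.

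For the decomposition, take $R_i\in\mathcal R_i$ with constant vector $q_i$; nonnegativity of off-diagonals gives $R_i(a',a)-R_i(a,a')=q_i(a')-q_i(a)$ and hence $R_i(a',a)\ge q_i(a')-q_i(a)$. Order actions so that $q_i(a_1)\le\dots\le q_i(a_n)$, put $S^{(m)}=\{a_1,\dots,a_m\}$, $\lambda_m=q_i(a_{m+1})-q_i(a_m)\ge0$ and $\mu_m=(n-m)\lambda_m$, and set $R_i'=R_i-\sum_{m=1}^{n-1}\mu_m\bigl(P^{(S^{(m)})}-\mathbf 1\bigr)$, where $P^{(S^{(m)})}$ is the subset-deviation matrix for $S^{(m)}$. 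By the previous step the subtracted term has constant vector $-\sum_m\lambda_m\sum_{b\in S^{(m)}}e_b$, which telescopes to $q_i$ modulo a constant, so $R_i'$ has trivial skew part, i.e.\ is symmetric; and for $j>k$ its $(a_j,a_k)$ entry equals $R_i(a_j,a_k)-\sum_{m=k}^{j-1}\lambda_m=R_i(a_j,a_k)-\bigl(q_i(a_j)-q_i(a_k)\bigr)\ge0$, while for $j<k$ it is unchanged and hence $\ge0$. Thus $R_i'$ is symmetric with zero column sums and nonnegative off-diagonals, so $-R_i'$ is a nonnegatively weighted graph Laplacian: $-R_i'=\sum_{a<b}w_{ab}(e_a-e_b)(e_a-e_b)^{T}$ with $w_{ab}\ge0$, and since $-(e_a-e_b)(e_a-e_b)^{T}=P-\mathbf 1$ for $P$ the $2$-cycle deviation on $\{a,b\}$, $R_i'$ is a nonnegative combination of cycle deviations. (Equivalently one applies the symmetric doubly-stochastic decomposition of \cite{cruse1975note} to $\mathbf 1+\delta R_i'$ and splits the resulting permutations into disjoint cycles, recovering exactly the cycle deviations of $\mathbf P(A_i)$.) Combining, $R_i\in\mathrm{cone}\{P-\mathbf 1:P\in\mathbf P(A_i)\}$, which with the easy inclusion gives equality of the cones and hence equivalence of the two inequality systems.

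The real work is the peeling in Step 3: one must cancel the constant vector $q_i$ using only \emph{nonnegative} multiples of subset deviations while preserving tangency (nonnegative off-diagonals), and the correct choice is forced by sorting actions along $q_i$ and exploiting the inequality $R_i(a',a)\ge q_i(a')-q_i(a)$. The symmetric residual is then handled by the elementary generation of weighted Laplacians by single edges --- the role played by \cite{cruse1975note}. A minor check is that retaining longer cycles in $\mathbf P(A_i)$, redundant for generation, is harmless because each such matrix is symmetric and therefore admissible.
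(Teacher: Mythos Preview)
Your proof is correct and follows the same two-stage strategy as the paper: first peel off the constant vector $q_i$ by subtracting nonnegative multiples of subset deviations (yours sorted increasingly, the paper's decreasingly --- a cosmetic difference), then decompose the symmetric residual. The one genuine difference is in the second stage: the paper applies Cruse's theorem to write $\mathbf 1+\delta R_i'$ as a convex combination of symmetrised permutation matrices and then splits each permutation into cycles, whereas you observe directly that $-R_i'$ is a weighted graph Laplacian and hence a nonnegative sum of rank-one edge terms $(e_a-e_b)(e_a-e_b)^T$, each corresponding to a $2$-cycle deviation. Your route is more elementary and self-contained, avoiding the external reference entirely; the paper's route, on the other hand, naturally produces the longer cycles in $\mathbf P(A_i)$ rather than treating them (as you correctly do) as redundant generators that must nonetheless be checked admissible. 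Both arguments are valid, and the peeling step --- where the key inequality $R_i(a',a)\ge q_i(a')-q_i(a)$ guarantees that off-diagonals stay nonnegative after subtraction --- is handled identically.
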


\begin{proof}
    As mentioned, it is sufficient to characterise the rays of the cone defined via the inequalities (\ref{con:preserve}), (\ref{con:tangency}), and (\ref{con:symmetry}); any pair $(Q_i,q_i)$ will decompose into a conical combination of such rays, and each ray corresponds via Proposition \ref{prop:linsubset} to a specific strategy modification. Let $(Q_i,q_i)$ be a pair which satisfies the tangency conditions (\ref{con:preserve}), (\ref{con:tangency}) and the symmetry condition (\ref{con:symmetry}). Moreover, denote $A_i = \{1,2,..,m\}$, and up to reordering the actions of player $i$, suppose that $q_i$ is in decreasing order; $q_i(a'_i) \geq q_i(a_i)$ for any $a'_i \leq a_i$ in $A_i$. Moreover, suggested by the proof of Lemma \ref{lem:triplet}, we may fix $q_i(m) = 0$ without loss of generality. Writing $\Sigma$ for the all-ones vector of appropriate dimension, if the pair $(Q_i,q_i)$ satisfies the tangency and the symmetry constraints, then for any $c \in \mathbb{R}$, so too does the pair $(Q_i - c \Sigma \Sigma^T, q_i + c \Sigma)$; moreover, for any $x_i \in \Delta(A_i)$
    \begin{align*}
        (Q_i - c\Sigma \Sigma^T) x_i + q_i + c\Sigma & = Q_i x_i - c \Sigma (\Sigma^T x_i) + q_i + c\Sigma \\
        & = Q_i x_i - c\Sigma + q_i + c\Sigma \\
        & = Q_i x_i + q_i,
    \end{align*}
    so the pairs $(Q_i,q_i)$, $(Q_i - c \Sigma \Sigma^T, q_i + c \Sigma)$ generate the same (tangential) gradient field over $\Delta(A_i)$. We note that this implies $\pm(-\Sigma\Sigma^T,\Sigma)$ are trivial rays, corresponding to the identity on $\Delta(A_i)$.

    Now, let $s$ be the maximum positive valued index of $q_i$, and consider the pair $(R^s_i,r^s_i)$ defined as
    $$ r^s_i(a'_i) = \begin{cases}
        1 & a'_i \leq s \\
        0 & a'_i > s
    \end{cases}, R^s_i(a'_i,a_i) = \begin{cases}
        -1 & a'_i,a_i \leq s \\
        -s & s < a'_i = a_i \\
        0 & \textnormal{otherwise.}
    \end{cases}$$
    We claim that $R^s_i$ attains the pointwise minimum possible bound on its off-diagonal elements. Checking the tangency constraints, we observe that if $a_i \leq s$, then $R^s_i(a'_i,a_i) + r^s_i(a'_i) = 0$ for any $a'_i$. In turn, if $a_i > s$, then
    $$\sum_{a'_i} R^s_i(a'_i,a_i) + r^s_i(a'_i) = s \cdot 1 - s = 0.$$
    Therefore, the constraint (\ref{con:preserve}) holds. Meanwhile, the constraint (\ref{con:tangency}) necessitates
    $$ R^s_i(a'_i,a_i) + r^s_i(a'_i) \geq 0 \ \forall \ a'_i, a_i.$$
    Such an off-diagonal element is non-zero only when $a'_i \leq s, a_i > s$, in which case the constraint does not bind. However, by the symmetry constraint, $R^s_i(a'_i,a_i) = R^s_i(a_i,a'_i)$, and we have 
    $$R^s_i(a_i,a'_i) + r^s_i(a_i) = 0.$$
    Thus, accounting for the symmetry constraint, each off-diagonal entry of $R^s_i$ is set minimally. We thus turn our attention to the pair $(Q'_i,q'_i) = (Q_i - q_i(s) \cdot R^s_i,q_i - q_i(s) \cdot  r^s_i)$. By construction the pair satisfies the constraints (\ref{con:preserve}) and (\ref{con:tangency}), since they hold for the pairs $(Q_i,q_i)$ and $(R^s_i,r^s_i)$ with equality. Checking the constraint (\ref{con:tangency}), we require
    $$ Q_i(a'_i,a_i) + q_i(a'_i) \geq q_i(s) \cdot (R^s_i(a'_i,a_i) - r^s_i(a'_i))$$
    for any $a'_i \neq a_i$. Unless $a'_i \leq s$ and $a_i > s$, the RHS equals $0$ and the inequality holds since the pair $(Q_i,q_i)$ already satisfies the tangency constraints. In the former case, we thus require
    $$ Q_i(a'_i,a_i) + q_i(a'_i) \geq q_i(s).$$
    However, we know that 
    $$ Q_i(a_i,a'_i) + q_i(a_i) \geq 0,$$
    and by the choice of $s$ and the ordering of $q_i$, $q_i(a_i) = 0$. Therefore, $Q_i(a'_i,a_i) = Q_i(a_i,a'_i) \geq 0$. Meanwhile, since $a'_i \leq s$, $q_i(a'_i) \geq q_i(s)$. Therefore, the pair $(Q'_i, q'_i)$ indeed satisfies the tangency constraints. Moreover, $q'_i$ is in descending order, with either $q'_i = 0$ identically, or with a minimum non-zero index $s' < s$.

    As a consequence, we may iteratively decompose the pair $(Q_i,q_i)$ as 
    $$ (Q_i,q_i) = (Q''_i,0) + \sum_{s = 1}^{m-1} \lambda_s \cdot (R^s_i,r^s_i) $$
    for some non-negative numbers $\lambda_s$. Invoking Proposition \ref{prop:linsubset} with $\delta = 1/s$, the pair $(R^s_i, r^s_i)$ corresponds to the linear transformation $\Delta(A_i) \rightarrow \Delta(A_i)$ defined via the matrix $P^s_i$,
    $$ P^s_i(a'_i,a_i) = \begin{cases}
        \frac{1}{s} & a'_i \leq s, a_i > s \\
        1 & a'_i = a_i \leq s \\
        0 & \textnormal{otherwise,}
    \end{cases} $$
    which covers the case (1). In turn, again invoking Proposition \ref{prop:linsubset}, $P''_i = \mathbf{1} - \delta'' \cdot Q''_i$ is a symmetric left-stochastic (and hence doubly stochastic) matrix for some $\delta'' > 0$. By \cite{cruse1975note}, $P''_i$ is a convex combination 
    $$ P''_i = \sum_{P \in I} \frac{\lambda_P}{2} \cdot (P + P^T)$$
    of a set $I$ of $m \times m$ permutation matrices $P$. Therefore, 
    $$ Q''_i = \sum_{P \in I} \frac{\lambda_P}{2\delta''} \cdot (P + P^T - 2\mathbf{1}).$$
    We recall that each permutation $P$ can be decomposed into disjoint directed cycles. In this case, $P + P^T = \sum_{C \textnormal{ a cycle of } P} C + C^T$ for the directed graph of each cycle of $P$. Writing $\mathbf{1}_C$ as the diagonal matrix which equals the identity on the elements of $C$ and $0$ otherwise, we thus have
    $$ Q''_i = \sum_{P \in I} \sum_{C \textnormal{ a cycle of } P} \frac{\lambda_P}{2\delta} \cdot (C + C^T - 2\mathbf{1}_C).$$
    For each cycle $C$, the matrix $C + C^T - 2\mathbf{1}_C$ is readily seen to satisfy the tangency and symmetry constraints. Invoking Proposition \ref{prop:linsubset} once more with $\delta = 1/2$, each matrix $C + C^T - 2\mathbf{1}_C$ corresponds to the linear transformation $\Delta(A_i) \rightarrow \Delta(A_i)$, defined by the matrix $P_i^C$ where
    $$ P_i^C(a'_i,a_i) = \begin{cases}
        \frac{1}{2} & (a'_i,a_i) \textnormal{ or } (a_i,a'_i) \textnormal{ is a directed edge of }C, \\
        -1 & a'_i = a_i \in C, \\
        0 & \textnormal{otherwise.}
    \end{cases} $$
    This, of course, covers the case (2).
\end{proof}

\begin{theorem*}[\ref{thm:short-extension}]
    For a normal-form game $\Gamma$, suppose that $(\sigma,\omega,\rho)$ satisfies the linear (in)equalities:
    \begin{align*}
        \sum_{a \in A} \sigma(a) & = 1, \\
        \forall \ i \in N, \ a_i \neq a'_i, \gamma_i(a'_i,a_i) + \rho_i(a_i,a'_i) - \rho_i(a'_i,a_i) + \sum_{a_{-i} \in A_{-i}} \sigma(a) \cdot \left( u_i(a'_i,a_{-i}) - u_i(a) \right) & = 0 \\
        \forall i \in N, \forall a'_i \in A_i, \sum_{a_i \neq a'_i} \gamma_i(a'_i,a_i) + \sum_{a \in A} \sigma(a) \cdot \left( u_i(a'_i,a_{-i}) - u_i(a) \right) & = 0 \\
        \sigma,(\gamma_i)_{i \in N} & \geq 0.
    \end{align*}
    Then $\sigma$ is a semicoarse correlated equilibrium of $\Gamma$. As a consequence, the set of semicoarse correlated equilibrium of a normal-form game can be represented via an extension which is polynomial in the size of the normal-form game, with $|A| + \sum_{i \in N} \frac{3}{2}|A_i|(|A_i|-1)$ variables and $1 + \sum_{i \in N} |A_i|^2$ constraints.
\end{theorem*}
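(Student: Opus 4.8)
The plan is to read the system (\ref{cons:prob})--(\ref{cons:q}) as the dual feasibility conditions of a family of linear programs, one per player, whose optimal values detect membership of $\sigma$ in the semicoarse equilibrium set, and then to invoke LP duality in both directions.

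\emph{Reformulating the semicoarse conditions.} Fix a candidate $\sigma \in \Delta(A)$ and a player $i$. The inequalities (\ref{con:semicoarse}) say exactly that the linear functional
\[ L_i(Q_i,q_i) = \sum_{a \in A} \sigma(a) \sum_{a'_i \in A_i} \bigl( Q_i(a'_i,a_i) + q_i(a'_i) \bigr) u_i(a'_i,a_{-i}) \]
is $\le 0$ on the polyhedral cone $\mathcal{C}_i$ cut out by the tangency conditions (\ref{con:preserve}), (\ref{con:tangency}) and the symmetry condition (\ref{con:symmetry}). Because (\ref{con:preserve}) forces $\sum_{a'_i}(Q_i(a'_i,a_i)+q_i(a'_i)) = 0$ for each $a_i$, on $\mathcal{C}_i$ this functional coincides with $\tilde L_i(Q_i,q_i) = \sum_{a} \sigma(a) \sum_{a'_i}(Q_i(a'_i,a_i)+q_i(a'_i))\bigl( u_i(a'_i,a_{-i}) - u_i(a) \bigr)$, whose coefficient on each diagonal entry $Q_i(a_i,a_i)$ vanishes identically. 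Since $\mathcal{C}_i$ is a cone containing the origin and $\tilde L_i$ is linear, the program $\max\{\tilde L_i(Q_i,q_i) : (Q_i,q_i) \in \mathcal{C}_i\}$ is feasible and has optimal value $0$ if $\sigma$ satisfies every player-$i$ constraint (\ref{con:semicoarse}), and $+\infty$ otherwise; these are the only two cases.

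\emph{Dualising.} I would attach a free multiplier $\mu_i(a_i)$ to each equality (\ref{con:preserve}), a free multiplier $\rho_i$ to the equalities (\ref{con:symmetry}) — imposing one copy per ordered pair $a'_i \ne a_i$, so that the antisymmetric combination $\rho_i(a_i,a'_i) - \rho_i(a'_i,a_i)$ is what shows up, only $\binom{|A_i|}{2}$ of these entries being independent — and a nonnegative multiplier $\gamma_i(a'_i,a_i) \ge 0$ to each inequality (\ref{con:tangency}). The dual is a pure feasibility problem, all primal right-hand sides being $0$, and its stationarity equations are one per primal variable: the equation attached to the diagonal variable $Q_i(a_i,a_i)$ reads $\mu_i(a_i)=0$, so the $\mu_i$ block is forced to vanish and hence is absent from the final system; the equation attached to $Q_i(a'_i,a_i)$ with $a'_i \ne a_i$ becomes precisely (\ref{cons:Q-off}); and the equation attached to $q_i(a'_i)$ becomes precisely (\ref{cons:q}). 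Since the $\gamma_i,\rho_i$ for distinct players are disjoint and $\sigma$ is the only shared variable, feasibility of (\ref{cons:prob})--(\ref{cons:q}) is equivalent to feasibility of the per-player duals for all $i$. Weak LP duality then gives the first assertion: a solution $(\sigma,(\gamma_i),(\rho_i))$ certifies $\tilde L_i \le 0$ on $\mathcal{C}_i$ for every $i$, hence $L_i \le 0$ there, hence $\sigma$ is semicoarse (it lies in $\Delta(A)$ by (\ref{cons:prob}) and $\sigma \ge 0$). For the ``extension'' claim I would run strong LP duality the other way: if $\sigma$ is semicoarse then each inner program has finite value $0$, so — no constraint qualification being needed for linear programs — each dual is feasible, which after the forced $\mu_i = 0$ yields the desired $(\gamma_i,\rho_i)$. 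The stated counts are then bookkeeping: $|A|$ variables for $\sigma$, and per player $|A_i|(|A_i|-1)$ multipliers $\gamma_i$ together with $\binom{|A_i|}{2}$ independent $\rho_i$, i.e.\ $\tfrac32 |A_i|(|A_i|-1)$; and one constraint (\ref{cons:prob}), plus per player $|A_i|(|A_i|-1)$ constraints (\ref{cons:Q-off}) and $|A_i|$ constraints (\ref{cons:q}), i.e.\ $|A_i|^2$.

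\emph{Main obstacle.} The delicate point is carrying out the transposition without sign errors — in particular handling the two mutually redundant copies of each symmetry constraint (\ref{con:symmetry}) so that their multipliers combine into $\rho_i(a_i,a'_i) - \rho_i(a'_i,a_i)$, and checking that the diagonal entries $Q_i(a_i,a_i)$ genuinely occur only in (\ref{con:preserve}) (trivially in (\ref{con:symmetry}), never in (\ref{con:tangency})), which is exactly what kills the multipliers $\mu_i$ and explains their absence from the stated extension. Everything else — the cone/value dichotomy, weak and strong duality for linear programs, and the counting — is routine.
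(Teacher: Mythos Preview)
Your proposal is correct and follows essentially the same approach as the paper: both dualise, per player, the inner LP $\max\{L_i : (Q_i,q_i)\in\mathcal C_i\}$ with multipliers $\mu_i,\gamma_i,\rho_i$ attached to (\ref{con:preserve}), (\ref{con:tangency}), (\ref{con:symmetry}), and then eliminate $\mu_i$ to arrive at exactly (\ref{cons:Q-off})--(\ref{cons:q}). The only cosmetic difference is that you first rewrite the objective as $\tilde L_i$ (using (\ref{con:preserve}) to subtract $u_i(a)$), which forces $\mu_i(a_i)=0$ from the diagonal stationarity equation; the paper keeps the original objective, obtains $\mu_i(a_i)=\sum_{a_{-i}}\sigma(a)\,u_i(a)$ from the diagonal equation, and substitutes---the resulting systems coincide.
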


\begin{proof}
    Suppose that $\sigma$ is a semicoarse equilibrium. Then $\sigma$ is a solution of value $0$ for the feasibility problem,
    \begin{align}
        \min_{\sigma \geq 0, \sum_{a \in A} \sigma(a) = 1} 0 + \sum_{i \in N} \max_{(Q_i,q_i)} \sum_{a \in A} \sigma(a) \cdot \sum_{a'_i \in A_i} (Q_i(a'_i,a_i) & + q_i(a'_i)) \cdot u_i(a'_i,a_{-i}) \textnormal{ subject to } \label{opt:feas} \\
        \sum_{a'_i \in A_i} Q_i(a'_i,a_i) + q_i(a'_i) & = 0 \ \forall \ a_i \in A_i \tag{$\mu_i(a_i)$}\\
        -Q_i(a'_i,a_i) - q_i(a'_i) & \leq 0 \ \forall \ a_i, a'_i \in A_i, a_i \neq a'_i \tag{$\gamma_i(a'_i,a_i)$}\\
        Q_i(a'_i,a_i) - Q_i(a_i,a'_i) & = 0 \ \forall \ a_i, a'_i \in A_i, a_i \neq a'_i. \tag{$\rho_i(a'_i,a_i)$}
    \end{align}
    We will proceed by taking the LP dual of a given inner maximisation problem for each player $i$. The dual LP is given,
    \begin{align}
        \min_{\gamma_i \geq 0, \mu_i, \rho_i} 0 \textnormal{ subject to } & \\
        \mu_i(a_i) - \gamma_i(a'_i,a_i) + \rho_i(a'_i,a_i) - \rho_i(a_i,a'_i) & = \sum_{a_{-i} \in A_{-i}} \sigma(a) \cdot u_i(a'_i,a_{-i}) \tag{$Q_i(a'_i,a_i), a'_i \neq a_i$} \\
        \mu_i(a_i) & = \sum_{a_{-i} \in A_{-i}} \sigma(a) \cdot u_i(a) \tag{$Q_i(a_i,a_i)$} \label{cons:subs}\\
        \sum_{a_i \in A_i} \mu_i(a_i) - \mathbb{I}[a_i \neq a'_i] \gamma_i(a'_i,a_i) & = \sum_{a \in A} \sigma(a) \cdot u_i(a'_i,a_{-i}). \tag{$q_i(a'_i)$}
    \end{align}
    The constraints (\ref{cons:subs}) allow us to fix $\mu_i(a_i)$. After rearranging the constraints, we are left with 
    \begin{align*}
        \forall \ a_i \neq a'_i, \gamma_i(a'_i,a_i) + \rho_i(a_i,a'_i) - \rho_i(a'_i,a_i) + \sum_{a_{-i} \in A_{-i}} \sigma(a) \cdot \left( u_i(a'_i,a_{-i}) - u_i(a) \right) & = 0 \\
        \forall a'_i \in A_i, \sum_{a_i \neq a'_i} \gamma_i(a'_i,a_i) + \sum_{a \in A} \sigma(a) \cdot \left( u_i(a'_i,a_{-i}) - u_i(a) \right) & = 0.
    \end{align*}
    Therefore, the feasibility problem (\ref{opt:feas}) may be written as 
    \begin{align*}
        \min_{\sigma, (\gamma_i,\rho_i)_{i \in N}} 0 \textnormal{ subject to } & \\
        \sum_{a \in A} \sigma(a) & = 1 \\
        \forall \ i \in N, \ a_i \neq a'_i, \gamma_i(a'_i,a_i) + \rho_i(a_i,a'_i) - \rho_i(a'_i,a_i) + \sum_{a_{-i} \in A_{-i}} \sigma(a) \cdot \left( u_i(a'_i,a_{-i}) - u_i(a) \right) & = 0 \\
        \forall i \in N, \forall a'_i \in A_i, \sum_{a_i \neq a'_i} \gamma_i(a'_i,a_i) + \sum_{a \in A} \sigma(a) \cdot \left( u_i(a'_i,a_{-i}) - u_i(a) \right) & = 0 \\
        \sigma,(\gamma_i)_{i \in N} & \geq 0.
    \end{align*}
    For each player $i$, there are $|A_i|(|A_i|-1)$ variables $\gamma_i(a'_i,a_i)$. Meanwhile, we actually only require $\binom{|A_i|}{2}$ of the variables $\rho_i(a'_i,a_i)$, as we may fix $\rho_i(a'_i,a_i) = -\rho_i(a_i,a'_i)$ for any distinct pair of actions. Finally, each player $i$ has $|A_i|^2$ ``equilibrium constraints''.
\end{proof}

\subsection{Semicoarse Equilibrium of (Simple) Auction Games}

\subsubsection{Convergence to Equilibrium in Bertrand Competition}\label{sec:bertrand-proof}

We start our discussion here by first pointing out the Nash equilibria of interest for Bertrand competition, as in \cite{deng2022nash} (Proposition 3). Similarly to their paper, the proof is omitted, since the observation is immediate.

\begin{lemma}
    In our discretised Bertrand competition setting, if $(p-c/n)D(p)$ is strictly concave and there are $m \geq 2$ firms of lowest marginal cost $c/n$, the only pure Nash equilibria of the game have:
    \begin{enumerate}
        \item At least two firms $i,j \leq m$ post price $c/n$, and all other firms post prices $\geq c/n$, with the equality strict for firms $> m$.
        \item All firms of marginal cost $c/n$ post price $(c+1)/n$, and firms $> m$ post prices $\geq (c+1)n$, with the equality strict whenever firm $i > m$ has marginal cost $> (c+1)/n$.
    \end{enumerate}
    These are also all the pure strategy Nash equilibria if $(p-c/n)D(p)$ is weakly concave, and $m \geq 3$. If instead $m = 2$, $D((c+1)/n) = D((c+2)/n)$, and there are no firms of marginal price $(c+1)/n$, there are also pure strategy Nash equilibria where both firms of lowest marginal price bid $(c+2/n)$ and all other firms bid $> (c+2/n)$. 
\end{lemma}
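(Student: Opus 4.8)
The plan is a direct equilibrium-deviation analysis on the price grid $\{c/n,(c+1)/n,\dots,1\}$, organised by the value of the minimum posted price. Throughout I would write $\pi(p)=(p-c/n)D(p)$ for the profit of a lowest-cost firm that captures all demand at price $p$, so $\pi$ is concave on $[c/n,1]$, $\pi(c/n)=0$, and $\pi(p)>0$ for $p\in(c/n,1]$ since $D>0$; relabel so that firms $1,\dots,m$ have marginal cost $c/n$ and firms $>m$ have strictly larger marginal cost. The one quantitative input I would extract first is a discrete concavity bound: for $k\ge 1$, interpolating $\pi$ between $\pi(c/n)=0$ and $\pi((c+k)/n)$ at the grid point $(c+k-1)/n$ gives $\pi\bigl((c+k-1)/n\bigr)\ge\tfrac{k-1}{k}\,\pi\bigl((c+k)/n\bigr)$, with the inequality strict when $\pi$ is strictly concave and $k\ge 2$, and (under weak concavity) equality only if $\pi$ is affine on $[c/n,(c+k)/n]$.

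Step one: in any pure Nash equilibrium the minimum price $p_{\min}$ satisfies $p_{\min}\ge c/n$, since a firm posting the minimum below its own marginal cost earns a negative payoff and can deviate to a price above everyone, earning $0$. Step two: rule out $p_{\min}\ge(c+2)/n$. A lowest-cost firm is either not posting $p_{\min}$ --- then it earns $0$ and undercutting to $p_{\min}-1/n\in((c+1)/n,1]$ yields $\pi(p_{\min}-1/n)>0$, a profitable deviation --- or it is among the $|W|$ firms at $p_{\min}$, earning $\tfrac{1}{|W|}\pi(p_{\min})$, and undercutting to $p_{\min}-1/n$ earns $\pi(p_{\min}-1/n)\ge\tfrac{k-1}{k}\pi(p_{\min})$ where $(c+k)/n=p_{\min}$, $k\ge 2$. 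Since $\tfrac{k-1}{k}\ge\tfrac12\ge\tfrac1{|W|}$, this is strictly profitable unless simultaneously $|W|=2$, $k=2$, the concavity bound is tight (i.e.\ $\pi$ affine on $[c/n,(c+2)/n]$, which unwinds to $D((c+1)/n)=D((c+2)/n)$), and $m=2$ with both lowest-cost firms at $(c+2)/n$. A short check --- no firm of marginal cost $(c+1)/n$ can be present (such a firm would deviate from above into the three-way tie at $(c+2)/n$ at strictly positive profit), and all remaining firms must bid strictly above $(c+2)/n$ (else firm $1$ breaks the enlarged tie by undercutting to $(c+1)/n$) --- then isolates exactly the stated exceptional family, and verifying the indifference/worseness of each unilateral move from that configuration confirms it is an equilibrium.

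Step three handles $p_{\min}=c/n$: a firm of marginal cost $>c/n$ posting $c/n$ earns a negative payoff, hence is absent from the minimum; if only one (lowest-cost) firm sits at $c/n$ it can move up to $(c+1)/n$, becoming the unique minimum or tying there, in either case earning strictly positive profit, so at least two lowest-cost firms must post $c/n$; conversely, when $\ge 2$ lowest-cost firms post $c/n$, every firm's payoff is either $0$ or unaffected by any unilateral move, giving case (1) (the other firms forced strictly above $c/n$ simply because posting $c/n$ would put them in $W$). Step four handles $p_{\min}=(c+1)/n$ symmetrically: any lowest-cost firm not at $(c+1)/n$ would deviate there to share $\pi((c+1)/n)>0$, so all $m$ of them post $(c+1)/n$; any firm of marginal cost $>(c+1)/n$ at $(c+1)/n$ has negative payoff, so those sit strictly above; a firm of marginal cost exactly $(c+1)/n$ is indifferent between $(c+1)/n$ and higher prices --- this is case (2), and the no-deviation check (drop to $c/n$, match, or raise) is routine. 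Assembling steps one through four gives that the pure Nash equilibria are exactly cases (1), (2), and the exceptional family, the latter vanishing whenever $\pi$ is strictly concave, or weakly concave with $m\ge 3$.

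The main obstacle I anticipate is step two: lining up the equality case of the discrete concavity bound precisely with the exceptional family --- in particular checking that $|W|=2$, $k=2$ forces $m=2$, excludes marginal-cost-$(c+1)/n$ firms, and pins all other firms strictly above $(c+2)/n$, and that this family genuinely is a Nash equilibrium rather than merely escaping the single undercutting deviation. Everything else is a finite, if slightly tedious, enumeration of the four possible deviation types (undercut by $1/n$, match the minimum, raise, or drop to $c/n$) on the discrete grid.
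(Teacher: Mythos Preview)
Your proposal is correct, and the paper in fact omits the proof entirely, remarking only that ``the observation is immediate.'' Your direct case analysis on the minimum posted price, together with the discrete concavity estimate $\pi((c+k-1)/n)\ge\tfrac{k-1}{k}\pi((c+k)/n)$, is the natural route and handles all cases cleanly. A couple of minor points worth tightening: in Step~2 you implicitly use that all $m$ lowest-cost firms must sit at $p_{\min}$ (else the first sub-case applies to one of them), which is what forces $|W|\ge m\ge 2$ and hence $|W|=2\Rightarrow m=2$; and in Step~3 the phrase ``every firm's payoff is either $0$ or unaffected'' should simply read that every firm earns $0$ and no unilateral move yields more. Your identification and verification of the exceptional family --- including the observation that a third firm at $(c+2)/n$ would give firm~$1$ a strict undercutting incentive --- is exactly right.
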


\begin{theorem*}[\ref{thm:bertrand-unique}]
    Suppose there exists at least two firms with minimum marginal cost $c/n \in \{0,1/n,...,1-1/n\}$, and that $(p - c/n) D(p)$ is strictly concave over $[c/n,1]$. Then in every semicoarse equilibrium $\sigma$ of the discretised Bertrand competition game with bidding sets $\{0,1/n,...,1\}$, each outcome $(p_i)_{i \in N}$ assigned positive probability $\sigma(p) > 0$ is a pure strategy Nash equilibrium of the game, with at least two firms of minimum marginal cost posting prices in $c/n + \{0,1/n\}$.
\end{theorem*}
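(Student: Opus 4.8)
The plan is to certify the claim by exhibiting, via LP duality, an explicit value-$0$ feasible point of the dual Lyapunov estimation problem (\ref{opt:lyapunov-LP}) for the objective $f:A\to\{0,1\}$ with $f(p)=0$ exactly when $p$ is one of the pure Nash equilibria of the preceding Lemma and $f(p)=1$ otherwise. First I would record that any pure Nash equilibrium $\delta_{p^*}$ is itself a semicoarse equilibrium: at $\delta_{p^*}$ the transformation $P_i$ produces regret $\sum_{a'_i}P_i(a'_i,p^*_i)u_i(a'_i,p^*_{-i})-u_i(p^*)\le 0$ since each $u_i(a'_i,p^*_{-i})\le u_i(p^*)$ and $P_i$ is column-stochastic. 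Hence the primal (\ref{opt:primal}) is feasible with value in $[0,1]$; a dual point of value $0$ then forces the primal value to $0$ by weak duality, so every semicoarse equilibrium $\sigma$ (being primal-feasible with $f\ge0$) must satisfy $\sum_a\sigma(a)f(a)=0$, i.e. is supported on $\{f=0\}$. Combined with the preceding Lemma this is precisely the statement.

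For the dual variables I would use only the conical generators of Theorem~\ref{thm:strategy-def}.(1). Write $\pi(p):=(p-c/n)D(p)$ and $B^k:=\{(c+1)/n,\dots,(c+k)/n\}$. To each minimal-cost firm $i$ and each $k$ I assign the transformation $\phi_{i,k}$ sending every price $\notin B^k$ to the uniform distribution on $B^k$ (fixing $B^k$); to each firm $i$ with $c_i/n>c/n$ the transformation $\psi_i$ sending every price $<c_i/n$ to the uniform distribution on $\{q\ge c_i/n\}$; and (if $c\ge1$) a transformation pushing a minimal-cost firm off prices below $c/n$. A short computation gives $\Delta_i(\phi_{i,k};p)=0$ when $p_i\in B^k$, and otherwise $\frac1k\sum_{q\in B^k}u_i(q,p_{-i})-u_i(p)$, where $u_i(q,p_{-i})$ equals $\pi(q)$ when $q$ strictly undercuts $p_{-i}$, a split share of $\pi(q)$ on a tie, and $0$ when $q$ is itself undercut — and similarly for $\psi_i$. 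The constraint (\ref{cons:lyapunov-cond}) at the pure Nash equilibria is then checked directly from the preceding Lemma: at type-(1) equilibria a minimal-cost deviator is either already in $B^k$ or is undercut by a second minimal-cost firm posting $c/n$; at type-(2) equilibria $(c+1)/n\in B^k$ for every $k\ge1$; and a higher-cost firm is always undercut. So every used transformation has deviation gain $0$ at every pure NE, and the dual inequality reads $0+0\ge0$ there.

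The crux is to pick non-negative multipliers $(\mu_{i,k}),(\nu_i)$ so that $\sum\mu_{i,k}\Delta_i(\phi_{i,k};p)+\sum\nu_i\Delta_i(\psi_i;p)\ge1$ at every non-equilibrium $p$, which I would organise by the value of the lowest posted price $p_{\min}$ and its winner set $W$. Outcomes in which some firm underprices its own cost are handled first — the relevant $\psi_i$ (or the sub-$c/n$ transformation) has deviation gain bounded below by a fixed positive constant, so a sufficiently large weight clears the threshold (alternatively one peels these off by a preliminary LP argument so that afterwards no firm prices below cost). If $p_{\min}=(c+1)/n$ the configuration is a Nash equilibrium unless it violates the Lemma's conditions, in which case a minimal-cost firm posting above $(c+1)/n$ supplies gain $\pi((c+1)/n)>0$ through $\phi_{i,1}$. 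The delicate region is $p_{\min}=(c+j)/n$ with $j\ge2$: a minimal-cost winner $i$ moved by $\phi_{i,j-1}$ gains $\frac1{j-1}\sum_{q\in B^{j-1}}\pi(q)-\pi((c+j)/n)/|W|$, which is strictly positive because concavity of $\pi$ with $\pi(c/n)=0$ yields $\pi((c+l)/n)\ge\frac{l}{j}\pi((c+j)/n)$ and hence $\frac1{j-1}\sum_{l=1}^{j-1}\pi((c+l)/n)\ge\tfrac12\pi((c+j)/n)\ge\pi((c+j)/n)/|W|$ — the strictness coming from \emph{strict} concavity when $|W|=2$, and merely from $\tfrac12>\tfrac13$ (hence only \emph{weak} concavity) once $|W|\ge3$, which is exactly the dividing line between Theorems~\ref{thm:bertrand-unique} and~\ref{thm:bertrand-unique-three}. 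A minimal-cost firm not at the minimum adds a further strictly positive gain through a small-index $\phi_{i,k}$. I expect the main obstacle to be that a fixed $\phi_{i,k}$ can have \emph{negative} deviation gain at other non-equilibrium outcomes (for instance when firm $i$ wins alone far above $(c+k)/n$), so the multipliers cannot be chosen outcome-by-outcome; instead one fixes the $\mu_{i,k}$ by downward induction on $k$, inflating the weight on the low-index transformations enough that the guaranteed positive gains dominate the accumulated negative terms uniformly over all outcomes and still exceed $1$. Setting up and verifying this induction is the bulk of the work; given the closed forms above the per-outcome gain computations and the equilibrium check are routine.
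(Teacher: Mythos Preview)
Your proposal takes essentially the same approach as the paper --- the dual Lyapunov LP with objective the indicator of non-Nash outcomes, the same families of transformations $\phi_k$ (to $B^k$) for minimum-cost firms and $\psi_i$ (off sub-cost prices) for higher-cost firms, the case split by the minimum posted price, and the strict-concavity inequality $\frac{1}{j-1}\sum_{l<j}\pi_l>\tfrac12\pi_j$. The one substantive difference lies in how the multipliers are fixed. The paper uses \emph{symmetric} multipliers $\epsilon_k$ (the same for every minimum-cost firm) and aggregates the deviation gains across all $m$ such firms; the level-$\ell$ bound then reads $\sum_{k<\ell}\epsilon_k\bigl(\tfrac{m}{k}\sum_{k'\le k}\pi_{k'}-\pi_\ell\bigr)$, and the $\epsilon_k$ are fixed by \emph{upward} induction on $\ell$, choosing $\epsilon_\ell$ so that this bound does not decrease when $\ell\mapsto\ell+1$ (strict concavity is precisely what makes the coefficient of $\epsilon_\ell$ positive, and $\epsilon_\ell=0$ past the monopoly price). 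This aggregation already absorbs the single-firm negative term you worry about --- if firm $i$ wins alone at a high price, another minimum-cost firm $j$ lies above it and contributes strictly positively to the same sum --- so your proposed per-firm downward induction is unnecessary, and indeed looks delicate: since the level-$\ell$ bound depends only on $\mu_k$ with $k<\ell$, fixing large-$k$ multipliers first does not control small $\ell$, while ``inflating'' small-$k$ multipliers afterwards risks spoiling the already-established large-$\ell$ bounds (small-$k$ coefficients can be negative there). The paper's upward recursion sidesteps this tension entirely.
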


\begin{proof}
    Up to a reordering of the firms, suppose that $m \equiv \{1,2,...,m\}$ is the set of firms with minimum marginal cost, $c/n$. For each set $B^k = c/n + \{1/n,2/n,...,k/n\} \subseteq A_i$, denote by $\phi_k : \Delta(A_i) \rightarrow \Delta(A_i)$ the strategy modification induced by $S_i = A_i \setminus B^k$ in Theorem \ref{thm:strategy-def}.(1). Thus, given a price $p_i$, if $p_i \leq c/n$ or $p_i > (c+k)/n$, then $p_i$ is transformed to the uniform distribution on $c/n + \{1/n,2/n,...,k/n\}$; else, $p_i$ is unchanged. We notice immediately that for $p_i \leq c/n$, this deviation is weakly utility improving for $i$.
    
    Moreover, by our order on the firms, each firm $i > m$ has marginal cost $c_i/n > c/n$. We then denote by $\phi^i$ the strategy modification induced by $S_i = \{0,1/n,...,c_i/n-1/n\}$ Theorem \ref{thm:strategy-def}.(1). Each of these transformations has firm $i$ deviate from posting a price $p_i$ strictly below their cost to posting a price distributed uniformly on $\{c_i/n, (c_i+1/n),...,1\}$, else their posted price is not modified. Once again, we note that such a strategy modification is weakly utility improving for firm $i$, and strictly so whenever firm $i$ has posted a minimum price amongst all firms.
    
    Now, for any price vector $(p_i)_{i \in N}$, define 
    $$ d(p) = \begin{cases}
        1 & p \textnormal{ is a pure strategy Nash equilibrium,} \\
        0 & \textnormal{otherwise.}
    \end{cases}$$
    Denoting $(Q^k,q^k)$ as the pair corresponding to the strategy modification $\phi_k$, and $(R^i,r^i)$ as the strategy modification corresponding to $\phi^i$, we will want to find multipliers $\epsilon_k \geq 0$, and $(\delta_i)_{i > m}$, such that for any $p \in A$,
    \begin{equation}\label{eq:pointwise}
        \sum_{i \leq m} \sum_{k} \epsilon_k \cdot \sum_{p'_i \in A_i} (Q^k(p'_i,p_i) + q^k(p'_i)) \cdot u_i(p'_i,p_{-i}) + \sum_{i > m} \delta_i \cdot \sum_{p'_i \in A_i} (R^i(p'_i,p_i)+r^i(p'_i)) \cdot u_i(p'_i,p_{-i}) \geq d(p).
    \end{equation}
    This follows from setting $(Q_i,q_i) = \sum_{k} \epsilon_k (Q^k,q^k)$ for firms $i \leq m$ and $(Q_i,q_i) = (R^i,r^i)$ for firms $i > m$ in the dual Lyapunov function estimation problem (\ref{opt:lyapunov-LP}). Since $d$ is an indicator function, it is sufficient to find $\epsilon, \delta$ such that the LHS is non-negative for any $p$, and strictly positive whenever $d(p) > 0$. Indeed, after finding such a solution, if necessary we may scale it by a positive constant to ensure (\ref{eq:pointwise}) holds.

    We proceed by case analysis on $p$. First suppose that the minimum posted price $\bar{p}$ is strictly less than $c/n$. In this case, all winning firms have negative utility. Meanwhile, all deviations considered have utility $\geq 0$ against any price vector $p$, since no firm would deviate to bidding less than their marginal cost. As a consequence, as long as $\sum_k \epsilon_k > 0$ and for each $i > m$, $\delta_i > 0$, the LHS of (\ref{eq:pointwise}) is strictly positive.

    Now suppose that the minimum posted price equals $c/n$. If for some firm $i > m$, $p_i = c /n < c_i / n$, then the term $\delta_i \cdot \sum_{p'_i \in A_i} (R^i(p'_i,p_i)+r^i(p'_i)) \cdot u_i(p'_i,p_{-i})$ is strictly positive. So we restrict attention to the case when firms $i > m$ bid $p_i > c/n$. In this case, by the assumption on the minimum posted price, for $i \leq m$, $p_i = c/n$. Then, if $d(p) = 1$ then firm $i$ is the unique firm with minimum posted price. In this case, each deviation $\phi_k$ places probability $1/k$ onto the price $c/n + 1/n$, which is $\leq p_j$ for $j \neq i$. Therefore, if $\sum_k \epsilon_k > 0$, then the LHS is positive. Finally, if there are two firms $i,j \leq m$, then all deviations considered have expected utility $0$; but in this case, $d(p) = 0$ and (\ref{eq:pointwise}) nevertheless holds.

    Next, suppose that the minimum posted price equals $c/n + 1/n$. Here, all deviations considered still guarantee non-negative utility change; if $p_i = c/n + 1/n$, then if $i \leq m$ none of the strategy modifications $\phi_k$ modify the price of firm $i$. In turn, if $p_i > c/n + 1/n$ for firm $i$, the deviation $\phi_k$ places probability $1/k$ on $c/n + 1/n$, resulting in a positive utility change. Likewise, if $p_i = c/n + 1/n$ for $i > m$, then firm $i$ incurs a non-negative utility change under the strategy modification $\phi^i$, and strictly so whenever $c_i > (c+1)$. The only case in which all deviations considered have $0$ utility change is if all firms $i \leq m$ bid $c/n + 1/n$, and no firm $i > m$ with $c_i > c+1$ bids $(c+1)/n$, in which case $f(p) = 0$.

    Thus, the interesting case is when the minimum posted price equals $\bar{p} = c/n + \ell/n$ where $\ell > 1$. Suppose that each buyer $i$ has $p_i = \ell_i/n$. Here, since the strategy modification $\phi^i$ is weakly utility improving for each firm $i \geq m$, we observe that 
    \begin{align}
        & \quad \sum_{i \leq m} \sum_{k} \epsilon_k \cdot \sum_{p'_i \in A_i} (Q^k(p'_i,p_i) + q^k(p'_i)) \cdot u_i(p'_i,p_{-i}) + \sum_{i > m} \delta_i \cdot \sum_{p'_i \in A_i} (R^i(p'_i,p_i)+r^i(p'_i)) \cdot u_i(p'_i,p_{-i}) \nonumber \\
        & \geq \sum_{i \leq m} \sum_{k} \epsilon_k \cdot \sum_{p'_i \in A_i} (Q^k(p'_i,p_i) + q^k(p'_i)) \cdot u_i(p'_i,p_{-i}) \nonumber \\
        & = \sum_{i \leq m} \sum_{k = 1}^{\ell-1} \epsilon_k \cdot \left(\frac{1}{k} \sum_{k' = 1}^k \left(\frac{ k'}{n}\right) D\left( \frac{c + k'}{n} \right) - \frac{\mathbb{I}[\ell_i = \ell]}{|\arg\min_j p_j|} \left( \frac{ \ell}{n} \right) D \left( \frac{c + \ell}{n} \right)\right) \nonumber \\ \ldots & + \sum_{k = \ell}^{\ell_i - 1} \epsilon_k \cdot \left(\frac{1}{k} \sum_{k' = 1}^k \left(\frac{k'}{n}\right) D\left( \frac{c + k'}{n} \right) + \frac{1}{k(|\arg\min_j p_j|+1)} \left(\frac{\ell}{n}\right) D\left( \frac{c + \ell}{n} \right) \right) \nonumber \\
        & \geq \sum_{k = 1}^{\ell-1} \epsilon_k \cdot \left(\frac{m}{k} \sum_{k' = 1}^k \left(\frac{k'}{n}\right) D\left( \frac{c+k'}{n} \right) - \left( \frac{\ell}{n} \right) D \left( \frac{c+\ell}{n} \right)\right). \label{eq:tie-case}
    \end{align}
    Here, the first inequality is since the action transformations $\phi^i$ do not modify $p_i \geq c_i/n$ for $i > m$. The second inequality, in turn, follows from three reasons. First, for any bid $p_i > \bar{p}$ and any firm $i \leq m$, the deviation to the uniform distribution on $B_k$ is strictly utility improving; firm $i$ goes from winning with probability $0$ to winning with positive probability over posted prices which incur strictly positive utility. Second, firms $i \leq m$ which post price $\bar{p}$ in sum obtain utility $\leq (\bar{p}-c/n)D(\bar{p})$, accounting for the possibility that some firms $j > m$ have also posted prices $\bar{p}$.
    
    Note that (\ref{eq:tie-case}) holds only with equality if all firms $i \leq m$ post prices $\bar{p}$, whereas other firms $i > m$ post prices $> \bar{p}$. We proceed by iteratively fixing $\epsilon_\ell$ via induction on $\ell$. If $\ell = 2$, then (\ref{eq:tie-case}) equals
    \begin{equation}\label{eq:base-case} \epsilon_1 \cdot \left( m \cdot \left(\frac{1}{n}\right) D\left( \frac{c+1}{n} \right) - \left( \frac{2}{n} \right) D \left( \frac{c+2}{n} \right)\right).\end{equation}
    By the strict concavity assumption on $(p-c/n)D(p)$, 
    $$ \frac{1}{2} \cdot 0 \cdot D\left(\frac{c}{n} \right) + \frac{1}{2} \cdot \left(\frac{2}{n}\right) D\left( \frac{c+2}{n} \right) < \left( \frac{1}{n} \right) D \left( \frac{c+1}{n} \right),$$
    and therefore the term (\ref{eq:base-case}) is positive for any choice of $\epsilon_1 > 0$.

    We therefore want to determine $\epsilon_{\ell}$, given $\sum_{k' = 1}^{\ell-1} \epsilon_k > 0$. Towards this end, consider the change in (\ref{eq:tie-case}) when we let $\ell \mapsto \ell+1$, which equals
    \begin{align}
        & \quad \sum_{k = 1}^{\ell} \epsilon_k \cdot \left(\frac{m}{k} \sum_{k' = 1}^k \left(\frac{k'}{n}\right) D\left( \frac{c+k'}{n} \right) - \left( \frac{\ell+1}{n} \right) D \left( \frac{c+\ell+1}{n} \right)\right) \nonumber \\ &  - \sum_{k = 1}^{\ell-1} \epsilon_k \cdot \left(\frac{m}{k} \sum_{k' = 1}^k \left(\frac{k'}{n}\right) D\left( \frac{c+k'}{n} \right) - \left( \frac{\ell}{n} \right) D \left( \frac{c+\ell}{n} \right)\right) \nonumber \\
        & = -\left( \sum_{k = 1}^{\ell-1} \epsilon_k \right) \cdot \left( \left( \frac{\ell+1}{n} \right) D \left( \frac{c+\ell+1}{n} \right) - \left( \frac{\ell}{n} \right) D \left( \frac{c+\ell}{n} \right) \right) \nonumber \\
        & + \epsilon_{\ell} \cdot \left( \frac{m}{\ell} \sum_{k' = 1}^\ell \left(\frac{k'}{n}\right) D\left( \frac{c+k'}{n} \right) - \left( \frac{\ell+1}{n} \right) D \left( \frac{c+\ell+1}{n} \right) \right).        
    \end{align}
    In this case, the coefficient of $\epsilon_\ell$ is again always strictly positive due to the strict concavity of $(p-c/n)D(p)$ and Jensen's inequality, as the uniform distribution over $c/n + \{1/n,2/n,...,\ell/n\}$ dominates the uniform probability distribution over $c/n + \{0,(\ell+1)/n\}$ in second-order.

    Meanwhile, the coefficient of $-(\sum_{k = 1}^{\ell-1} \epsilon_k)$ is strictly positive only when $(c+\ell)/n$ is less than the \emph{monopoly price} $\ell^*/n$, where $\ell^* = \arg\max_{0 \leq k \leq n, k \in \mathbb{N}} (k-c)/n \cdot D(k/n)$. Therefore, if $\ell \geq \ell^*$, we let $\epsilon_\ell = 0$. Otherwise, we let
    $$ \epsilon_\ell = \frac{\left( \sum_{k = 1}^{\ell-1} \epsilon_k \right) \cdot \left( \left( \frac{\ell+1}{n} \right) D \left( \frac{c+\ell+1}{n} \right) - \left( \frac{\ell}{n} \right) D \left( \frac{c+\ell}{n} \right) \right)}{\frac{m}{\ell} \sum_{k' = 1}^\ell \left(\frac{k'}{n}\right) D\left( \frac{c+k'}{n} \right) - \left( \frac{\ell+1}{n} \right) D \left( \frac{c+\ell+1}{n} \right)} > 0.$$
\end{proof}

\subsection{Weighted Action Sets \& Semicoarse Equilibria Under Linear Transformations}\label{sec:scale-proofs}

\begin{theorem*}[\ref{thm:weight-constraint}]
    Suppose that $\sigma^w \in \Delta(A^w)$ is a semicoarse equilibrium of the weighted normal-form game $\Gamma^w$. Then the induced distribution $\sigma(a) = \sum_{a^w \in \kappa(a)} \sigma^w(a^w)$ on $A$ satisfies the equilibrium constraints
    \begin{equation*}
        \sum_{a \in A} \sigma(a) \cdot \left[ \left( \sum_{a'_i \in A_i} P_i(a'_i,a_i) 
 \cdot u_i(a'_i,a_{-i})\right) - u_i(a) \right] \leq 0 \ \forall \ P_i \in \mathbf{P}(A_i),
    \end{equation*}
    where $\mathbf{P}(A_i)$ is the set of left stochastic matrices corresponding to the strategy modifications:
    \begin{enumerate}
        \item For a proper subset $S_i \subset A_i$, each action $a_i \in S_i$ is transformed to the distribution $F$ on $A_i \setminus S_i$, such that $F(a'_i) \propto w_i(a'_i)$ for each $a'_i \in A_i \setminus S_i$. Meanwhile, for $a_i \in S_i \setminus A_i$, $a_i \mapsto a_i$.
        \item For a cycle of actions $C_i = \{a_{i1}, a_{i2}, \ldots, a_{ik}\}$, writing $a_{i(k+1)} = a_{i1}$ and $\delta = \min_{\ell} w_i(a_{i\ell})$, each action $a_{i\ell}$ is transformed to the distribution on $\{a_{i(\ell-1)},a_{i\ell}, a_{i(\ell+1)}\}$, where $a_{i\ell}$ has probability $(1-\delta/w_i(a_{i\ell}))$ and the actions $\{a_{i(\ell-1)},a_{i(\ell+1)}\}$ are equally likely. Whereas for $a_i \in A_i \setminus C_i$, $a_i \mapsto a_i$.
    \end{enumerate}
\end{theorem*}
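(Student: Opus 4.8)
The plan is to deduce each stated inequality for $\sigma$ from a nonnegative combination of the ordinary semicoarse-equilibrium inequalities that $\sigma^w$ satisfies in $\Gamma^w$, exactly along the lines of the support-function criterion mentioned before the statement. For a left stochastic matrix $P_i$ on $A_i$ abbreviate $c_{P_i}(a)=\big(\sum_{a'_i\in A_i}P_i(a'_i,a_i)u_i(a'_i,a_{-i})\big)-u_i(a)$, and likewise write $c^w_{R}$ for the analogous regret integrand of a deviation $R$ on the copy set $A^w_i$ in $\Gamma^w$. Since $u^w_i(a^w)=u_i(\lambda(a^w))$ and the fibres $\kappa(a)=\prod_{j\in N}\kappa_j(a_j)$ partition $A^w$, any $c:A\to\mathbb R$ satisfies $\sum_{a^w\in A^w}\sigma^w(a^w)\,c(\lambda(a^w))=\sum_{a\in A}\sigma(a)\,c(a)$. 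Hence it is enough to prove: for each $P_i\in\mathbf{P}(A_i)$ of the two stated types, the pulled-back function $a^w\mapsto c_{P_i}(\lambda(a^w))$ is a convex combination of the integrands $c^w_R$ with $R$ a subset- or cycle-type deviation of $\Gamma^w$ on $A^w_i$ — the latter being, by Theorem~\ref{thm:strategy-def} applied to $\Gamma^w$, precisely the deviations generating $\Gamma^w$'s semicoarse constraints. As $\sigma^w$ is a semicoarse equilibrium of $\Gamma^w$ it obeys $\sum_{a^w}\sigma^w(a^w)c^w_R(a^w)\le 0$ for all such $R$, hence also for any convex combination, which by the contraction identity is exactly the asserted inequality $\sum_a\sigma(a)c_{P_i}(a)\le 0$.

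For a subset deviation $P_i$ given by a proper $S_i\subsetneq A_i$ the required combination is a single term. Take $S^w_i:=\kappa_i(S_i)$, a proper subset of $A^w_i$, and let $R$ be the type-(1) deviation of $\Gamma^w$ sending every copy in $S^w_i$ to the uniform distribution on $A^w_i\setminus S^w_i=\kappa_i(A_i\setminus S_i)$. Pushing this uniform distribution forward under $\lambda_i$ puts mass $w_i(a'_i)/\sum_{a''_i\notin S_i}w_i(a''_i)=F(a'_i)$ on each $a'_i\in A_i\setminus S_i$; so for every $a^w$ one checks $c^w_R(a^w)=c_{P_i}(\lambda(a^w))$ directly (both sides vanish when $\lambda_i(a^w_i)\notin S_i$).

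The cycle case is the heart of the argument. Fix a cycle $C_i=(a_{i1},\dots,a_{ik})$ of $A_i$ and $\delta=\min_\ell w_i(a_{i\ell})$. For each fibre $\ell$ choose a $\delta$-element subset $T_\ell\subseteq\kappa_i(a_{i\ell})$; picking any bijections $T_\ell\cong\{1,\dots,\delta\}$ yields $\delta$ pairwise disjoint length-$k$ cycles $C^{(1)},\dots,C^{(\delta)}$ in $A^w_i$, where $C^{(j)}$ runs through the $j$-th chosen copy of each fibre. Summing the type-(2) constraint integrands $c^w_{C^{(1)}},\dots,c^w_{C^{(\delta)}}$ of $\Gamma^w$ and using $u^w_i(\cdot)=u_i(\lambda(\cdot))$, one gets (independently of the bijections) a function equal to $\beta_\ell(a_{-i}):=\tfrac12\big(u_i(a_{i(\ell-1)},a_{-i})+u_i(a_{i(\ell+1)},a_{-i})\big)-u_i(a_{i\ell},a_{-i})$ whenever $a^w_i\in T_\ell$, and $0$ for copies not placed in any $T_\ell$ (in particular whenever $\lambda_i(a^w_i)\notin C_i$). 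Now average over the choice of each $T_\ell$ independently: labelling the $w_\ell:=w_i(a_{i\ell})$ copies cyclically and letting $T_\ell$ range over its $w_\ell$ length-$\delta$ cyclic windows with weight $1/w_\ell$, every copy of $a_{i\ell}$ lies in exactly $\delta$ of these windows, hence is selected with total weight $\delta/w_\ell$. The resulting convex combination of $\Gamma^w$-semicoarse constraint integrands therefore equals $\tfrac{\delta}{w_\ell}\beta_\ell(a_{-i})$ on every $a^w$ with $\lambda_i(a^w_i)=a_{i\ell}\in C_i$ and $0$ otherwise — and a direct computation of $c_{P_i}$ for the stated weighted cycle deviation gives exactly $c_{P_i}(a)=\tfrac{\delta}{w_i(a_i)}\beta_\ell(a_{-i})$ when $a_i=a_{i\ell}\in C_i$ and $0$ otherwise. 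Invoking the reduction of the first paragraph completes the proof.

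I expect the weight-matching in the cycle case to be the only real obstacle. A single cycle through the copies of $C_i$ relocates \emph{all} of the mass of every copy it uses (coefficient $1$ after contraction), and it touches only $\delta$ of the $w_i(a_{i\ell})$ copies of each action; the correct coefficient $\delta/w_i(a_{i\ell})$ and the requirement that \emph{every} copy be transformed in the same way emerge only after averaging over many cycle choices, via the elementary identity that a fixed copy sits in a $\delta/w_i(a_{i\ell})$-fraction of the length-$\delta$ cyclic windows. Everything else — the subset case and the bookkeeping behind the contraction identity $\sum_{a^w}\sigma^w(a^w)c(\lambda(a^w))=\sum_a\sigma(a)c(a)$ — is routine.
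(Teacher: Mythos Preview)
Your argument is correct and takes a genuinely different route from the paper. The paper works top-down: it sets up an LP over $\sigma^w$ for an arbitrary linear objective in $\sigma$, passes to the dual, shows that any facet-defining dual solution $(Q^w_i,q^w_i)$ can be symmetrised over fibres $\kappa_i(a_i)$ to a block-constant $(R^w_i,r^w_i)$, and then reruns the ray decomposition of Theorem~\ref{thm:strategy-def} on $(R^w_i,r^w_i)$ to read off the two weighted deviation types on $A_i$. Your approach is bottom-up: for each $P_i\in\mathbf P(A_i)$ you explicitly lift $c_{P_i}\circ\lambda$ to a nonnegative combination of subset- and cycle-type semicoarse integrands of $\Gamma^w$ (a single lift for subsets; an average over cyclic-window choices of $\delta$ disjoint lifted cycles for the cycle case).

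What each buys: the paper's duality argument in fact characterises the polyhedron of contracted distributions completely --- it shows the listed $\mathbf P(A_i)$ generate \emph{all} valid linear constraints on $\sigma$, though the theorem as stated only asks the forward direction. Your method is more elementary and self-contained for the stated direction, needing neither the LP-duality scaffolding nor the symmetrisation step. One small wording point: the combination you build in the cycle case has total weight $\delta$, not $1$, so it is a nonnegative (not convex) combination of $\Gamma^w$-constraint integrands; this is immaterial since each summand is $\le 0$ under $\sigma^w$, but you should say ``nonnegative combination'' rather than ``convex combination''.
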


\begin{proof}
    We want to determine the inequalities satisfied by $\sigma : A \rightarrow \mathbb{R}_+$, such that there exists $\sigma^w$ which satisfies the (in)equalities 
    \begin{align}
        \sigma(a) - \sum_{a^w \ | \ \lambda(a^w) = a} \sigma^w(a^w) & = 0 \label{con:ext_1}\\
        \forall \ i \in N, \forall \textnormal{ suitable }(Q^w_i,q^w_i), & \nonumber \\ 
        \sum_{a^w \in A^w} \sigma^w(a^w) \cdot \sum_{a'^w \in A^w_i} \left( Q^w_i(a'^w_i,a^w_i) + q^w_i(a'^w_i) \right) \cdot u^w_i(a'^w_i,a^w_{-i}) & \leq 0 \label{con:ext_2}\\
        \sum_{a^w} \sigma^w(a^w) & = 1 \label{con:ext_3}\\
        \sigma^w & \geq 0, \label{con:ext_4}
    \end{align}
    where the set of suitable pairs $(Q_i^w,q_i^w)$ are those which satisfy (\ref{con:preserve}), (\ref{con:tangency}), (\ref{con:symmetry}) in the $w$-weighted game. Now, two polyhedra $P,P'$ are equal if and only if for every vector $c$, $\max_{x \in P} c^T x = \max_{x \in P'} c^T x$. Moreover, for any inequality $\sum_a \sigma(a) \cdot \tilde{h}(a) \leq \tilde{b}$ which is valid and also tight for at least one such $\sigma$, $\tilde{b}$ equals the value of the LP, 
    \begin{align}
        \max_{a \in A} \sum_a \sigma(a) \cdot \tilde{h}(a) \textnormal{ subject to (\ref{con:ext_1}-\ref{con:ext_4}).} \label{opt:w-primal}
    \end{align}
    We may of course simply substitute in $\sigma(a)$ via the equalities (\ref{con:ext_1}). Then, the dual of (\ref{opt:w-primal}) is given,
    \begin{align}
        \min_{\gamma, (Q_i,q_i)_{i \in N}} \gamma \textnormal{ subject to } & \\
        \gamma + \sum_{i \in N} \sum_{a'^w_i \in A^w_i} \left( Q^w_i(a'^w_i,a^w_i) + q_i(a'^w) \right) \cdot u^w_i(a'^w_i,a^w_{-i}) & \geq \tilde{h}(\lambda(a^w)) \ \forall \ a^w \in A^w \nonumber \\
        \sum_{a'^w \in A^w_i} Q^w_i(a'^w_i,a^w_i) + q_i(a'^w) & = 0 \ \forall \ i \in N, a^w_i \nonumber \\
        Q^w_i(a'^w_i,a^w_i) + q_i(a'^w) & \geq 0 \ \forall \ i \in N, a^w_i \neq a'^w_i \nonumber \\
        Q^w_i(a'^w_i,a^w_i) - Q^w_i(a^w_i,a'^w_i) & = 0 \ \forall \ i \in N, a^w_i \neq a'^w_i. \nonumber
    \end{align}
    If equality holds for any $a^w$, then the inequality $\sum_a \sigma(a) \cdot \tilde{h}(a) \leq \tilde{b}$ is facet defining up to multiplication by a constant, and 
    \begin{equation}\label{eq:facet-def}\tilde{h}(a) = \gamma + \sum_{i \in N} \sum_{a'^w_i \in A^w_i} \left( Q^w_i(a'^w_i,a^w_i) + q_i(a'^w) \right) \cdot u_i^w(a'^w_i,a^w_i)\end{equation}
    for any $a^w$ such that $a = \lambda(a^w)$. As a consequence, we will first want to inspect the form of dual solutions $(Q_i,q_i)$, which generate any such function $\tilde{h} : A \rightarrow \mathbb{R}$.

    We first note that the set of feasible $(\gamma, (Q_i,q_i)_{i \in N})$ form a pointed cone. The rays $\gamma = \pm1$, $(Q_i,q_i) = (0,0)$ provide us with $\tilde{h}(a) = \pm1$ for any $a \in A$ respectively, which together assemble into the probability constraint $\sum_{a \in A} \sigma(a) = 1$. Thus it is sufficient to consider the case $\gamma = 0$ and $(Q_i,q_i) \neq (0,0)$ for one player $i$. Now, note that for any pair $a^w_i,a^{*w}_i$ which correspond to the same action $a_i \in A_i$, we need to have 
    $$ \sum_{a'^w_i \in A^w_i} Q^w_i(a'^w_i,a^w_i) + q_i(a'^w) = \sum_{a'^w_i \in A^w_i} Q^w_i(a'^w_i,a^{*w}_i) + q_i(a'^w).$$
    Therefore, symmetrising $(Q_i^w, q^w_i)$ over such pairs $a^w_i,a^{*w}_i$, we obtain a pair $(R^w_i,r^w_i)$ which satisfies the symmetry and tangentiality constraints, such that the equality $\tilde{h}(\lambda(a^w)) = \sum_{a'^w_i \in A^w_i} R^w_i(a'^w_i,a^w_i) + r_i(a'^w)$ is maintained. Explicitly, we fix 
    \begin{align*}
    R^w_i(a'^w_i,a^w_i) & = \frac{1}{|\kappa_i\lambda_i(a'^w_i)| |\kappa_i \lambda_i(a^w_i)|} \sum_{a''^w_i \in \kappa_i\lambda_i(a'^w_i)} \sum_{a^{*w}_i \in \kappa_i\lambda_i(a^w_i)} Q^w_i(a''^w_i, a^{*w}_i), \\
    r^w_i(a'^w_i) & = \frac{1}{|\kappa_i\lambda_i(a'^w_i)|} \sum_{a''^w_i \in \kappa_i\lambda_i(a'^w_i)} q^w_i(a''^w_i).
    \end{align*}
    Then, (\ref{eq:facet-def}) holds for $(R^w_i,r_i^w)$ with $\gamma, (Q^w_j,q_j^w)_{j \neq i}$ all fixed to $0$, given it holds for $(Q^w_i,q^w_i)$.

    The symmetry of $(R^w_i,r^w_i)$ over actions in the $w$-weighted game which correspond to the same action in the original game implies that $(R^w_i,r^w_i)$ specifies a linear transformation over $\Delta(A^w_i)$ via Proposition \ref{prop:linsubset}. In particular, given an action $a_i$ which is represented by any distribution $x_i^w \in \Delta(A^w_i)$ with support $\kappa_i(a_i)$,
    \begin{align}
     \sum_{a^w \in A} \sigma(a^w) \cdot & \sum_{a'^w_i \in A^w_i} \left( R^w_i(a'^w_i,a^w_i) \cdot x_i^w(a^w_i) + r^w_i(a'^w_i) \right) \cdot u_i^w(a'^w_i,a^w_{-i}) \nonumber \\
     = \sum_{a \in A} \sigma(a) \cdot & \sum_{a'_i \in A_i} w_i(a'_i) \cdot \left( R^w_i(\psi_i(a'_i),\psi_i(a_i)) + r^w(\psi(a'_i)) \right) \cdot u_i(a'_i,a_{-i}), \label{eq:represent}
    \end{align}
    where $\psi_i : A_i \rightarrow A^w_i$ chooses a representative of $a_i$ in $A^w_i$, as an injection such that $\psi_i(a_i) \in \kappa_i(a_i)$. Therefore, we may deduce the resulting transformation on $\Delta(A_i)$ through checking it for a representation of $A_i$ in $A^w_i$.
    
    Then, applying (the proof of) Theorem \ref{thm:strategy-def}, we conclude that $(R^w_i, r^w_i)$ decomposes into rays 
    $$ (R^w_i, r^w_i) = (R''^w_i,0) + \sum_{s = 1}^{-1 + \sum_{a_i \in A_i} w_i(a_i) } \mu_s \cdot (R^{sw}_i,r^{sw}_i) + c \cdot (\Sigma\Sigma^T,-\Sigma),$$
    where $\mu \geq 0$ and $c \in \mathbb{R}$. The ray $(\Sigma\Sigma^T,-\Sigma)$ is again trivial, corresponding to the identity on $\Delta(A_i)$. By the invariance of $r^w_i(a^w_i)$ over $a^w_i \in \kappa_i(a_i)$, each $(R^{sw}_i,r^{sw}_i)$ corresponds to an action transformation, specified by a proper subset $S_i \subseteq A_i$. Here, for each $a_i \in S_i$ and each $a^w_i \in \kappa_i(a_i)$, $a^w_i$ is transformed into the uniform distribution on $\cup_{a'_i \notin S_i} \kappa_i(a'_i) \subseteq A^w_i$. Each $\kappa_i(a'_i)$ has cardinality $w_i(a'_i)$. In turn, if $a_i \notin S_i$, $a_i$ is unmodified.

    In turn, again by the proof of Theorem \ref{thm:strategy-def}, the pair $(R''^w_i,0)$ decomposes into a conical combination of pairs $(C^w+C^{wT},0)$, where each $C^w$ is the adjacency matrix of a directed cycle $a^w_{1i} \rightarrow a^w_{2i} \rightarrow ... \rightarrow a^w_{ki} \rightarrow a^w_{1i}$. Moreover, since each $(R^{sw}_i,r^{sw}_i)$ and $(\Sigma\Sigma^T,-\Sigma)$ are constant over $\kappa_i(a'_i)\times \kappa_i(a_i)$, so too is $(R''^w_i,0)$. In case $R''^w(a'^w_i,a^w_i) > 0$ for $a_i = \lambda_i(a'^w_i) = \lambda_i(a^w_i)$, we may factor out a contribution $(R^{a_iw}_i,0)$, such that
    $$ R^{a_i w}(a'^w_i,a^w_i) = \begin{cases}
        -1 & a'^w_i = a^w_i, a^w_i \in \kappa_i(a_i) \\
        \frac{1}{w_i(a_i)-1} & a^w_i, a'^w_i \in \kappa_i(a_i), a'^w_i \neq a^w_i \\
        0 & \textnormal{otherwise,}
    \end{cases}$$
    while ensuring that $(R''^w_i,0)$ satisfies the tangetiality and symmetry constraints. Such $(R^{a_i w},0)$, of course, correspond to the identity transformation on $\Delta(A_i)$. Therefore, it suffices to restrict attention to the case when $k \geq 2$.

    In this case, note that we may again symmetrise the resulting pair $(C^w+C^{wT},0)$ over each such cycle without modifying $(R''^w_i,0)$. In this case, (\ref{eq:represent}) reads,
    $$ \sum_{a_{-i} \in A_{-i}, a_{i\ell}} \sigma(a) \cdot \left( \frac{1}{2w_i(a_{i\ell})} \cdot (u_i(a_{i(\ell-1)},a_{-i}) + u_i(a_{i(\ell+1)},a_{-i}) - 2u_i(a_{i\ell},a_{-i}))\right),$$
    where we denote $a_{i\ell} = \lambda_i(a^w_{i\ell})$. Invoking Proposition \ref{prop:linsubset}, where $\delta = \min_\ell w_i(a_{i\ell})$, we conclude that each $a_{i\ell} \mapsto (1-\delta/w_i(a_{i\ell})) a_{i\ell} + \delta/2w_i(a_{i\ell}) (a_{i(\ell+1)} + a_{i(\ell-1)})$. Other actions $a_i$ not in the cycle then remain invariant.
\end{proof}

\begin{proposition*}[\ref{prop:weighted-bertrand}]
    Suppose there exists at least two firms with minimum marginal cost $c/n \in \{0,1/n,...,1-1/n\}$, and that $(p - c/n) D(p)$ is weakly concave over $[c/n,1]$. Then if $w : \{0,1/n,...,1\} \rightarrow \mathbb{N}$ is strictly increasing, then in every $w$-weighted semicoarse equilibrium $\sigma$ of the discretised Bertrand competition game with bidding sets $\{0,1/n,...,1\}$, each outcome $(p_i)_{i \in N}$ assigned positive probability $\sigma(p) > 0$ is necessarily a Nash equilibrium. An analogous statement also holds for the first-price auction when $w$ is strictly decreasing, through its equivalence with Bertrand competition with inelastic demand.
\end{proposition*}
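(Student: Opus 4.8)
The plan is to run the proof of Theorem~\ref{thm:bertrand-unique} essentially verbatim, but with the $w$-weighted strategy modifications supplied by Theorem~\ref{thm:weight-constraint} in place of the uniform ones: the point will be that a strictly increasing weight vector makes the deviations of the lowest-cost firms \emph{strictly} (not merely weakly) profitable in aggregate, substituting for the strict-concavity hypothesis used in Theorem~\ref{thm:bertrand-unique}. Since for $m\ge 3$ firms of minimum marginal cost the unweighted result (Theorem~\ref{thm:bertrand-unique-three}) already applies, the delicate case is $m=2$.

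Concretely, order the firms so that $m=\{1,\dots,m\}$, $m\ge 2$, are those of minimum marginal cost $c/n$; write $\pi(p)=(p-c/n)D(p)$ and $B^k=c/n+\{1/n,\dots,k/n\}$. For $i\le m$ I would use, for each $k$, the modification $\phi_k$ of Theorem~\ref{thm:weight-constraint}.(1) associated with $S_i=A_i\setminus B^k$: a price $p_i\notin B^k$ (i.e. $p_i\le c/n$ or $p_i>(c+k)/n$) is sent to the distribution on $B^k$ with mass $\propto w(\cdot)$, and a price in $B^k$ is kept fixed; for $i>m$ I keep the modification $\phi^i$ pushing sub-cost prices up to the uniform law on $\{c_i/n,\dots,1\}$, exactly as in Theorem~\ref{thm:bertrand-unique}. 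Setting $f(p)=0$ on pure Nash equilibria and $f(p)=1$ otherwise, it then suffices, via the dual Lyapunov program~(\ref{opt:lyapunov-LP}) with $(Q_i,q_i)=\sum_k\epsilon_k(Q^k,q^k)$ for $i\le m$ and $(R^i,r^i)$ for $i>m$, to produce multipliers $\epsilon_k\ge 0$ (with $\epsilon_1>0$) and $\delta_i>0$ for which the aggregate utility change is $\ge 0$ at every outcome and strictly $>0$ at every non-equilibrium outcome.

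The case analysis on the minimum posted price $\bar p$ is identical to that of Theorem~\ref{thm:bertrand-unique} except for $\bar p=(c+\ell)/n$ with $\ell>1$, where, when all lowest-cost firms post $\bar p$, the aggregate change is bounded below by $G(\ell)=\sum_{k=1}^{\ell-1}\epsilon_k\big(m\,\mathbb{E}_{\mu_k}[\pi|_{B^k}]-\pi((c+\ell)/n)\big)$, with $\mu_k$ the $w$-weighted law on $B^k$ and all remaining contributions (from firms posting above $\bar p$ and from the $\phi^i$) non-negative. Strictly increasing weights enter here: the uniform law on $B^{\ell-1}$ is a mean-preserving contraction of $\tfrac12\delta_{c/n}+\tfrac12\delta_{(c+\ell)/n}$, so weak concavity gives $\mathbb{E}_{\mathrm{unif}}[\pi|_{B^{\ell-1}}]\ge\tfrac12\pi((c+\ell)/n)$, while $w$ strictly increasing makes $\mu_{\ell-1}$ strictly first-order dominate the uniform law, which strictly raises the expectation because $\Delta_{\ell-1}:=\pi((c+\ell)/n)-\pi((c+\ell-1)/n)\ge 0$ together with concavity forces $\pi$ to be strictly increasing on $B^{\ell-1}$ (and if $\pi$ happens to be affine on $[c/n,(c+\ell)/n]$, it cannot satisfy $\Delta_{\ell-1}\ge 0$ unless it is identically $0$, which is impossible since $D>0$, so the second-order inequality is strict instead). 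Hence the coefficient $C_{\ell-1}=m\,\mathbb{E}_{\mu_{\ell-1}}[\pi|_{B^{\ell-1}}]-\pi((c+\ell)/n)$ of the newly introduced multiplier is strictly positive whenever $\Delta_{\ell-1}\ge 0$, and a finite induction on $\ell\le n-c$ determines the $\epsilon_k$: put $\epsilon_\ell=0$ when $\Delta_\ell<0$ (then $G(\ell+1)=G(\ell)-(\sum_{k<\ell}\epsilon_k)\Delta_\ell\ge G(\ell)>0$), and otherwise choose $\epsilon_\ell$ large enough that $G(\ell+1)=G(\ell)-(\sum_{k<\ell}\epsilon_k)\Delta_\ell+\epsilon_\ell C_\ell>0$. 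The base case $\ell=2$ reads $G(2)=\epsilon_1(m\,\pi((c+1)/n)-\pi((c+2)/n))\ge 0$, which is strict for $m\ge 3$, and for $m=2$ equals $0$ exactly when $D((c+1)/n)=D((c+2)/n)$; but by the Nash-equilibrium characterisation lemma preceding Theorem~\ref{thm:bertrand-unique} this is precisely the configuration in which ``both lowest-cost firms posting $(c+2)/n$'' is itself a Nash equilibrium, so $G(2)=0=f(p)$ is consistent. Finally, the first-price-auction statement follows from the equivalence of Section~\ref{sec:first-and-gauge}: inelastic demand makes $\pi(p)=p-c/n$ affine, hence weakly concave, and $w$ strictly decreasing in the bid $b$ is the same as $w$ strictly increasing in the price $p=1-b$.

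I expect the main obstacle to be exactly this strictness bookkeeping. Under only weak concavity, both the second-order (mean-preserving-contraction) inequality and each individual $w$-weighted deviation are a priori only weak, so one must verify that for \emph{every} non-equilibrium outcome at least one of the two independent sources of strictness — the strictly increasing weights, through strict first-order dominance, or a genuine concave bend of $\pi$ — is available, and that the degenerate outcomes where neither is available coincide with the extra Nash equilibria flagged by the Lemma (both lowest-cost firms bidding $(c+2)/n$), so that the required pointwise inequality $G(\ell)\ge f(p)$ is never violated.
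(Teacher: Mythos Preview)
Your plan matches the paper's proof essentially step for step: the paper also re-runs Theorem~\ref{thm:bertrand-unique} with the $w$-weighted deviations of Theorem~\ref{thm:weight-constraint}.(1) for firms $i\le m$ and the same $\phi^i$ for $i>m$, handles the minimum price $\le(c+1)/n$ cases identically, singles out $\ell=2$ (allowing the extra Nash equilibrium when $D((c+1)/n)=D((c+2)/n)$), and then runs the same inductive scheme on the multipliers $\epsilon_\ell$, setting $\epsilon_\ell=0$ past the minimum monopoly price and otherwise choosing it so that the increment is covered. Your criterion ``$\Delta_\ell<0$ versus $\Delta_\ell\ge 0$'' is equivalent to the paper's ``past versus below the minimum monopoly price'', since concavity forces $\pi$ to be strictly increasing up to the minimum argmax and non-increasing thereafter.

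One correction: your parenthetical about affine $\pi$ is garbled. If $\pi$ is affine on $[c/n,(c+\ell)/n]$ with $\pi(c/n)=0$ and $\pi((c+1)/n)>0$, then it has positive slope and $\Delta_{\ell-1}>0$, so the ``identically $0$'' alternative never arises. The clean dichotomy you want (and which the paper asserts in one line without proof) is: either $\pi$ is \emph{not} affine on $\{c/n,\dots,(c+\ell)/n\}$, in which case the mean-preserving-contraction inequality $\mathbb{E}_{\mathrm{unif}}[\pi|_{B^{\ell-1}}]\ge\tfrac12\pi((c+\ell)/n)$ is strict and, combined with the weak inequality $\mathbb{E}_{\mu_{\ell-1}}[\pi]\ge\mathbb{E}_{\mathrm{unif}}[\pi]$ (valid since $\Delta_{\ell-1}\ge0$ plus concavity makes $\pi$ non-decreasing on $B^{\ell-1}$), already gives $C_{\ell-1}>0$; or $\pi$ \emph{is} affine there, hence strictly increasing, so strict first-order dominance of $\mu_{\ell-1}$ over the uniform law gives the strict gain. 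Your last paragraph correctly anticipates that this bookkeeping is the only delicate point; once phrased this way it goes through for all $m\ge2$.
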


\begin{proof}
    This follows analogously as the proof of Theorem \ref{thm:bertrand-unique}, but the key differences are that when we fix $\epsilon_1 > 0$ and all $\delta_i > 0$ for $i > m$, it does not rule out the potential additional Nash equilibria where $m = 2$ firms of minimum marginal cost post prices $(c+2)/n$; this can only happen when $D((c+1)/n) = D((c+2)/n)$ and no firms of marginal cost $(c+1)/n$ exist. The case when the minimum price is $\leq c/n$ proceeds identically, as all deviations considered will have weakly positive utility so long as the dual multipliers $\delta_i, \epsilon_k > 0$ for $i > m$ and and $k \geq 1$, and strictly so when $p$ is not a Nash equilibrium. If the minimum price is in $(c+\{1,2\})/n$, then the strategy modifications for firms $i \in \{1,2\}$ do not change a bid of $(c+1)/n$, whereas the bid $(c+2)/n$ is transformed to the bid $(c+1)/n$. This deviation is then not strictly utility improving only when $p$ is a Nash equilibrium. 
    
    Thus the interesting case in this setting is when the minimum price is $c/n + \ell/n$ for $\ell > 2$. Again, suppose that each buyer $i \leq m$ has $p_i = c/n + \ell_i/n$. Then the term (\ref{eq:tie-case}) is modified,  
    \begin{align}
        & \quad \sum_{i \leq m} \sum_{k} \epsilon_k \cdot \sum_{p'_i \in A_i} w_i(p'_i) \cdot  (Q^k(p'_i,p_i) + q^k(p'_i)) \cdot u_i(p'_i,p_{-i})\nonumber \\ \ldots & + \sum_{i > m} \delta_i \cdot \sum_{p'_i \in A_i} w_i(p'_i) \cdot (R^i(p'_i,p_i)+r^i(p'_i)) \cdot u_i(p'_i,p_{-i}) \nonumber \\
        & \geq \sum_{i \leq m} \sum_{k} \epsilon_k \cdot \sum_{p'_i \in A_i} w_i(p'_i) \cdot (Q^k(p'_i,p_i) + q^k(p'_i)) \cdot u_i(p'_i,p_{-i}) \nonumber \\
        & \geq \sum_{i \leq m} \sum_{k = 1}^{\ell-1} \epsilon_k \cdot \left( \sum_{k' = 1}^k \frac{w_i(c/n + k'/n)}{\sum_{k'' = 1}^{k} w_i(c/n + k''/n)} \cdot\left(\frac{ k'}{n}\right) D\left( \frac{c + k'}{n} \right) - \frac{\mathbb{I}[\ell_i = \ell]}{|\arg\min_j p_j|} \left( \frac{ \ell}{n} \right) D \left( \frac{c + \ell}{n} \right)\right) \nonumber \\
        & \geq \sum_{k = 1}^{\ell-1} \epsilon_k \cdot \left( \frac{m \cdot w_i(c/n + k'/n)}{\sum_{k'' = 1}^{k} w_i(c/n + k''/n)} \cdot  \sum_{k' = 1}^k \left(\frac{k'}{n}\right) D\left( \frac{c+k'}{n} \right) - \left( \frac{\ell}{n} \right) D \left( \frac{c+\ell}{n} \right)\right). \label{eq:tie-case-w}
    \end{align}
    And thus, the interesting case is again when $m = 2$ and all firms $\leq m$ bid the same minimum price $c/n + \ell/n$. Noting that we have fixed $\epsilon_1 > 0$ arbitrarily, base case is then $\ell = 2$; for the induction step, the change in (\ref{eq:tie-case-w}) when we let $\ell \mapsto \ell+1$ equals
    \begin{align}
        & \quad \sum_{k = 1}^{\ell} \epsilon_k \cdot \left(\sum_{k' = 1}^k \frac{m \cdot w_i(c/n + k'/n)}{\sum_{k'' = 1}^{k} w_i(c/n + k''/n)} \cdot \left(\frac{k'}{n}\right) D\left( \frac{c+k'}{n} \right) - \left( \frac{\ell+1}{n} \right) D \left( \frac{c+\ell+1}{n} \right)\right) \nonumber \\ &  - \sum_{k = 1}^{\ell-1} \epsilon_k \cdot \left( \sum_{k' = 1}^k \frac{m \cdot w_i(c/n + k'/n)}{\sum_{k'' = 1}^{k} w_i(c/n + k''/n)} \cdot \left(\frac{k'}{n}\right) D\left( \frac{c+k'}{n} \right) - \left( \frac{\ell}{n} \right) D \left( \frac{c+\ell}{n} \right)\right) \nonumber \\
        & = -\left( \sum_{k = 1}^{\ell-1} \epsilon_k \right) \cdot \left( \left( \frac{\ell+1}{n} \right) D \left( \frac{c+\ell+1}{n} \right) - \left( \frac{\ell}{n} \right) D \left( \frac{c+\ell}{n} \right) \right) \nonumber \\
        & + \epsilon_{\ell} \cdot \left( \sum_{k' = 1}^\ell \frac{m \cdot w_i(c/n + k'/n)}{\sum_{k'' = 1}^{\ell} w_i(c/n + k''/n)} \cdot \left(\frac{k'}{n}\right) D\left( \frac{c+k'}{n} \right) - \left( \frac{\ell+1}{n} \right) D \left( \frac{c+\ell+1}{n} \right) \right).        
    \end{align}
    Again, we are able to pick $\epsilon_\ell > 0$ so long as $(c+\ell)/n$ is less than the \emph{minimum} monopoly price $\ell^*/n$; due to weak concavity the monopoly price need not be unique. This is because the increase condition on $w_i$ implies that the coefficient of $\epsilon_\ell$ is strictly positive. Therefore, for $\ell \geq \ell^*$, we pick $\epsilon_\ell = 0$, else, we let 
    $$ \epsilon_\ell = \frac{\left( \sum_{k = 1}^{\ell-1} \epsilon_k \right) \cdot \left( \left( \frac{\ell+1}{n} \right) D \left( \frac{c+\ell+1}{n} \right) - \left( \frac{\ell}{n} \right) D \left( \frac{c+\ell}{n} \right) \right)}{\sum_{k' = 1}^\ell \frac{m \cdot w_i(c/n + k'/n)}{\sum_{k'' = 1}^{\ell} w_i(c/n + k''/n)} \cdot \left(\frac{k'}{n}\right) D\left( \frac{c+k'}{n} \right) - \left( \frac{\ell+1}{n} \right) D \left( \frac{c+\ell+1}{n} \right)} > 0.$$
\end{proof}

To prove Theorem \ref{thm:P-scaled-sets}, we will need to define a condition number for a polyhedron $P$. In essence, given a face $F$ of $P$ defined via an additional equality constraint $a_i^T x=b_i$ and a point $x \in P \setminus F$, the condition number provides a lower bound on the angle between $\Pi_{F}[x]-x$ and $a_i$. A proof of this fact was proven by \cite{421564} for the case when $P$ is a polytope and we suspect it to be a known result, though we are unable to locate a proof for the more general case; hence we provide one.

\begin{lemma}\label{lem:angle-bound}
    Let $P \subseteq \mathbb{R}^n$ be a polyhedron, defined via (finitely many) linear inequalities $a_i^T x \leq b_i$ where for each $i$, $\|a_i\| = 1$. Then there exists a constant $\chi(P) \in (0,\infty)$ such that for any $F_\ell$ be the face of $P$ defined by the additional equality $a_\ell^T x = b_\ell$, for any $x \in P \setminus F$, $$\min_{y \in F} \frac{\|x-y\|}{b_\ell^T - a^T_\ell x} = \frac{\| \Pi_F[x]-x\|}{a_\ell^T(\Pi_F[x]-x)} \leq \chi(P).$$
\end{lemma}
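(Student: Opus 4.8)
The plan is to reduce the claim to a Hoffman-type distance estimate and then prove that estimate by hand, using only the optimality conditions for Euclidean projection and Carathéodory's theorem for polyhedral cones; I want to stay away from compactness arguments, since $P$ may be unbounded. For the reduction: any $x \in P \setminus F_\ell$ satisfies $a_\ell^T x < b_\ell$, so $b_\ell - a_\ell^T x$ is a fixed positive scalar independent of $y$, and $\min_{y \in F_\ell}\|x-y\|/(b_\ell - a_\ell^T x)$ is attained at $y^\ast := \Pi_{F_\ell}[x]$. Since $y^\ast \in F_\ell$ forces $a_\ell^T y^\ast = b_\ell$, we get $a_\ell^T(y^\ast - x) = b_\ell - a_\ell^T x$, so the two expressions in the statement agree. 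It then suffices to exhibit a finite $\chi(P)$, depending only on the vectors $a_i$, such that $\|x - y^\ast\| \le \chi(P)\,(b_\ell - a_\ell^T x)$ for every face $F_\ell$ and every $x \in P \setminus F_\ell$.

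To get that bound I would fix $x$ and $\ell$, set $y^\ast = \Pi_{F_\ell}[x]$, and let $I$ be the set of constraint indices active at $y^\ast$, which contains $\ell$. Because $F_\ell$ is cut out by the inequalities $\{a_i^T y \le b_i\}_i$ together with the equality $a_\ell^T y = b_\ell$, the normal cone $\NC{F_\ell}{y^\ast}$ is the conic hull of $\{a_i : i \in I\setminus\{\ell\}\} \cup \{a_\ell, -a_\ell\}$, and the variational characterization of the projection gives $x - y^\ast \in \NC{F_\ell}{y^\ast}$. By Carathéodory's theorem for cones I may write $x - y^\ast = \sum_{i \in J}\mu_i a_i + \nu a_\ell$ with $J \subseteq I \setminus \{\ell\}$, $\mu_i \ge 0$, $\nu \in \mathbb{R}$, and the family $\{a_i : i \in J\}\cup\{a_\ell\}$ linearly independent. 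Two sign observations do the rest: for $i \in J$, $a_i^T y^\ast = b_i$ and $a_i^T x \le b_i$ give $a_i^T(x - y^\ast) \le 0$, while $a_\ell^T(x-y^\ast) = -(b_\ell - a_\ell^T x) < 0$. Expanding
\[
\|x - y^\ast\|^2 = \sum_{i \in J}\mu_i\, a_i^T(x - y^\ast) + \nu\, a_\ell^T(x - y^\ast),
\]
the summation term is $\le 0$, hence $\|x - y^\ast\|^2 \le -\nu\,(b_\ell - a_\ell^T x)$; since the left side is positive this forces $\nu < 0$.

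Finally I would bound $|\nu|$ in terms of $\|x - y^\ast\|$: linear independence of $\{a_i : i \in J\}\cup\{a_\ell\}$ means the linear map sending $(\mu_J,\nu)$ to $\sum_{i\in J}\mu_i a_i + \nu a_\ell$ is injective, hence has a bounded left inverse, so $|\nu| \le C_{J,\ell}\|x - y^\ast\|$ for a constant $C_{J,\ell}$ depending only on the family $\{a_i : i \in J\}\cup\{a_\ell\}$. Combining, $\|x - y^\ast\|^2 \le C_{J,\ell}\|x-y^\ast\|(b_\ell - a_\ell^T x)$, i.e. $\|x - y^\ast\| \le C_{J,\ell}(b_\ell - a_\ell^T x)$. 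Since $\ell$ ranges over finitely many constraints and $J$ over finitely many index sets yielding a linearly independent subfamily, $\chi(P) := \max_{\ell,J} C_{J,\ell} \in (0,\infty)$ works.

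The main obstacle is really just careful bookkeeping rather than any hard idea: confirming the conic-hull description of $\NC{F_\ell}{y^\ast}$ for polyhedra (with the equality contributing both $\pm a_\ell$), ensuring the Carathéodory step preserves a copy of $\pm a_\ell$ in the independent subfamily --- automatic here, since $\nu = 0$ would give $\|x - y^\ast\|^2 \le 0$ against $x \ne y^\ast$ --- and noting that $F_\ell$, being a nonempty face, guarantees $y^\ast$ exists. All constants depend only on the defining data of $P$, so the estimate is insensitive to $P$ being unbounded.
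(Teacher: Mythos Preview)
Your argument is correct and takes a genuinely different route from the paper's. The paper localises $v = x - \Pi_{F_\ell}[x]$ in the polyhedral cone $\NC{F_\ell}{y}\cap\TC{P}{y}$, shows every nonzero element of that cone satisfies $a_\ell^T v < 0$, slices by the hyperplane $a_\ell^T v = -1$ to obtain a polytope, and then appeals to compactness of that slice plus finiteness of the possible face-incidence patterns of $y$. You instead work only with $\NC{F_\ell}{y^\ast}$, run Carath\'eodory to get a conic representation with a linearly independent generating set (necessarily containing $\pm a_\ell$, by your $\nu = 0$ contradiction), and then extract the bound algebraically via the bounded left inverse of the generator matrix. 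This is essentially a direct proof of the relevant instance of Hoffman's lemma.

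What each approach buys: the paper's argument is shorter and more geometric, but its compactness step is existential and leans on the claim that each slice is a polytope (which in turn needs the strict inequality $a_\ell^T v < 0$ on every ray). Your approach avoids both the tangent cone $\TC{P}{y}$ and any compactness reasoning, and it yields an explicit, computable $\chi(P)$ as a maximum of finitely many operator norms of pseudoinverses---depending only on the vectors $a_i$, not on the right-hand sides $b_i$ or on whether $P$ is bounded. The one step to keep crisp is the Carath\'eodory bookkeeping: the theorem hands you \emph{some} linearly independent subfamily of $\{a_i : i \in I\setminus\{\ell\}\}\cup\{a_\ell,-a_\ell\}$, and you then argue a posteriori that it must contain $\pm a_\ell$; your sketch does this correctly, but it is worth stating the two-case split explicitly rather than folding it into ``$\nu \in \mathbb{R}$'' up front.
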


\begin{proof}
    Let $\ell$ be arbitrary, and fix $x \in P \setminus F$. Let $y = \Pi_{F}[x]$. Then $x-y \in \TC{P}{y}$, since both $x,y$ are in $P$. Moreover, $x-y \in \NC{F}{y}$, due to the projection property. We conclude that $x-y \in \NC{F}{y} \cap \TC{P}{y}$, which is a polyhedral cone. Now, note that for any given $\ell$, the set $\NC{F}{y} \cap \TC{P}{y}$ depends solely on the relative interiors of the faces of $F$ and $P$ $y$ belongs to -- of which there are finitely many selections. Therefore, it is sufficient to prove the existence of a strictly positive lower bound for a given $y$, and $v = x-y \in \NC{F}{y} \cap \TC{P}{y}$.

    We first claim that $v^Ta_\ell < 0$ for any $v \neq 0$. Suppose not, then since $v^T a_\ell \in \TC{P}{y}$ and $a_\ell^T y = b_\ell$, we need $v^T a_\ell \leq 0$. If $v_T a_\ell \geq 0$, we conclude that $v^T a_\ell = 0$. In this case, since $P$ is a polyhedron, there exists $\eta > 0$ such that $y + \eta v\in P$. However, $a_\ell^T(y+\eta v) = b_\ell$, and thus $y + \eta v \in F$. This implies that $v \in \TC{F}{y}$. However, $v \in \NC{F}{y}$, and $\TC{F}{y} \cap \NC{F}{y} = \{0\}$ -- a contradiction.

    Now, the value $-\|v\| / a_\ell^Tv$ is invariant under multiplication by a positive constant, so if $v \neq 0$, we may without loss of generality assume that $a_\ell^T v = -1$. Therefore, we would like to maximise $\|v\|$ on the intersection of the cone $\NC{F}{y} \cap \TC{P}{y}$ and the hyperplane $a_\ell^T v = -1$. Now, all potential solutions are convex combinations of rays $r$ of $\TC{F}{y} \cap \NC{F}{y}$ which generate the cone, each scaled such that $r^T a_\ell = -1$. The feasible region is thus a polytope, and the maximum exists since $\|v\|$ is continuous. We then define $\chi(P)$ to be the maximum such bound over all choices $\ell$.
\end{proof}

Lemma \ref{lem:angle-bound} then allows us to bound the largest step size we may take along the gradient field of a given tangential function $h$ on a polyhedron, which will be crucial for the first step in our proof of Theorem \ref{thm:P-scaled-sets}.

\begin{lemma}\label{lem:condition}
    Suppose that $P \subseteq \mathbb{R}^n$ is a polyhedron, and $h : P \rightarrow \mathbb{R}$ a tangential function with Lipschitz continuous gradients (of modulus $L_h$). Then for any $x \in P$, $x + \nabla h(x) / (\chi(P)\cdot L_h) \in P$.
\end{lemma}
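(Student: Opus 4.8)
The plan is to fix $x \in P$ and verify directly that $y := x + \nabla h(x)/(\chi(P) L_h)$ satisfies every defining inequality of $P$. Write $P = \{z : a_\ell^T z \le b_\ell,\ \ell \in \mathcal L\}$ with $\|a_\ell\| = 1$, arranged (as in Lemma~\ref{lem:angle-bound}) so that each face $F_\ell := \{z \in P : a_\ell^T z = b_\ell\}$ is nonempty; redundant constraints, if any, are satisfied automatically once $y$ is shown to lie in the intersection of the remaining ones. I would also record that the hypothesis actually used is that $h$ is \emph{tangential}, i.e. $\nabla h(z) \in \TC{P}{z}$ for every $z \in P$ — this is the setting in which the lemma is invoked in Theorem~\ref{thm:P-scaled-sets} — and that the case $L_h = 0$ is trivial, since then $\nabla h$ is a constant recession direction of $P$.

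I would then fix $\ell$ and split on whether constraint $\ell$ is active at $x$. If $a_\ell^T x = b_\ell$, tangentiality gives $\nabla h(x) \in \TC{P}{x} \subseteq \{v : a_\ell^T v \le 0\}$ (the latter inclusion because $\ell$ is active at $x$), so $a_\ell^T y = a_\ell^T x + a_\ell^T\nabla h(x)/(\chi(P)L_h) \le b_\ell$ with no constraint on the step size. If instead $a_\ell^T x < b_\ell$, so that $x \in P \setminus F_\ell$ and Lemma~\ref{lem:angle-bound} applies, I would project onto the face: let $z := \Pi_{F_\ell}[x] \in F_\ell$. Since $\ell$ is active at $z$, tangentiality again yields $a_\ell^T\nabla h(z) \le 0$; combining this with $\|a_\ell\| = 1$ and the Lipschitz estimate $\|\nabla h(x) - \nabla h(z)\| \le L_h\|x - z\|$ gives $a_\ell^T\nabla h(x) \le L_h\|x - z\|$. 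Finally Lemma~\ref{lem:angle-bound} bounds $\|x - z\| = \|\Pi_{F_\ell}[x] - x\| \le \chi(P)\,(b_\ell - a_\ell^T x)$, so $a_\ell^T\nabla h(x) \le \chi(P)L_h(b_\ell - a_\ell^T x)$ and dividing by $\chi(P)L_h$ gives $a_\ell^T y \le b_\ell$. As $\ell$ was arbitrary, $y \in P$.

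The proof is short and I do not anticipate a real obstacle; the single idea is to transport the inward-pointing property of $\nabla h$ from the face $F_\ell$ — where it holds because $h$ is tangential — to the interior point $x$ using Lipschitz continuity of $\nabla h$, and then to convert the resulting distance $\|\Pi_{F_\ell}[x] - x\|$ into admissible step length via the angle condition number $\chi(P)$ of Lemma~\ref{lem:angle-bound}. The only part demanding attention is the degenerate bookkeeping (empty faces / redundant constraints, and $L_h = 0$), which I would clear away at the start so that the two-case argument runs cleanly.
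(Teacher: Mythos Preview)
Your proof is correct and follows essentially the same route as the paper's: project onto the face $F_\ell$, use tangentiality of $h$ at the projection to get $a_\ell^T\nabla h(\Pi_{F_\ell}[x])\le 0$, transport this back to $x$ via the Lipschitz bound, and convert $\|x-\Pi_{F_\ell}[x]\|$ into step length using Lemma~\ref{lem:angle-bound}. The only cosmetic difference is that the paper bounds the maximal feasible step $\Delta=\sup\{\delta: x+\delta\nabla h(x)\in P\}$ from below, whereas you verify each defining inequality for the specific step $1/(\chi(P)L_h)$; the paper's setup gives nonemptiness of $F_\ell$ for free (since $x+\Delta\nabla h(x)\in F_\ell$), while you dispose of this via the redundant-constraint remark. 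You are also right to flag that tangentiality of $h$ is actually used---the paper's proof invokes it explicitly even though the lemma statement omits it.
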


\begin{proof}
    Let $P$ be given by the set of linear inequalities, $a_i^T x \leq b_i$, where $\|a_i^T\| = 1$ for each inequality $i$. Fix $x \in P$, and consider $\Delta = \sup \{ \delta \ | \ a_i^T(x+\delta\nabla h(x)) \leq b_i \ \forall \ i \}$. By the tangentiality assumption on $h$, $a^T_i\nabla h(x) > 0$ only for inequalities $i$ such that $a^T_i x < b_i$, so $\Delta > 0$. If $\Delta = \infty$, we are done, so suppose that it is finite.

    Then, $x + \Delta \nabla h(x) \in P$, and we seek to bound $\Delta$ from below. Take any inequality $\ell$ such that $a^T_\ell x < b_\ell$ and $a^T_\ell (x+\Delta \nabla h(x)) = b_\ell$. Then the face $F$ of $P$, defined via the intersection of $P$ with the plane $a^T_\ell x = b_\ell$, is non-empty. Therefore, the projection $\Pi_F[x]$ of $x$ onto the plane $F$ exists. 
    
    Now, by the tangentiality assumption on $h$, $ a^T_\ell \nabla h(\Pi_F[x]) \leq 0$; this is because $\Pi_F[x] \in F$ and thus $a^T_\ell \Pi_F[x] = b_\ell$. We then infer,
    $$ a^T_\ell \nabla h(x) \leq a^T_\ell \left(\nabla h(x) - \nabla h\left(\Pi_F[x]\right) \right) \leq L_h \|x - \Pi_F[x] \|,$$
    where the second inequality is via the Lipschitz continuity of $\nabla h$ and the Cauchy-Schwarz inequality. Recalling again that $a_\ell^T\Pi_F(x) = b_\ell$ and invoking Lemma \ref{lem:angle-bound}, we conclude 
    $$\Delta =  \frac{b_\ell - a_\ell^Tx}{a_\ell^T \nabla h(x)} \geq \frac{1}{L_h} \cdot \frac{a_\ell^T (\Pi_F[x]-x)}{\|\Pi_F[x]-x\|} \geq \frac{1}{L_h \chi(P)}.$$
\end{proof}

\begin{theorem*}[\ref{thm:P-scaled-sets}]
    In a smooth game, suppose that players $i \in N'$ all employ the same non-increasing step sizes $\eta_t$, and have polyhedral strategy sets $X_i$. Let $h : X \rightarrow \mathbb{R}$ be a bounded tangential function, with bounded and Lipschitz gradients, which depends only on coordinates $(x_i)_{i \in N'}$ -- i.e. for some $M, G_h, L_h > 0$, for any $x, x' \in X$,
    $|h(x)| \leq M$, $ \| \nabla h(x) \| \leq G_h \textnormal{ and } \| \nabla h(x) - \nabla h(x') \| \leq L_h$, and $\nabla_j h(x) = 0$ for any $x \in X$. Then for any $T > 1$, and any $\delta \in [0,1/(L_h \cdot \chi(X))]$,
    \begin{align*}
        & \frac{1}{T} \cdot \sum_{i \in N} \sum_{t = 1}^{T} (u_i(\Pi_{X_i}[x^t_i+\delta \nabla_i h(x^t)],x^t_{-i})-u_i(x^t)) \\
        \leq \ & \delta \cdot \left( \frac{2M}{T} \left[\frac{1}{\eta_{T}} + \frac{1}{\eta_1}\right]+ \frac{\sum_{t = 1}^T \eta_t}{T} \cdot \sum_{i \in N'}\left(\frac{G_i^2 L_h}{2} + G_iL_h \cdot \sum_{j \in N'}G_j\right) \right) + \frac{1}{2T} \cdot \delta^2 G_h^2 \sum_{i \in N'} L_i.
    \end{align*}
    As a consequence, when all players use the same step sizes $\eta_t = C/\sqrt{t}$, after $T$ rounds of projected gradient ascent, the time-averaged distribution $\sigma$ which samples $x^t$ with probability $1/T$ is an $(\epsilon,\Delta)$-local coarse correlated equilibrium with respect to the set of bounded tangential functions $H$ with bounded gradients and  Lipschitz modulus $\leq 1$, where we may fix 
    $\epsilon = ( 4M/C + 2\max\{1,C\} )/\sqrt{T}$ and $\Delta = \min\{1/\chi(X),1/\sqrt{T}\}$. Moreover, evaluating $\delta \downarrow 0$, $\sigma$ is an $\epsilon$-local coarse correlated equilibrium with respect to $H$.
\end{theorem*}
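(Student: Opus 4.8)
The plan is to reduce everything, for $\delta$ in the stated range, to a first-order control of $\tfrac1T\sum_t\sum_{i\in N'}\langle\nabla_ih(x^t),\nabla_iu_i(x^t)\rangle$ that parallels Proposition \ref{prop:LCCE}, with $\chi(X)$ entering only through the geometry of the polyhedra. First I would note that since $h$ is tangential, Lemma \ref{lem:condition} applied to $P=X$ gives $x^t+\nabla h(x^t)/(L_h\chi(X))\in X$, so by convexity $x^t+\delta\nabla h(x^t)\in X$ for every $\delta\in[0,1/(L_h\chi(X))]$; using $X=\times_iX_i$ and $\nabla_jh\equiv0$ for $j\notin N'$ this passes to each block, so $\Pi_{X_i}[x^t_i+\delta\nabla_ih(x^t)]=x^t_i+\delta\nabla_ih(x^t)$ and the projection in the statement is inert. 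Writing $f^t_i=\nabla_ih(x^t)$ and $g^t_i=\nabla_iu_i(x^t)$, the descent lemma for $u_i$ (modulus $L_i$) bounds the per-player, per-round term by $\delta\langle f^t_i,g^t_i\rangle+\tfrac12L_i\delta^2\|f^t_i\|^2$; summing and using $\|f^t_i\|\le G_h$ isolates the quadratic contribution (order $\delta^2G_h^2\sum_{i\in N'}L_i$) and leaves $\tfrac\delta T\sum_t\sum_{i\in N'}\langle f^t_i,g^t_i\rangle$ to be bounded.

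For that main term I would use the first-order optimality of the projection step. Set $d^t_i=x^{t+1}_i-x^t_i$ and $\nu^t_i=x^t_i+\eta_tg^t_i-x^{t+1}_i\in\NC{X_i}{x^{t+1}_i}$, so $\eta_tg^t_i=d^t_i+\nu^t_i$ and $\|\nu^t_i\|=\dist(x^t_i+\eta_tg^t_i,X_i)\le\eta_tG_i$. Pairing with $f^t_i$ splits $\eta_t\langle g^t_i,f^t_i\rangle$ into $\langle d^t_i,f^t_i\rangle+\langle\nu^t_i,f^t_i\rangle$. For the normal-cone term, Lemma \ref{lem:condition} at the feasible point $x^{t+1}$ supplies the admissible direction $x^{t+1}_i+\delta_0\nabla_ih(x^{t+1})\in X_i$ with $\delta_0=1/(L_h\chi(X))$, whence $\langle\nu^t_i,\nabla_ih(x^{t+1})\rangle\le0$; transporting $\nabla_ih$ from $x^{t+1}$ to $x^t$ costs at most $\|\nu^t_i\|\,L_h\,\|(x^{t+1}-x^t)|_{N'}\|\le\eta_t^2G_iL_h\sum_{j\in N'}G_j$ via $\|d^t_j\|\le\eta_tG_j$. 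Summing over $i\in N'$, applying the descent lemma for $h$ to $\sum_{i\in N'}\langle d^t_i,f^t_i\rangle=\langle\nabla h(x^t),x^{t+1}-x^t\rangle\le h(x^{t+1})-h(x^t)+\tfrac12L_h\eta_t^2\sum_{i\in N'}G_i^2$, dividing by $\eta_t$ and summing over $t$ gives
\[\sum_{t=1}^T\sum_{i\in N'}\langle g^t_i,f^t_i\rangle\ \le\ \sum_{t=1}^T\frac{h(x^{t+1})-h(x^t)}{\eta_t}\ +\ \Big(\sum_{t=1}^T\eta_t\Big)\sum_{i\in N'}\Big(\tfrac12G_i^2L_h+G_iL_h\textstyle\sum_{j\in N'}G_j\Big).\]
A summation-by-parts on the first sum (using $\eta_t$ non-increasing and $|h|\le M$) bounds it by $2M/\eta_T\le2M(1/\eta_T+1/\eta_1)$; multiplying by $\delta/T$ and re-adding the quadratic remainder yields the displayed inequality.

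For the consequence, substituting $\eta_t=C/\sqrt t$ gives $1/\eta_T=\sqrt T/C$, $1/\eta_1=1/C$ and $\sum_{t\le T}\eta_t\le2C\sqrt T$, so dividing by $T$ produces the rate $(4M/C+2\max\{1,C\})/\sqrt T$ once the player- and $h$-dependent factors — and the $\delta^2$ term, converted to order $\delta/\sqrt T$ using $\delta\le\Delta\le1/\sqrt T$ — are absorbed into $\poly(G,L,G_h,L_h)$; the feasibility constraint $\delta\le1/(L_h\chi(X))$ with $L_h\le1$, intersected with $\delta\le1/\sqrt T$, gives $\Delta=\min\{1/\chi(X),1/\sqrt T\}$. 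Letting $\delta\downarrow0$ in the per-step bound (divided by $\delta$) kills the quadratic term and, by tangentiality $\Pi_{\TC{X_i}{x_i}}[\nabla_ih(x)]=\nabla_ih(x)$ together with $\delta^{-1}(u_i(x_i+\delta\nabla_ih(x),x_{-i})-u_i(x))\to\langle\Pi_{\TC{X_i}{x_i}}[\nabla_ih(x)],\nabla_iu_i(x)\rangle$, delivers the $\epsilon$-local coarse correlated equilibrium statement for the empirical distribution $\sigma$. The crux — the only ingredient genuinely beyond Proposition \ref{prop:LCCE} — is precisely the replacement of the sign condition on constraint normals by the condition number: for a general polyhedron one cannot take a gradient step of $h$ for free, so one must instead certify, through Lemmas \ref{lem:angle-bound}--\ref{lem:condition}, that $x^{t+1}_i+\delta\nabla_ih(x^{t+1})$ is an admissible test point in the normal-cone inequality, which is exactly what forces the $\chi(X)$ in $\Delta$ and the $L_h$-Lipschitz slack (the $G_iL_h\sum_jG_j$ terms) incurred when transporting $\nabla_ih$ from $x^{t+1}$ back to $x^t$; some routine extra care is needed to keep the descent-lemma error for $h$ confined to the $N'$-blocks of $x^{t+1}-x^t$.
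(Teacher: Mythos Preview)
Your proposal is correct and follows essentially the same approach as the paper's own proof: remove the projection via Lemma \ref{lem:condition}, apply the descent lemma for $u_i$ to isolate the $O(\delta^2)$ term, split $\eta_t\langle\nabla_iu_i(x^t),\nabla_ih(x^t)\rangle$ into the ``displacement'' part $\langle d^t_i,\nabla_ih(x^t)\rangle$ and the ``normal'' part $\langle\nu^t_i,\nabla_ih(x^t)\rangle$, bound the latter by transporting $\nabla_ih$ from $x^{t+1}$ to $x^t$ after using $\nu^t_i\in\NC{X_i}{x^{t+1}_i}$ against a tangent direction, handle the former via the descent lemma for $h$ and Abel summation. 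The only cosmetic difference is that for the normal-cone inequality you re-invoke Lemma \ref{lem:condition} at $x^{t+1}$, whereas the paper simply appeals to tangentiality of $h$ to place $\nabla_ih(x^{t+1})$ in $\TC{X_i}{x^{t+1}_i}$ directly; both yield the same conclusion.
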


\begin{proof}
    We will prove the given inequality, as the latter part of the theorem statement follows immediately. First note that since the projection operator for closed \& convex sets are contractive, $\|\Pi_{X_i}[x_i + v] - x_i\| \leq \|v\|$ for any vector $v \in \mathbb{R}^{D_i}$. Moreover, by our choice of $\delta$ and by Lemma \ref{lem:condition}, $\Pi_{X_i} [x_i^t+\delta\nabla_ih(x^t)] = x^t + \delta \nabla_i h(x^t)$ for any $t$. We then proceed, 
    \begin{equation*}
        \sum_{i \in N} \sum_{t = 1}^T \left( u_i( \Pi_{X_i} [x_i^t+\delta\nabla_ih(x^t)] )  - u_i(x^t)\right)
        \leq \sum_{i \in N'} \left( \delta \nabla_i u_i(x^t)^T\nabla_ih(x^t) + \frac{1}{2} \cdot \delta^2 \|\nabla_i h(x^t)\|^2 L_i \right).
    \end{equation*}
    Now, $\|\nabla_ih(x^t)\|^2 \leq G_h^2$, which provides the $O(\delta^2)$ term we desire. In turn, we write
    \begin{align}
        & \sum_{i \in N'} \sum_{t = 1}^T\delta  \nabla_iu_i(x^t)^T\nabla_ih(x^t) \nonumber \\
        = \ & \sum_{i \in N'} \sum_{t = 1}^T \frac{\delta}{\eta_t} \nabla_ih(x^t) \cdot (x_i^t + \eta_t \nabla_i u_i(x_i^t) - x_i^{t+1}) + \frac{\delta}{\eta_t} \nabla_i h(x^t)\cdot (x^{t+1}-x^t). \label{eq:int-step-LCCE}
    \end{align}
    By the Lipschitz condition on $h$, and its dependence solely on $(x_i)_{i \in N'}$,
    $$ \sum_{i \in N'} \nabla_i h(x^t)\cdot (x^{t+1}_i-x^t_i) \leq h(x^{t+1}) - h(x^t) + \frac{1}{2}  L_h \sum_{i \in N'} \|x^{t+1}_i-x^t_i\|^2 \leq h(x^{t+1}) - h(x^t) + \frac{1}{2} \eta_t^2 L_h \sum_{i \in N'} G_i^2.$$
    Then the second term of (\ref{eq:int-step-LCCE}) is bounded above by
    $$ \leq \delta \cdot \sum_{t = 1}^T \frac{ h(x^{t+1}) - h(x^t) }{\eta_t} + \delta\cdot  \sum_{t = 1}^T \eta_t \cdot \frac{1}{2} L_h \sum_{i \in N'} G_i^2.$$
    We want to bound the sum of the $O(1/\eta_t)$ term here. This follows from rearranging the sum,
    \begin{align*}
       & \quad \ \sum_{t = 1}^T \frac{1}{\eta_t} \cdot (h(x^{t+1})-h(x^t)) \\
       & = -\frac{1}{\eta_1} h(x^1) + \frac{1}{\eta_T} h(x^{T+1}) - \sum_{t = 1}^{T-1} h(x^{t+1}) \cdot \left( \frac{1}{\eta_{t+1}} - \frac{1}{\eta_{t}} \right) \\
       & \leq 2M \left(\frac{1}{\eta_1} + \frac{1}{\eta_T}\right).
    \end{align*}
    To bound the first term of (\ref{eq:int-step-LCCE}), we remark that $x^{t+1}_i = \Pi_{X_i}[x_i^t + \eta_t \nabla_i u_i(x^t)]$. Therefore, by the projection property,  $x_i^t + \eta_t \nabla_i u_i(x^t) - x_i^{t+1} \in \NC{X_i}{x_i^{t+1}}$. Meanwhile, since $h$ is tangential by assumption, $\nabla_i h(x^{t+1}) \in \TC{X_i}{x_i^{t+1}}$. This implies that $(x_i^t + \eta_t \nabla_i u_i(x^t) - x_i^{t+1})^T \nabla_i h(x^{t+1}) \leq 0$, and thus 
    \begin{align*}
        & \sum_{i \in N'} \sum_{t = 1}^T \frac{\delta}{\eta_t} \nabla_ih(x^t) \cdot (x_i^t + \eta_t \nabla_i u_i(x^t) - x_i^{t+1}) \\
        \leq \ & \sum_{i \in N'} \sum_{t = 1}^T \frac{\delta}{\eta_t} (\nabla_ih(x^t)-\nabla_i h(x^{t+1})) \cdot (x_i^t + \eta_t \nabla_i u_i(x^t) - x_i^{t+1}) \\
        \leq \ & \sum_{i \in N'} \sum_{t=1}^T \frac{\delta}{\eta_t} L_h \|x^{t+1}_{N'}-x^t_{N'}\| \cdot \eta_t \|\nabla_i u_i(x^t)\| \leq \delta \sum_{t= 1}^T \eta_t \sum_{i \in N'} G_i L_h \cdot \sum_{j \in N'} G_j,
    \end{align*}
    where we wrote $x^{t}_{N'}$ to denote the vector $x^t$ over the strategies of the players in $N'$ only.
\end{proof}

\begin{theorem*}[\ref{thm:P-scaled}]
    Suppose that each player $i$ employs projected gradient ascent on their utilities over their scaled strategy sets $Y_i = P_i^{-1}\Delta(A_i)$ with decreasing step sizes $\eta_{it} = \omega(t), o(1)$. Denote the time averaged distribution $\sigma^T$, sampling each $x^t = (P_iy_i^t)_{i \in N}$ with probability $1/T$, and $(P_iP_i^T) = Z_i$. Then for any  convergent subsequence $\sigma^{T_k} \rightarrow \sigma$ and for any player $i$,
    $$ \sum_{a \in A} \sigma(a) \cdot \sum_{a'_i, a''_i \in A_i} Z_i(a'_i,a''_i) \cdot (Q_i(a''_i,a_i) + q_i(a''_i)) \cdot u_i(a'_i,a_{-i}) \leq 0$$
    whenever $Q_i$ is a symmetric matrix such that $Z_i(Q_i,q_i)$ satisfies (\ref{con:preserve}), (\ref{con:tangency}). 
\end{theorem*}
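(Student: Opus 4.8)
The plan is to rerun, player by player, the multilinearity argument of Section~\ref{sec:semicoarse-def} that produced Definition~\ref{def:semicoarse}, but now inside the scaled game with strategy sets $Y_i = P_i^{-1}\Delta(A_i)$ and utilities $u^P_i$ from~(\ref{def:util-scaled}), using Theorem~\ref{thm:P-scaled-sets} in place of Proposition~\ref{prop:LCCE} to handle the polytopes $Y_i$ (which, being images of simplices under possibly non-conformal maps, need not satisfy the sign condition on the defining inequalities).

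First I would construct the relevant test functions. Fix a player $i$, a symmetric matrix $Q_i$, and a vector $q_i$ with $(Z_iQ_i,Z_iq_i)$ satisfying~(\ref{con:preserve}) and~(\ref{con:tangency}). Put $\tilde Q_i := P_i^TQ_iP_i$ --- which is symmetric precisely because $Q_i$ is --- and $\tilde q_i := P_i^Tq_i$, and let $h(y) := \tfrac{1}{2}y_i^T\tilde Q_iy_i + \tilde q_i^Ty_i$, viewed as a function on $Y = \times_{j\in N}Y_j$ that ignores all coordinates except $y_i$. Then $\nabla_ih(y) = \tilde Q_iy_i+\tilde q_i$, $\nabla_jh\equiv 0$ for $j\neq i$, and $h$ is bounded with bounded, Lipschitz gradients on the compact polytope $Y$. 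The one step needing care is tangentiality: since $P_i$ is invertible, $\TC{Y_i}{y_i} = P_i^{-1}\TC{\Delta(A_i)}{P_iy_i}$, so $\nabla_ih(y)\in\TC{Y_i}{y_i}$ for all $y_i\in Y_i$ is equivalent to $(P_i\tilde Q_iP_i^{-1})x_i + P_i\tilde q_i\in\TC{\Delta(A_i)}{x_i}$ for all $x_i = P_iy_i\in\Delta(A_i)$. Because $P_i\tilde Q_iP_i^{-1} = P_iP_i^TQ_i = Z_iQ_i$ and $P_i\tilde q_i = Z_iq_i$, the derivation in Section~\ref{sec:semicoarse-def} identifies this condition with exactly the hypothesis that $(Z_iQ_i,Z_iq_i)$ satisfy~(\ref{con:preserve}) and~(\ref{con:tangency}); hence $h$ is tangential.

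Next I would apply Theorem~\ref{thm:P-scaled-sets} with $N' = \{i\}$: since $h$ depends only on $y_i$ and player $i$'s step sizes meet its hypotheses, that theorem (evaluating $\delta\downarrow 0$) yields $\mathbb{E}_{y\sim\sigma^T}\bigl[\bilin{\nabla_ih(y)}{\nabla_iu^P_i(y)}\bigr]\le\epsilon_T$ for a sequence $\epsilon_T\to 0$. The integrand is a polynomial in $y$, hence continuous and bounded on the compact $Y$, so passing to the convergent subsequence $\sigma^{T_k}\to\sigma$ gives $\mathbb{E}_{y\sim\sigma}\bigl[\bilin{\nabla_ih(y)}{\nabla_iu^P_i(y)}\bigr]\le 0$. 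To translate this to outcomes, note $u^P_i(y) = u_i(x)$ with $x_j = P_jy_j$, so by the chain rule $\nabla_iu^P_i(y) = P_i^T\nabla_{x_i}u_i(x)$ and hence $\bilin{\nabla_ih(y)}{\nabla_iu^P_i(y)} = \bilin{P_i\nabla_ih(y)}{\nabla_{x_i}u_i(x)} = \bilin{Z_iQ_ix_i+Z_iq_i}{\nabla_{x_i}u_i(x)}$. The multilinearity computation of Section~\ref{sec:semicoarse-def} now applies verbatim to $u_i$ on $\times_j\Delta(A_j)$: expanding $\partial u_i(x)/\partial x_i(a'_i) = \sum_{a_{-i}}u_i(a'_i,a_{-i})\prod_{j\neq i}x_j(a_j)$ and using $\sum_{a_i}x_i(a_i)=1$ to absorb the constant part,
\[
\bilin{Z_iQ_ix_i+Z_iq_i}{\nabla_{x_i}u_i(x)} = \sum_{a\in A}\Bigl(\prod_{j\in N}x_j(a_j)\Bigr)\sum_{a'_i,a''_i\in A_i}Z_i(a'_i,a''_i)\bigl(Q_i(a''_i,a_i)+q_i(a''_i)\bigr)u_i(a'_i,a_{-i}).
\]
Integrating against $\sigma$ and writing $\sigma(a) := \int_X d\sigma(x)\prod_jx_j(a_j)$ for the induced outcome distribution gives the claimed inequality.

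The only non-routine ingredient is the change of variables $\tilde Q_i = P_i^TQ_iP_i$, $\tilde q_i = P_i^Tq_i$: it is rigged so that conjugating the gradient field back through $P_i$ produces the factor $Z_i = P_iP_i^T$ (hence $P_i^T$, not $P_i^{-1}$) and so that the symmetry of the quadratic form --- needed for $\nabla_ih = \tilde Q_iy_i+\tilde q_i$ --- is preserved under the transformation. With that in hand, the remainder is the chain rule plus the multilinear bookkeeping already carried out for semicoarse equilibria, the only analytic input being the passage to a weak limit along the convergent subsequence.
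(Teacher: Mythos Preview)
Your proposal is correct and follows essentially the same approach as the paper: define the quadratic test function on $Y$ via $\tilde Q_i = P_i^TQ_iP_i$, $\tilde q_i = P_i^Tq_i$, verify tangentiality by pushing the gradient field forward through $P_i$ to recover $Z_i(Q_ix_i+q_i)$ on $\Delta(A_i)$, invoke Theorem~\ref{thm:P-scaled-sets}, and then unwind the bilinear form via the chain rule and the multilinearity calculation of Section~\ref{sec:semicoarse-def}. The only cosmetic difference is that the paper first defines $h$ on $X$ and pulls it back to $h^P$ on $Y$, whereas you define $h$ directly on $Y$; the content is identical.
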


\begin{proof}
    We let $y_i = P_i^{-1}x_i$ for every $x_i \in \Delta(A_i)$. Noting that $u^P_i(y) = u_i(P_iy_i,x_{-i})$ for any $y_i$, we see that $\nabla_i u^P_i(y) = P_i^T \nabla_iu_i(x)$ for any $x \in \times_{i \in N} \Delta(A_i)$. Meanwhile, for a function $h(x) = \frac{1}{2}x_i^T Q_i x_i + q_i^T x_i$, denote 
    $$ h^P(y) = \frac{1}{2} y_iP_i^T Q_iP_i y_i+ q_i^T P_iy_i.$$
    Evaluating the gradient, 
    $$ \nabla_i h^P(y)= P_i^T (Q_i P_iy_i+q_i)=P_i^T (Q_ix_i +q_i)= P_i^T \nabla_ih(x).$$
    Therefore, at any $y$,
    \begin{equation}\label{eq:bilin-y-form}
        \bilin{\nabla_iu^P_i(y)}{\nabla_ih^P(y)} = \bilin{P_i^T \nabla_iu_i(x)}{P_i^T\nabla_ih(x)} = \nabla_i u_i(x)^T (P_iP_i^T)\nabla_i h(x).
    \end{equation}
    Now, to invoke Theorem \ref{thm:P-scaled-sets}, we need to ensure that $\nabla_i h^P(y)$ is tangential. We note that the (in spirit) converse of Lemma \ref{lem:condition} holds in this case; for a function $h$ defined on a polyhedron $P$, if there exists a $\delta > 0$ such that for any $x \in P$, $x + \delta \nabla h(x) \in P$, then $h$ is tangential. Therefore, we need to ensure that for some $\delta > 0$, 
    \begin{align*}\forall y \in P_i^{-1}\Delta(A_i), y + \delta\nabla_ih^P(y) \in P_i^{-1}\Delta(A_i).\end{align*}
    This condition is equivalent to 
    \begin{align*}\forall x \in \Delta(A_i), x + \delta(P_iP_i^T)\nabla_ih(x) \in \Delta(A_i).\end{align*}
    Or in other words, $(P_iP_i^T)\nabla_ih(x)$ must be tangential over $\Delta(A_i)$; this is ensured whenever the pair $((P_iP_i^T)Q_i,(P_iP_i^T)q_i)$ satisfies (\ref{con:preserve}) and (\ref{con:tangency}), that is, writing $Z_i = P_iP_i^T$,
    \begin{align*}
        \forall \ a_i \in A_i, \sum_{a'_i,a''_i \in A_i} Z_i(a'_i,a''_i)(Q_i(a''_i,a_i)+ q_i(a''_i)) & = 0, \\
        \forall \ a_i \neq a'_i, \sum_{a''_i \in A_i} Z_i(a'_i,a''_i)(Q_i(a''_i,a_i)+ q_i(a''_i)) & \geq 0.
    \end{align*}
    The symmetry requirement on $Q_i$, in turn, is because $\nabla_i (x_i^T Q_ix_i) = (Q_i+Q_i^T)x_i$; i.e. differentiation would symmetrise $Q_i$ otherwise.

    Given that $h^P$ is tangent to $Y$, and depends solely on $y_i$, by Theorem \ref{thm:P-scaled-sets}, in the limit $T \rightarrow \infty$, $\sigma^T$ incurs vanishing regret [$o(1)$] against $h$ in first-order. This regret bound is generated by the vector field $x_i \mapsto (P_iP_i^T)(Q_ix_i+q_i)$ over $\Delta(A_i)$, which corresponds to a linear transformation $\Delta(A_i) \rightarrow \Delta(A_i)$ of the appropriate form by Lemma \ref{lem:lin-reg}. In particular, abusing notation by writing $\sigma^T$ as the probability distributions over $Y$, $X$ as well as $A$ obtained via projected gradient ascent, 
    \begin{align*} & \sum_{i \in N} \int_Y d\sigma^T(y) \cdot \bilin{\nabla_iu^P_i(y)}{\nabla_ih^P(y)} \leq o(1) \\
    \textnormal{(\ref{eq:bilin-y-form})} \Rightarrow \ & \sum_{i \in N} \int_X d\sigma^T(x) \cdot \nabla_i u_i(x)^T (P_iP_i^T)\nabla_i h(x) \\ = \ & \int_X d\sigma^T(x) \cdot \nabla_iu_i(x)^T Z_i (Q_ix_i+ q_i) \\
    = \ &  \int_X d\sigma^T(x) \cdot \nabla_iu_i(x)^T Z_i (Q_i+ q_i \Sigma^T) x_i \\ 
    = \ &  \int_X d\sigma^T(x) \cdot \sum_{a_{-i} \in A_{-i}}\left(\prod_{j \neq i} x_j(a_j) \right) \sum_{a'_i \in A_i} u_i(a'_i,a_{-i}) \cdot \\ \ldots \ & \sum_{a''_i \in A_i} Z_i(a'_i,a''_i) \cdot \sum_{a_i \in A_i} (Q_i(a''_i,a_i) + q_i(a''_i))\cdot x_i(a_i) \\
    = \ & \sum_{a\in A} \left[\int_X d\sigma^T(x) \cdot \left( \prod_{j \in N} x_j(a_j) \right) \right] \cdot \sum_{a''_i, a'_i \in A_i} Z_i(a'_i,a''_i) \cdot (Q_i(a''_i,a_i) + q_i(a''_i)) \cdot u_i(a'_i,a_{-i}) \\
    = \ & \sum_{a\in A} \sigma^T(a) \cdot \sum_{a''_i, a'_i \in A_i} Z_i(a'_i,a''_i) \cdot (Q_i(a''_i,a_i) + q_i(a''_i)) \cdot u_i(a'_i,a_{-i}) \leq o(1).
    \end{align*}
    Thus, as $T \rightarrow \infty$, for any convergent subsequence of $\sigma^T$, the limit $\sigma$ satisfies the desired equilibrium conditions.
\end{proof}

\section{Retrieving Convergence Bounds from Dual Solutions}\label{sec:explicit-convergence}

\subsection{Dual bounds for time-average convergence guarantees}

Here, we shall discuss how to adapt the proof of Theorem \ref{thm:bertrand-unique} and Theorem \ref{thm:bertrand-unique-three} for explicit last iterate convergence bounds. The first step of the argument necessitates us to choose $\epsilon_k, \delta_i$ explicitly in the proof such that equation (\ref{eq:pointwise}) holds. In what follows, we will fix $\delta_i$, then fix $\epsilon_1$ iteratively until settling on a final bound, and then will iteratively define $\epsilon_k$.

Recall that the pairs $(Q^k,q^k)$ and $(R^i,r^i)$ are chosen such that they are induced by the corresponding strategy modifications $\phi_k, \phi^i$ as deduced from in Theorem \ref{thm:strategy-def}.(1) for suitable choices of $S_i$, that is $\phi_k : x_i \mapsto (\mathbf{1}+Q^k)x_i +q^k$ for $i \leq m$, and $\phi^i : x_i \mapsto (\mathbf{1}+R^i)x_i + r^i$ for $i > m$. $\phi_k$ maps prices outside $(c+\{1,2,...,k\})/n$ to the uniform distribution on that set, while $\phi^i$ maps any strictly underbidding price to the uniform distribution on prices $\geq c_i/n$. (\ref{eq:pointwise}), for a given price vector $p$, has on its LHS the utility gain from such a deviation, given each firm $i$ posts price $p_i$.

\newcommand{\barp}{\bar{p}}

We thus repeat the case analysis on the price vector $p$. If the minimum posted price $\barp < c/n$, then all winning firms have negative utility. This negative utility is 
$$ \leq -\frac{1}{N} \left(\barp - \frac{c_i}{n}\right) D(\barp) \leq -\frac{1}{nN}D\left(\frac{c_i-1}{n}\right),$$
accounting for maximal number of firms tying at price $(c_i-1)/n$. Meanwhile, all deviations considered have $\geq 0$ utility, and this holds with equality whenever there are at least two firms who post prices $\barp$. Therefore, if we ensure $\epsilon_1 \geq nN / D((c-1)/n)$ and $\delta_i = nN / D((c_i-1)/n)$ for $i > m$, then (\ref{eq:pointwise}) holds for $p$.

Now suppose that the minimum price posted is $\barp = c/n$. Again, all firms have weakly negative utility, and all deviations considered have weakly positive utility, so the LHS is $\geq 0$. If the RHS $= 0$, then (\ref{eq:pointwise}) holds, so assume $d(p) = 1$. If any firm $i > m$ posts price $\barp$, then $\delta_i$ is set such that (\ref{eq:pointwise}) holds, so consider the case when only firms of lowest marginal cost bid $\barp$. Then if $p$ is not a Nash equilibrium, there is only one firm $i \leq m$ which posts price $c/n$. 

Furthermore, the strategy modification $\phi^{i'}$ does not change the actions of firms $i' > m$, and for a firm $i' \leq m, i \neq i'$, the deviations $\phi_k$ either leave their price $p_i$ unchanged or map it to a uniform distribution over losing prices. Meanwhile, only $\phi_1$ guarantees a positive utility change to firm $i$. In this case, their utility goes from $0$ to a quantity
$$ \geq \frac{1}{mn} D\left( \frac{c+1}{n} \right),$$
accounting for the possibility that the other $m-1$ firms of lowest marginal cost all post prices $(c+1)/n$. Since demand is non-increasing and $m \leq N$, this quantity is 
$$ \geq \frac{1}{Nn} D\left(\frac{c+1}{n}\right).$$ 
Therefore, we want to now let $\epsilon_1 = \max\left\{ nN / D\left(\frac{c-1}{n}\right), nm / D\left(\frac{c+1}{n}\right) \right\}$. Now, suppose that the minimum posted price equals $c/n+1/n$. All deviations considered are still weakly utility improving, so (\ref{eq:pointwise}) holds immediately if $d(p) = 0$. Indeed, if $d(p) = 0$, then if there is a firm of marginal price $c_i/n > (c+1)/n$ which posts price $(c+1)/n$, then $\delta_i$ is already set such that (\ref{eq:pointwise}) holds. So suppose not, in which case for $p$ to not be an equilibrium, there must exist a firm $i \leq m$ which doesn't post price $(c+1)/n$. Note that $\phi_1$ is a uniform deviation to $(c+1)/n$, so this improves the utility of firm $i$ by 
$$\geq \frac{1}{nN}D\left(\frac{c+1}{n}\right).$$
So, afterall, we let 
$\epsilon_1 = \frac{nN}{D\left(\frac{c+1}{n}\right)}$, 
which dominates all previous bounds we set on $\epsilon_1$.

Finally, we consider the case when the minimum posted price is $\barp = (c+\ell)/n$ for $\ell \geq 2$. If $\ell = 2$, then we need (\ref{eq:base-case}) to hold. By the strict concavity assumption on $(p-c/n)D(p)$, 
$$ \frac{1}{2} \cdot 0 \cdot D\left(\frac{c}{n} \right) + \frac{1}{2} \cdot \left(\frac{2}{n}\right) D\left( \frac{c+2}{n} \right) < \left( \frac{1}{n} \right) D \left( \frac{c+1}{n} \right),$$
which implies that (\ref{eq:base-case}) holds whenever $\epsilon_1 \geq 1 / \left( m \cdot \left(\frac{1}{n}\right) D\left( \frac{c+1}{n} \right) - \left( \frac{2}{n} \right) D \left( \frac{c+2}{n} \right)\right).$ The final desired bound on $\epsilon_1$ is thus 
$$\epsilon_1 = \max\left\{ \frac{1}{m \cdot \left(\frac{1}{n}\right) D\left( \frac{c+1}{n} \right) - \left( \frac{2}{n} \right) D \left( \frac{c+2}{n} \right)}, \frac{nN}{D\left(\frac{c+1}{n}\right)}\right\}.$$
The rest of the proof of Theorem \ref{thm:bertrand-unique} maintains the value of the LHS while choosing $\epsilon_\ell$, for the cases $\ell > 2$. Recalling that we denoted the monopoly price $\ell^*$, we still have, for any $2 < \ell < \ell^*$,
$$ \epsilon_\ell = \frac{\left( \sum_{k = 1}^{\ell-1} \epsilon_k \right) \cdot \left( \left( \frac{\ell+1}{n} \right) D \left( \frac{c+\ell+1}{n} \right) - \left( \frac{\ell}{n} \right) D \left( \frac{c+\ell}{n} \right) \right)}{\frac{m}{\ell} \sum_{k' = 1}^\ell \left(\frac{k'}{n}\right) D\left( \frac{c+k'}{n} \right) - \left( \frac{\ell+1}{n} \right) D \left( \frac{c+\ell+1}{n} \right)}.$$

All that remains to do is to infer the regret guarantees. Define the quadratic functions,
$$ h_i(x_i) = \begin{cases}
    \sum_{k = 1}^{\ell^*-1} \epsilon_k \cdot \left( \frac{1}{2} x_i^T Q^k x_i + q^{kT}x_i\right) & i \leq m, \\
    \delta_i \cdot \left( \frac{1}{2} x_i^T R^i x_i + r^{iT}x_i\right) & i > m.
\end{cases}$$
They are tangential, and while we can use the regret bound of Proposition \ref{prop:LCCE}, those in Theorem \ref{thm:P-scaled-sets} (after taking $\delta \downarrow0$) will be easier to wield -- and the duality theory functions identically. Theorem \ref{thm:P-scaled-sets} states (by replacing $h$ with $h_i$, extending its definition to all of $X$) that when each firm $i$ uses step sizes $\eta_{it}$, their regret against $\nabla_ih_i(x_i)$ will be bounded,
\begin{equation}\label{eq:int-bound}
\frac{1}{T} \cdot \sum_{t=1}^T \bilin{\nabla_i h_i(x_i^t)}{\nabla_iu_i(x^t)} \leq \frac{1}{T} \cdot \left( \frac{1}{\eta_{iT}} +\frac{1}{\eta_{i1}} + \sum_{t=1}^T \eta_{it} \right) \cdot \poly(G,L,G_h,L_h). \tag{$r_i$}
\end{equation}
Here, Theorem \ref{thm:P-scaled-sets} in the limit $\delta \downarrow 0$ suggests we may take $\poly(G,L,G_h,L_h) = 2M_i + \frac{3G_i^2L_{h_i}}{2}$, where $M_i$ is a bound on the maximum value of $h_i$, $G_i$ is a bound on $\|\nabla_i u_i(x)\|$ and $L_{h_i}$ is a bound on the Lipschitz modulus of $h_i$. 

We now get to derive quick and dirty bounds on these quantities, using the fact that each pair $(Q^k,q^k), (R^i,r^i)$ corresponds to a function considered in (\ref{eq:gen-ansatz}). The absolute bound on each such function is $2$, and their Lipschitz modulus is simply the largest magnitude of their eigenvalues, which happens to be $1$. As for bounds on the utility gradients, the maximum utility from any posted price is $\max_{p \in [0,1]} pD(p)$ as the monopolist's profits for marginal cost $0$, while the minimum profit from any posted price is $-D(0)$, attained when a firm of cost $1$ posts price $0$. Thus, a quick bound on the utility gradients is simply $(n+1)\cdot U$ for some constant determined by the demand function. 

As a consequence, for any firm $i$, 
$$M_i/2 \leq \begin{cases}
    \sum_{k = 1}^{\ell^*-1} \epsilon_k & i \leq m \\
    \delta_i & i > m
\end{cases},$$
and $L_{h_i}$ admits the same bound. Summing over all players, we conclude that
\begin{align*}
    & \ \ \ \ \ \frac{1}{T} \cdot \sum_{i \in N} \sum_{t=1}^T \bilin{\nabla_i h_i(x_i^t)}{\nabla_iu_i(x^t)} \\ & \leq \frac{1}{T} \cdot \sum_{i \leq m} \left( \frac{1}{\eta_{iT}} +\frac{1}{\eta_{i1}} + \sum_{t=1}^T \eta_{it} \right) \cdot \left( \sum_{k = 1}^{\ell^*-1} \epsilon_k \right) \cdot (4+3(n+1)^2 U^2/2) \\
    &  + \frac{1}{T} \cdot \sum_{i > m} \left( \frac{1}{\eta_{iT}} +\frac{1}{\eta_{i1}} + \sum_{t=1}^T \eta_{it} \right) \cdot \delta_i \cdot (4+3(n+1)^2 U^2/2).
\end{align*}

Explicit bounds follow from considering the above as a relaxed equilibrium constraint in a primal problem, leveraging (\ref{eq:pointwise}). In particular, at each time period $t$, we take the expectation of (\ref{eq:pointwise}) when $p_i$ is drawn from $x_i^t$, and average these quantities over time. In this case, the LHS of (\ref{eq:pointwise}) becomes the LHS of (\ref{eq:int-bound})\footnote{This is an LP duality argument!}. The time average convergence guarantees are immediate.

\begin{proposition}\label{prop:time-avg-conv-rate}
    In our Bertrand competition setting, suppose that each firm $i$ updates their prices using step size $\eta_{it}$, such that $\eta_{it} = o(1)$ and $\sum_{t = 1}^T \eta_{it} = o(T)$. Then after $T$ time steps,
    \begin{align*}
        &\frac{1}{T} \cdot \sum_{t = 1}^T \mathbb{E}_{p_i \sim x_i^t}[\mathbb{I}[p \textnormal{ is not a Nash equilibrium}]] \\
        \leq \ & \frac{1}{T} \cdot \sum_{i \leq m} \left( \frac{1}{\eta_{iT}} +\frac{1}{\eta_{i1}} + \sum_{t=1}^T \eta_{it} \right) \cdot \left( \sum_{k = 1}^{\ell^*-1} \epsilon_k \right) \cdot (4+3(n+1)^2 U^2/2) \\
     + \ &  \frac{1}{T} \cdot \sum_{i > m} \left( \frac{1}{\eta_{iT}} +\frac{1}{\eta_{i1}} + \sum_{t=1}^T \eta_{it} \right) \cdot \delta_i \cdot (4+3(n+1)^2 U^2/2),
    \end{align*}
    which goes to $0$ as $T \rightarrow \infty$. Here $\epsilon_k, \delta_i, U$ are defined as above.
\end{proposition}

For example, consider the case of inelastic demand (equivalent to the first price auction) when $m \geq 3$, and all firms use the same step sizes $\eta_t = 1/\sqrt{T}$. In this case, we have $\epsilon_1 = \delta_i = nN$. Moreover, the sums $\sum_{k = 1}^{\ell} \epsilon_k$ satisfy the recurrence relation,
$$ \sum_{k = 1}^{\ell+1} \epsilon_k = \sum_{k = 1}^{\ell} \epsilon_k \cdot \left( 1 + \frac{2}{(m-2)(\ell+2)} \right)$$
for any $\ell \geq 1$. This has solution,
$$ \sum_{k =1}^\ell \epsilon_k = nN \cdot \frac{2 \cdot \prod_{k=0}^{n-2} \left( 3 + \frac{2}{m-2}+k\right)}{(n+1)!} \leq \frac{1}{12} Nn(\ell+2)(\ell+3).$$
For simplicity\footnote{The bounds slightly improve with higher minimum price $c/n$.}, fix the minimum price equal to $0$, in which case the monopoly price simply has $\ell^* = n$, due to inelastic demand. Finally, the step-size related term is bounded above by $3\sqrt{T}$, and the bound on the utilities $U$ equals $2$. Thus, our overall bound is
\begin{align*}
    & \ \ \ \ \ \frac{1}{T} \cdot \sum_{i \in N} \sum_{t=1}^T \bilin{\nabla_i h_i(x_i^t)}{\nabla_iu_i(x^t)} \\ & \lesssim \frac{3}{\sqrt{T}} \cdot m \cdot \frac{1}{12} Nn(n+1)(n+2) \cdot (4+6(n+1)^2) \\
    &  + \frac{3}{\sqrt{T}} \cdot (N-m) \cdot Nn \cdot (4+6(n+1)^2) \\
    & \leq \frac{3}{\sqrt{T}} \cdot m \cdot Nn^5
\end{align*}
for any $n$ sufficiently large (say $>20$). Invoking Proposition \ref{prop:time-avg-conv-rate}, we conclude that after $T$ time steps, the average play $\sigma(p) = \frac{1}{T} \cdot \sum_{t = 1}^T \prod_{j \in N} x_j(p)$ concentrates on Nash equilibria with probability at least $1-\frac{3}{\sqrt{T}} mNn^5$. This assures time-average convergence, and for more general step-sizes in the setting of inelastic demand, the dependence on $T$ is to be modified appropriately.

\subsection{Dual Lyapunov arguments for local stability and last-iterate convergence}

Finally, we shall discuss the arguments necessary to establish last iterate convergence, which actually turns out to be a finite iterate convergence guarantee for the Bertrand competition. We recall that the dual solution in the previous section establishes a guarantee, via taking the expectation of (\ref{eq:pointwise}) at any mixed-strategy profile $x$
$$ \sum_{i \in N} \bilin{\nabla_ih_i(x_i)}{\nabla_i(u_i(x))} \geq \mathbb{E}_{p_i \sim x_i}[p \textnormal{ is not a Nash equilibrium}].$$
The LHS is a bound on the time derivative of $h(x)$ for the continuous projected gradient dynamics of the game (c.f. \cite{ahunbay2024local}, Section 5.1), and since the RHS is positive semidefinite function on $\times_{i \in N} \Delta(A_i)$ which equals $0$ if and only if $x$ is a pure or mixed NE. We infer that $h$ is a Lyapunov function, \emph{with reversed sign convention}. The equilibria\footnote{Other firms have multiple actions they may take.} where all firms of minimum marginal cost post prices $(c+1)/n$ is stable, corresponding to the set of maxima of $h$, whereas the equilibria where at least two such firms post prices $c/n$ are unstable. 

The RHS bounds below the rate of increase of $h$, except this lower bound goes to $0$ as we approach unstable equilibria. This is to be expected, as $\nabla_i u_i(x)$ for each player $i \leq m$ vanishes as $x$ is brought closer an unstable Nash equilibrium of this game. This translates to no lower bound on the \emph{rate} of increase of $h$, and indeed, getting arbitrarily close to the unstable equilibria can arbitrarily increase the time required for the dynamics to escape from it. As a result, we cannot hope for time independent convergence bounds for last-iterate convergence without additional assumptions\footnote{We emphasise that the time-average guarantee in the preceding section does not care about how the dynamics is initialised.}.

That being said, we can still \emph{``infer''} last-iterate convergence. If the initialisation $x^1$ is not a Nash equilibrium, it is easy to observe that unless there are two firms of lowest marginal cost $c/n$ who post this price, the probability that any firm will bid any quantity lower than $c/n$ will tend to $0$. So, following \cite{deng2022nash}, we will consider the case when the minimum marginal price equals $= 0$, the action sets are given $\{1/n,2/n,...,1\}$, and in all Nash equilibria of the game, firms $i \leq m$ post prices $1/n$ with probability $1$.

In this case, we remark that the dual solution of the preceding section still establishes that the only semicoarse equilibria of this fragment of the game is its Nash equilibria -- with the caveat that $\delta_i$ should be fixed zero when $c_i = 1$. However, the Nash equilibria we consider are now ``stable'', in the sense that the set of mixed and pure Nash equilibria of the game coincide exactly with the set of maximisers of $h(x) = \sum_i h_i(x_i)$. The function $h$ is concave and weakly negative, and increases along the trajectories of the game's continuous dynamics. Moreover, in the fragment of the game we consider, this decrease rate is bounded below by a function linear in distance from equilibrium.

We will demonstrate the argument for explicit convergence bounds via the first-price auction. Suppose the step-sizes are all $1/\sqrt{T}$, and that $n$ is large enough for our simpler bound. We know that at time $T$, the time averaged play concentrates on the set of game's Nash equilibria with probability $\geq 1 - \frac{3}{\sqrt{T}}mNn^5$. This implies that in the first $T = \frac{9m^2N^2n^{10}}{\epsilon^2}$ rounds, there must have been at least one time period where the mixed-strategy profile $x^t$ places probability $1-\epsilon$ on all firms $i \leq m$ posting prices $1/n$. In this case, if firm $i$ posts price $1/n$, they gain utility at least $(1-\epsilon)/N$, accounting for potential tie bids from firms with marginal cost $1/n$. On the other hand, if they post any other price, their utility can be at most $\epsilon$ (at price $1$). Therefore, if $(1-\epsilon)/N > \epsilon + 1/2N$, then bidding $1/n$ is strictly utility maximising, and at each time step posting price $1/n$ has utility at least $1/2N$ more than any other price.

We thus want $\epsilon < 1/2(N+1)$, after which last-iterate (and in fact, finite-time) convergence is assured by the strictness (c.f. \cite{MZ19}) of the equilibrium. In the first $T = 36m^2(N+1)^2N^2n^{10}+1$ rounds of gradient ascent, there is then necessarily a point in time where the probability of Nash equilibrium prices exceeds $1-\epsilon$. At this point, the utility gradients of these firms, post projection on the tangent cone to the simplex at $x^T$, have their $1/n$ component at least $1/4N$, and each of these firms post price $1/n$ with probability $\geq \epsilon$. Therefore, in $K$ more rounds where $K$ satisfies roughly $\epsilon \leq \int_T^{T+K} dt \cdot \frac{1}{\sqrt{t}} \cdot \frac{1}{4N}$, we ensure finite-time convergence. For a simple, very lax bound, we may simply set $K = \lceil(4N\epsilon)^2\rceil$.

Adapting the arguments above gives us a slightly more general bound:
\begin{proposition}\label{prop:fin-itr-conv-rate}
    In our Bertrand competition setting as in Proposition \ref{prop:time-avg-conv-rate}, let $\epsilon < D(1/n) / 2(N+D(1/n))$,  let $T$ be the number of time steps required to guarantee an $\epsilon$ bound on average non-equilibrium play, and let $K$ satisfy $\sum_{t = T+1}^{T+K} \eta_{it} \geq 4N\epsilon /D(1/n)$ for any firm $i \leq m$. Then Nash equilibrium convergence is assured within the first $T+K$ rounds.
\end{proposition}

The results of Proposition \ref{prop:time-avg-conv-rate} and Proposition \ref{prop:fin-itr-conv-rate} are reflection of a more general convergence guarantee, following the duality between primal problems providing guarantees for local coarse equilibrium and dual problems corresponding to Lyapunov function fitting, as remarked in \cite{ahunbay2024local} (Section 5). First, if the semicoarse equilibrium of a normal form game is necessarily the convex hull of its Nash equilibria, time-average convergence is immediate.

\begin{theorem}\label{thm:general-lyapunov}
    Let $O \subseteq \times_{i \in N} A_i$ be the set of pure Nash equilibria of a normal form game. Then all semicoarse equilibria of the game assign probability $1$ to the set of outcomes in $O$ if and only if there exists a tangential quadratic function $h(x) = \sum_{i \in N} \frac{1}{2}x_i^T Q_i x_i + q_i^T x_i \equiv \sum_{i \in N} h_i(x_i)$ such that for any $a \in \times_{i \in N} A_i$,
    $$ \sum_{i \in N} \bilin{\nabla_ih(e_a)}{\nabla_iu_i(e_a)} \geq \mathbb{I}[a \textnormal{ is a Nash equilibrium}],$$
    with equality on $O$. In this case, when all players $i$ employ projected gradient ascent with suitably declining step sizes, the time averaged probability of playing a Nash equilibrium is 
    $$ \geq 1- \frac{1}{T} \cdot \sum_{i \in N} \left( \frac{1}{\eta_{iT}} +\frac{1}{\eta_{i1}} + \sum_{t=1}^T \eta_{it} \right) \cdot \left( 2M_i + \frac{3G_i^2L_{h_i}}{2} \right),$$
    where $M_i$, $G_i$, $L_{h_i}$ are respectively bounds on the absolute value of $h_i$, the magnitude of the utility gradient $\|\nabla_iu_i\|$, and the Lipschitz modulus of $h_i$.
\end{theorem}

\begin{proof}
    The first part of the statement is from strong duality between the LP $\max_{\sigma} \sum_{a \notin O} \sigma(a)$ over the semicoarse equilibria of the game, and that of the Lyapunov function estimation problem (15). The second part is via sensitivity analysis on this LP when the time average play forms an $\epsilon$-local coarse correlated equilibrium with respect to the set of tangential quadratic functions on $\times_{i \in N} \Delta(A_i)$, invoking the convergence bound from Theorem \ref{thm:P-scaled-sets}.
\end{proof}

Second, when the convex hull of the vertices of $\times_{i \in N} \Delta(A_i)$ is precisely a complete face of $F$ of $\Delta(A_i)$, and if this face $F$ equals the set of maximisers of $h$, then this face $F$ is strictly attracting. This allows inferring last (in fact, finite) iterate convergence by simply inspecting the form of $h$.

\begin{theorem}
    Let $O_i \subseteq A_i$, and suppose that the set of Nash equilibria of the game coincides with the set of mixed-strategy profiles $x$ which assign probability $1$ to the outcomes in $\times_{i \in N} O_i$. Further suppose that the game admits a tangential quadratic Lyapunov function $h(x) = \sum_{i\in N} \frac{1}{2}x_i^T Q_i x_i + q_i^T x_i$ as in Theorem \ref{thm:general-lyapunov}, such that $\arg\max_{x \in \Delta_i(A_i)} h(x)$ coincides with the set of mixed Nash equilibria of the game. Then there exists a constant $\beta$ such that for any player $i$, for any pure strategy Nash equilibrium $a$, and for any action $a'_i \in A_i\setminus O_i$, $u_i(a) - u_i(a'_i,a_{-i}) \geq \beta$. As a consequence, if all players implement projected gradient ascent with suitably declining step sizes, they reach an equilibrium outcome in finite iterations.
\end{theorem}

\begin{proof}
    Consider any pure strategy Nash equilibrium $a$, and let $i$, $a'_i \in A_i\setminus O_i$ be arbitrary. Then at the characteristic vector $x'$ for the pure strategies $(a'_i,a_{-i})$, we have 
    $$ \sum_{j \in N} \bilin{\nabla_j h_j(x'_j)}{\nabla_j u_j(x')} \geq 1.$$
    On the other hand, by the assumption on the set of maxima of $h$ and its additive separability $h(x) = \sum_{j \in N} h_j(x_j)$, each $h_j$ must be a function whose set of maxima are mixed strategies which assign probability $1$ to actions in $O_j$. As a consequence, $\nabla_j h_j(x'_j) = 0$ for any $j \neq 1$, and we have
    $$ \bilin{\nabla_i h_i(x'_i)}{\nabla_i u_i(x')} \geq 1.$$
    Suppose we have picked the action $a'_i$ with the highest non-best response utility. Then RHS is bounded above by any deviation (scaled by the $a'_i$ component of $\nabla_i h_i(x'_i)$) which moves to a best response. Since all players $j \neq i$ play an action in $O_j$, by the assumption on the set of Nash equilibria of the game, all actions in $O_i$ are best responses. Thus,
    $$ \|\nabla_ih_i(x'_i) \| (u_i(a) - u_i(a'_i,a_{-i})) \geq |\nabla_i h_i(x'_i)_{a'_i}|(u_i(a) - u_i(a'_i,a_{-i})) \geq 1,$$
    and we may pick $\beta = 1/G_h$.

    Now, replicating our previous arguments for the Bertrand competition, let $\epsilon < \beta/2(\bar{U}-\beta)$, where $\bar{U} = \max_{i \in N, a \in A} u_i(a)$. By Theorem \ref{thm:general-lyapunov}, there is a time $T$ such that for a period $t$ within the first $T$ rounds of the game, the mixed strategies $x^t$ place probability $1-\epsilon$ on a pure strategy Nash equilibrium. By choice of $\epsilon$, $u_i(a_i,x^t_{-i}) > u_i(a'_i,x^t_{-i}) + \beta/2$ for any player $i$, and any pair of actions $a_i \in O_i$ and $a'_i \in A_i \setminus O_i$. As a consequence, the set $O_i$ is attracting. By considering projection of the utility gradient, we infer that the probability assigned to $O_i$ increases by an amount $\geq \eta_{it'} \cdot \beta/4$ at each time period until it reaches $1$. Fixing $K$ such that $\sum_{t = T+1}^T \eta_{it} \geq 4\epsilon/\beta$, we conclude that all players will assign probability $1$ on actions in $O_i$ after $T+K$ rounds of projected gradient ascent, reaching a Nash equilibrium. 
\end{proof}

\end{document}